\documentclass[12pt]{article} %%Margins must be 1 inch, top & bottom 1 in%%
\marginparwidth 0pt
\oddsidemargin  0pt
\evensidemargin  0pt
\marginparsep 0pt
\topmargin   -0.5in
\textwidth   6.5in
\textheight  9in
%\usepackage{graphicx}
%changes line spacing
\linespread{1.5}

%this showkeys is what makes equation labels visible
%\usepackage[notref,notcite]{showkeys}
%%%%%%%%%%%%%%%%%%%%%%%%%%
\usepackage{float}
\usepackage{amsmath, amsthm}
\usepackage{amsmath}
\usepackage{amsfonts}
\usepackage{amssymb}
\usepackage{esint}
\usepackage{mathrsfs}
\usepackage{verbatim}
\usepackage{amscd}
\usepackage{amsthm}
\usepackage{enumerate}
\usepackage[pdftex]{graphicx}
\usepackage{wrapfig}
%\usepackage{arev}
%\usepackage{txfonts}

% OWN ENVIRONMENTS, THEOREM STYLES, ETC
\newtheoremstyle{thm}{1.5ex}{1.5ex}{\itshape\rmfamily}{}
{\bfseries\rmfamily}{}{2ex}{}

\newtheoremstyle{rem}{1.3ex}{1.3ex}{\rmfamily}{} 
{\itshape}
{} {1.5ex}{}

\newtheoremstyle{theorem}{1.5ex}{1.5ex}{\itshape\rmfamily}{} {\bfseries\rmfamily}{}{2ex}{}

\newtheoremstyle{def}{1.5ex}{1.5ex}{\slshape\rmfamily}{} {\bfseries\rmfamily}{}{2ex}{}

\newtheoremstyle{rem}{1.3ex}{1.3ex}{\rmfamily}{} {\itshape}
{} {1.5ex}{}

%{\newline\vspace{0.15cm}}
%%%%%%%%%%%%%%%%%%%%%%%%%%%%%%%%%%
%  CHANGE MARGINS ON THE FLY
\newenvironment{changemargin}[2]{%
\begin{list}{}{%
\setlength{\leftmargin}{#1}%
\setlength{\rightmargin}{#2}%
}%
\item[]}
{\end{list}}
%%%%%%%%%%%%%%%%%%%%%%%%%%%%%%%%%%

\theoremstyle{theorem}
\newtheorem{theorem}{Theorem}[section]
\newtheorem{lemma}[theorem]{Lemma}
\newtheorem{nota}[theorem]{Notation}
\newtheorem{prop}[theorem]{Proposition}
\newtheorem{cor}[theorem]{Corollary}

\newtheorem*{Main Theorem}{Main Theorem.}

\theoremstyle{definition}
\newtheorem{defn}[theorem]{Definition}

\newtheorem{remark}[theorem]{Remark}

\numberwithin{equation}{section}

\usepackage{color}
\definecolor{purple}{rgb}{0.65, 0, 1}
\definecolor{orange}{rgb}{1,.5,0}
\definecolor{brown}{rgb}{.9,.73,.26}

\newcommand{\e}{\varepsilon}

\newcommand{\oo}{\circledcirc}

\DeclareSymbolFont{extraup}{U}{zavm}{m}{n}
\DeclareMathSymbol{\varheart}{\mathalpha}{extraup}{86}
\DeclareMathSymbol{\vardiamond}{\mathalpha}{extraup}{87}

%%% END OF OWN ENVIRONMENTS

\begin{document}

\title
{\Large On the Rate of Convergence for Critical Crossing Probabilities}
\author
{ I. Binder$^\star$, L. Chayes$^{\dagger}$ and H. K. Lei$^{\dagger\dagger}$}
\date{}
\maketitle

\vspace{-4mm}
\centerline{${}^\star$\textit{Department of Mathematics, University of Toronto}}
\centerline{${}^{\dagger}$\textit{Department of Mathematics, University of California at Los Angeles}}
\centerline{${}^{\dagger\dagger}$\textit{Department of Mathematics, California Institute of Technology}}

\abstract{For the site percolation model on the triangular lattice and certain generalizations for which Cardy's Formula has been established we acquire a power law estimate for the \emph{rate} of convergence of the crossing probabilities to Cardy's Formula.}

\section{Introduction}
Starting with the work 
\cite{stas_perc}
and continuing in: 
\cite{cnc}, \cite{werner_perc}
\cite{CL}, \cite{BCL1}, \cite{BCL2},
the validity of Cardy's formula
\cite{CARDY_himself}
-- which describes the limit of crossing probabilities
for certain percolation models --
and the subsequent consequence of an SLE$_{6}$
description for the associated limiting \textit{explorer process} has been well established.
The purpose of this work is to provide some preliminary quantitative estimates.  Similar work along these lines has already appeared in 
\cite{rlerw} (also see \cite{fnew})
in the context of the so--called loop erased random walk for both the observable and the process itself.  Here, our attention will be confined to the percolation observable as embodied by Cardy's Formula
for crossing probabilities.

While in the case of the loop erased random walk, estimates on the observable can be reduced to certain Green's function estimates, in the case of percolation the observables are not so readily amenable.
Instead of Green's functions, we shall have to consider the Cauchy integral representation of the complexified\textit{ crossing probability functions}, as first introduced in \cite{stas_perc}.  As demonstrated in \cite{stas_perc} (see also \cite{beffara} and \cite{CL}) these functions converge to conformal maps from the domain under consideration -- where the percolation process takes place -- to the equilateral triangle.  Thus, a combination of some analyticity property and considerations of boundary value should, in principal, yield a rate of convergence.  

However, the associated procedure requires a few domain deformations, each of which must be demonstrated to be ``small'', in a suitable sense.
While such considerations are not important for \emph{very regular} domains 
(which we will not quantify)
in order to consider general domains, a more robust framework for quantification is called for.  For this purpose, we shall introduce a procedure where all portions of the domain are explored via percolation crossing problems. This yields a multi--scale sequence of neighborhoods around each boundary point where
the nature of the boundary irregularities determines the sequence of successive scales. 
Thus, ultimately, we are permitted to measure the distances 
between regions by counting the number of neighborhoods which separate them.    This procedure is akin to the approach of Harris \cite{H} in his study of the critical state at a time when detailed information about the nature of the state was unavailable.

Ultimately we establish a power law estimate (in mesh size) for the rate of convergence in any domain with boundary dimension less than two. (For a precise statement see the Main Theorem below.)  As may or may not be clear to the reader at this point the hard quantifications must be done via percolation estimates -- as is perhaps not surprising since we cannot easily utilize continuum estimates before having reached the continuum in the first place.  The plausibility of a power law estimate then follows from the fact that most \emph{a priori} percolation estimates are of this form.

\bigskip

Finally, we should mention that this problem is also treated in the posting \cite{mnw}, which appeared at approximately the same time as (the preliminary version of) the present work.  The estimates in \cite{mnw} are more quantitative, however, the class of domains treated therein are restricted.  In the present work we make  no efforts towards \emph{precise} quantification, but we shall treat the problem for essentially arbitrary domains.  It is remarked that convergence to Cardy's Formula in a general class of domains is, most likely, an essential ingredient for acquiring a rate of convergence to SLE$_6$ for the percolation interfaces.
 
\section{Preliminaries}

\subsection{The Models Under Consideration}
We will be considering critical percolation models in the plane.  However in contrast to
the generality professed in \cite{BCL1}, \cite{BCL2} -- where, essentially, ``all'' that was required was a proof of Cardy's formula, here the mechanism of how Cardy's formula is established will come into play.  Thus, we must restrict attention to the triangular site percolation problem considered in \cite{stas_perc}
and the generalization provided in \cite{CL}.  These models can all be expressed 
in terms of random colorings (and sometimes double colorings) of hexagons.  As is traditional, the 
competing colors are designated by blue and yellow.  
We remind the reader that criticality implies that there are scale independent bounds in $(0,1)$ for crossing probabilities -- in either color -- between non--adjacent sides of regular polygons.  In this work, for the most part, we will utilize crossings in rectangles with particular aspect ratios.

%that there are non--trivial (bounded away from 0 and 1) bounds on crossings of squares in both blue and yellow and that via the so--called Russo--Seymour--Welsh estimates (see e.g., \cite{grimmett}, Chapter 11 and references therein) these generate scale--invariant bounds on crossings of longer rectangles.

\subsection{The Observable}
Consider a fixed domain $\Omega \subset \mathbb C$ that is a conformal rectangle with marked points (or prime ends)
$A$, $B$, $C$ and $D$
which, as written, are in cyclic order.  We let $\Omega_n$ denote the lattice approximation at scale
$\varepsilon = n^{-1}$ to the domain $\Omega$.  
The details of the construction of $\Omega_n$ 
-- especially concerning boundary values and explorer processes --
are somewhat tedious and have been described e.g.,  in
\cite{BCL2} \S 3 $\&$ \S4 and \cite{BCL1} \S 4.2.
For present purposes, it is sufficient to know that 
$\Omega_n$ consists of the maximal union of lattice hexagons
-- of radius $1/n$ -- whose closures lie entirely inside $\Omega$; we sometimes refer to this as the \emph{canonical approximation}.  (We shall also have occasions later to use \textit{other} discrete interior approximating domains which are a subset of $\Omega_n$.)
Moreover, boundary arcs can be appropriately colored 
and lattice points $A_n$ -- $D_n$ can be selected.  We consider \textit{percolation} problems in 
$\Omega_n$.  

The pertinent object to consider is a crossing probability: performing percolation on $\Omega_n$, we ask for the crossing probability -- say in yellow -- from $(A_n, B_n)$ to $(C_n, D_n)$.  Here and throughout this work, a colored \emph{crossing} necessarily implies the existence of a self--avoiding, connected path of the designated color with endpoints in the specified sets and/or that satisfies specific separation criteria.  
Below we list various facts, definitions and notations related to the observable that will be used throughout this work.
In some of what follows, we temporarily neglect the marked point $A_n$ and regard 
$\Omega_n$ with the three remaining marked points as a conformal triangle.  

\begin{itemize}

\item[$\circ$] Let us recall the functions introduced in \cite{stas_perc}, here denoted by $S_B, S_C, S_D$ where e.g., $S_D(z)$ with $z \in \Omega_n$ a lattice point, is the probability of a yellow crossing from $(C_n, D_n)$ to $(D_n, B_n)$ separating $z$ from $(B_n, C_n)$.  Note that it is implicitly understood that the $S_B, S_C, S_D$--functions are defined on the discrete level; to avoid clutter, we suppress the $n$ index for these functions.  Moreover, we will denote the underlying \emph{events} associated to these functions by $\mathbb S_B, \mathbb S_C, \mathbb S_D$, respectively.

\item[$\circ$] It is the case that the functions $S_B, S_C, S_D$ are invariant  under exchange of color (see \cite{stas_perc} and \cite{CL}).
While it is not essential to the arguments in this work, we sometimes may take  liberties regarding whether we are considering a yellow or blue version of these functions.  

\item[$\circ$] It is also easy to see that e.g., $S_B$ has boundary value 0 on $(C_n, D_n)$ and 1 at the point $B_n$.  Moreover, the complexified function $S_n = S_B + \tau S_C + \tau^2 S_D$, with $\tau = e^{2\pi i/3}$, converges to the conformal map to the equilateral triangle with vertices at $1, \tau, \tau^2$, which we denote by $\mathbb T$. (See \cite{stas_perc}, \cite{beffara}, \cite{BCL2}.) 

\item[$\circ$]For finite $n$, we shall refer to the object $S_{n}(z)$ as the \textit{Carleson--Cardy--Smirnov} function and sometimes abbreviated CCS--function.

\item[$\circ$] We will use 
$H_n: \Omega_n \rightarrow \mathbb T$
to denote
the unique conformal map which sends $(B_n, C_n, D_n)$ to $(1, \tau, \tau^2)$.  Similarly, $H : \Omega \rightarrow \mathbb T$
is the corresponding conformal map of the continuum domain.

\item[$\circ$] With $A_n$ reinstated, we will denote by $\mathscr C_n$ the crossing probability of the conformal rectangle 
$\Omega_n$ and $\mathscr C_{\infty}$ its limit in the domain $\Omega$; i.e., Cardy's Formula in the limiting domain.  

\item[$\circ$] 

Since $S_C(A_n) \equiv 0$, 
\[S_n(A_n) = S_B(A_n) + \tau^2 S_D(A_n) = [S_B(A_n) - \frac{1}{2} S_D(A_n)] - i \frac{\sqrt 3}{2} S_D(A_n).\]
Now we recall (or observe) that $\mathscr C_n$ can be realized as $S_D(A_n)$ and so from the previous display, $\mathscr C_n = -\frac{2}{\sqrt 3} \cdot \mbox{Im}[S_n(A_n)]$.  Since it is already known that $S_n$ converges to $H$ (see \cite{stas_perc}, \cite{beffara}, \cite{BCL2}) it is also the case that $\mathscr C_\infty = -\frac{2}{\sqrt 3}\cdot \mbox{Im}[H(A)]$.  Therefore to establish a \emph{rate} of convergence of $\mathscr C_n$ to $\mathscr C_\infty$, it is sufficient to show that there is some $\psi > 0$ such that 
\[ |S_n(A_n) - H(A)| \leq C_\psi \cdot n^{-\psi},\]
for some $\psi > 0$ and $C_\psi < \infty$ which may depend on the domain $\Omega$.

\item[$\circ$] The functions $S_n$ are not \emph{discrete analytic} but the associated contour integrals vanish with lattice spacing (see \cite{stas_perc}, \cite{beffara} and \cite{CL}.) In particular, this is exhibited
by the fact that the contour integral around some closed discrete contour 
$\Gamma_n$ behaves like the length of $\Gamma_n$ times $n$ to some negative power.  Also, the functions $S_n$ are H\"older continuous with estimates which are uniform for large $n$.  For details we refer the reader to Definition \ref{nhf}.

\end{itemize}

Our goal in this work is to acquire the following theorem on the rate of convergence of the finite volume crossing probability, 
$\mathscr C_n$, to its limiting value:

\begin{Main Theorem}
\label{MT}
Let $\Omega$ be a domain and $\Omega_n$ its canonical discretization.  Consider the site percolation model or the models introduced in \cite{CL} on the domain $\Omega_n$.  In the case of the latter we also impose the assumption that the boundary Minkowski dimension is less than 2 (in the former, this is not necessary).  Let $\mathscr C_n$ be described as before.  Then there exists some $\psi(\Omega) > 0$ (which may depend on the domain $\Omega$) such that $\mathscr C_n$ converges to its limit with the estimate
$$
|\mathscr C_n - \mathscr C_{\infty}| \lesssim n^{-\psi},
$$
provided $n \geq n_{0}(\Omega)$ is sufficiently large and the symbol
$\lesssim$ is described with precision in 
Notation \ref{AWZ}
below.  
\end{Main Theorem}

\begin{nota}
\label{AWZ}
In the above and throughout this work, we will be describing 
asymptotic behaviors of various quantities as a function of small or large parameters (usually $n$ in one form or another).  The
relation $X\lesssim Y$ relating two functions $X$ and $Y$ of large or small parameters (below denoted by $M$ and $m$, respectively) means that there exists a constant $c \in (0, \infty)$ independent of $m$ and $M$ such that for all $M$ sufficiently large and/or $m$ sufficiently small $X(m, M) \leq c \cdot Y(m, M)$. 
\end{nota}

\begin{remark}
The restrictions on the boundary Minkowski dimension for the models in \cite{CL} is not explicitly important in this work and will only be implicitly assumed as it was needed in order to establish convergence to Cardy's Formula.  
\end{remark}

\begin{figure}[H]
\vspace{.15 in}
\centering
\includegraphics[scale=.25]{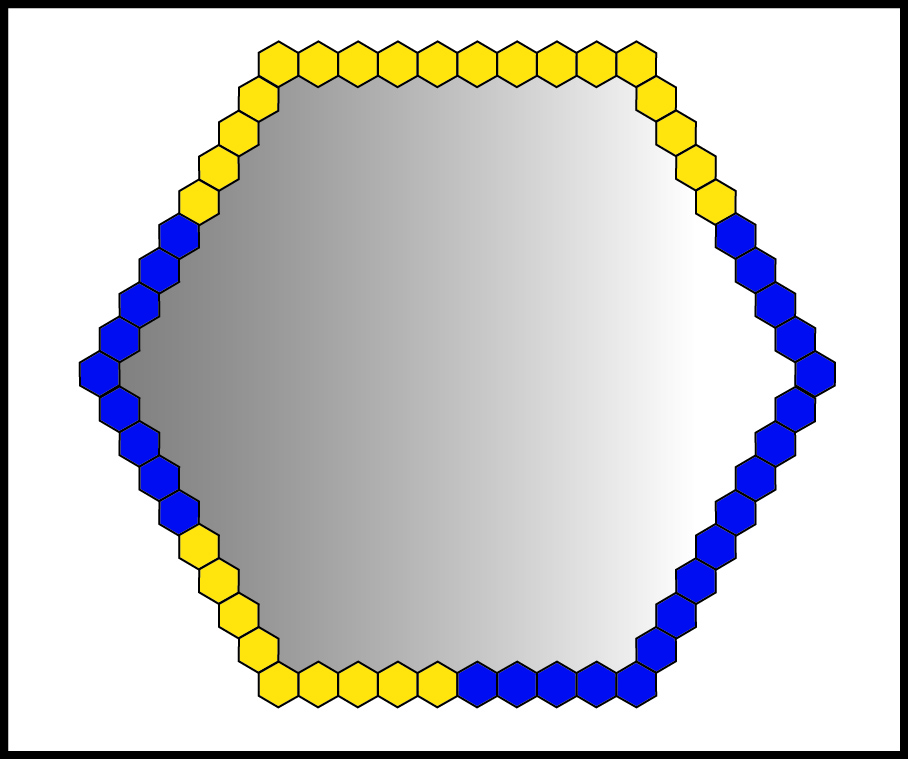}
\begin{changemargin}{.55in}{.55in}
%\flushright
\caption{{\footnotesize In certain conformal rectangles, the crossing probabilities at the discrete level will be identically $\frac{1}{2}$ independent of scale.}}
\label{DHU}
\end{changemargin}
\vspace{-.3 in}
\end{figure}

\begin{remark}
It would seem that complementary lower bounds of the sort presented in the Main Theorem are actually \textit{not} possible.  For example, in the triangular site model, the crossing probabilities for particular shapes are 
identically $\tfrac{1}{2}$ independently of $n$, as is demonstrated in Figure \ref{DHU}.
\end{remark}

We end this preliminary section with some notations and conventions: (i) the notation $\mbox{dist}(\cdot, \cdot)$ denotes the usual Euclidean distance while the notation $d_{\rm sup}( \cdot, \cdot)$ denotes the sup--norm distance between curves; (ii) we will make use of both \emph{macroscopic} and \emph{microscopic} units, with the former corresponding to an $\e \rightarrow 0$ approximation to shapes of fixed scale and the latter corresponding to $n \rightarrow \infty$, wherein distances are measured relative to the size of a hexagon.  So, even though analytical quantities are naturally expressed in macroscopic units, it is at times convenient to use microscopic units when performing percolation constructions; (iii) we will use $a_1, a_2, \dots $ to number the powers of $n$ appearing in the \emph{statements} of lemmas, theorems, etc.  Thus, throughout, $n = \e^{-1}$.  Constants used in the course of a proof are considered temporary and duly forgotten after the Halmos box.
  
\section{Proof of the Main Theorem}

Our strategy for the proof of the Main Theorem is as follows: recall that $H_n$ is the conformal map from $\Omega_n$ to $\mathbb T$ (the ``standard'' equilateral triangle) so that $B_n, C_n, D_n$ map to the three corresponding vertices, where it is reiterated that $\mathscr C_n$ corresponds to a boundary value of $S_n$.  Thus it is enough to uniformly estimate the difference between $S_n$ and $H_n$ and then the difference between $H_n$ and $H$. 

Foremost, the discrete domain may itself be a bit too rough so we will actually be working with an approximation to $\Omega_n$ which will be denoted by $\Omega_n^\square$ (see Proposition \ref{pr}).  Now, on $\Omega_n^\square$, we have the function 
$S_n^\square$ associated with the corresponding percolation problem \textit{on this domain} and, similarly, the conformal map $H_n^\square: \Omega_n^\square \to \mathbb T$.  Via careful consideration of Euclidean distances and distortion under the conformal map, we will be able to show that both $|S_n(A_n) - S_n^\square(A_n^\square)|$ (for an appropriately chosen $A_n^\square \in \partial \Omega_n^\square$) and $|H(A) - H_n^\square(A_n^\square)|$ are suitably small (see Theorem \ref{s1}).  Thus we are reduced to proving a power law estimate for the domain $\Omega_n^\square$. 

Towards this goal, we introduce
the \emph{Cauchy--integral extension} of $S_n^{\square}$, which we denote by
$F_n^{\square}$, so that
$$
F_n^{\square}(z) := \frac{1}{2\pi i} \oint_{\partial \Omega_n^{\square}} \frac{S_n^\square(\zeta)}{\zeta - z}~d\zeta.
$$

Now by using the H\"older continuity properties and the \emph{approximate} discrete analyticity properties of the $S_n$'s, we can show that, 
barring the immediate vicinity of the boundary, 
the difference between $F_n^\square$ and $S_n^\square$ is power law small (see Lemma \ref{ci}).  It follows then that in an even smaller domain,
$\Omega_n^\vardiamond$, which can be realized as the inverse image of a uniformly shrunken version of $\mathbb T$, 
the function $F_n^\square$ is in fact \emph{conformal} and thus it is uniformly close to $H_n^\vardiamond$, which is \emph{the} conformal map from 
 $\Omega_n^\vardiamond$ to $\mathbb T$ (see Lemma \ref{cdmd}).  

Now for $z \in \Omega_n^\square$ the  dichotomy we have introduced  is not atypical:  on the one hand, $F_n^\square(z)$ is manifestly analytic but does not necessarily embody the function $S_n^\square$ of current interest.  On the other hand, $S_n^\square(z)$ has the desired boundary values -- at least on $\partial \Omega_n^\square$ -- but is, ostensibly, lacking in analyticity properties.  Already the approximate discrete analyticity properties permit us to compare $F_n^\square$ to $S_n^\square$ in $\Omega_n^\vardiamond$.  In order to return to the domain $\Omega_n^\square$, we require some control on the ``distance'' between $\Omega_n^\vardiamond$ and $\Omega_n^\square$ (not to mention a suitable choice of some point $A_n^\vardiamond \in \partial \Omega_n^\vardiamond$ as an approximation to $A$).  It is indeed the case that \emph{if} $\Omega_n^\vardiamond$ is close to $\Omega_n^\square$ in the Hausdorff distance, then the proof can be quickly completed by using distortion estimates and/or H\"older continuity of the $S$ function.  However, such information translates into an estimate on the continuity properties of the \emph{inverse} of $F_n^\square$, which is not \emph{a priori} accessible (and, strictly speaking, not always true).   

Further thought reveals that we in fact require the domain $\Omega_n^\vardiamond$ to be close to $\Omega_n^\square$ in both the conformal sense and in the sense of ``percolation'' -- which can be understood as being measured via local crossing probabilities.  (In principle, given sufficiently strong control on boundary distortion of the relevant conformal maps, these notions should be directly equivalent; however, we do not explicitly address this question here, as this would entail overly detailed consideration of domain regularity.)  

While with a \textit{deliberate} choice of a point on the boundary corresponding to $A$ we may be able to ensure that either one or the other of the two criteria can be satisfied, it is not readily demonstrable that \textit{both} can be simultaneously satisfied without some additional detailed considerations; it is for this reason that we will introduce and utilize the notion of \emph{Harris systems} (see Theorem \ref{hreg}) in order to quantify the distances between $\Omega_n^\vardiamond$ and $\Omega_n^\square$.  

The Harris systems are collections of concentric topological rectangles (portions of annuli) of various scales centered on points of $\partial \Omega_n^\square$ and heading towards some ``central region'' of $\Omega_n^\square$; they are constructed so that uniform estimates are available for both the traversing of each annular portion and the existence of an obstructing ``circuit'' (in dual colors).  This leads to a natural choice of $A_n^\vardiamond$: it is a point on $\partial \Omega_n^\vardiamond$ which is in the Harris system of $A_n^\square$ (i.e., a point in one of the ``rings'').  Consequently, 
the distance between $A_n^\square$ and $A_n^\vardiamond$ -- and other such pairs as well -- can be measured vis a counting of Harris segments (see Lemma \ref{dp}).  

Specifically, we will make use of another auxiliary point, $A_n^\diamondsuit$, which is also in the Harris system centered at $A_n^\square$, chosen so that it is \emph{inside} the domain $\Omega_n^\vardiamond$.  The task of providing an estimate for $|S_n^\square(A_n^\diamondsuit) - S_n^\square(A_n^\square)|$ (and thus also $|F_n^\square(A_n^\diamondsuit) - S_n^\square(A_n^\square)|$) is immediately accomplished by the existence of suitably many Harris segments surrounding both $A_n^\square$ and $A_n^\diamondsuit$ (see Proposition \ref{s2}).   Also, considering $n$ to be \emph{fixed}, the domain $\Omega_n^\square$ can be approximated at scales $N^{-1} \ll n^{-1}$ and the estimates derived from the Harris systems remain \emph{uniform} in $N$ as $N$ tends to infinity and thus also immediately imply an estimate for $|H_n^\square(A_n^\square) - H_n^\square(A_n^\diamondsuit)|$ (see Proposition \ref{dia2}).

At this point what remains to be established is an estimate relating the conformal map $H_n^\vardiamond$, which is defined by percolation at scale $n$ via $F_n^\square$, and $H_n^\square$, the ``original'' conformal map.  It is here that we shall invoke a Marchenko theorem for the triangle $\mathbb T$ (see Lemma \ref{distort}): indeed, again considering $\Omega_n^\square$ to be a fixed domain and performing percolation at scales $N^{-1} \ll n^{-1}$, we have by convergence to Cardy's Formula that $S_{n, N}^\square(s) \rightarrow H_n^\square(s)$ as $N \rightarrow \infty$, for all $s \in \partial \Omega_n^\vardiamond$.  The inherent \emph{scale invariance} of the Harris systems permits us to establish that in fact $S_{n, N}^\square(s)$ is close to $\partial \mathbb T$, \emph{uniformly} in $N$ (see Lemma \ref{s3}) and thus, $H_n^\square(\partial \Omega_n^\vardiamond)$ is close to $\partial \mathbb T$ (in fact in the \emph{supremum norm}).  Armed with this input, the relevant Marchenko theorem applied at the point $A_n^\diamondsuit$ immediately gives that $H_n^\square(A_n^\diamondsuit) - H_n^\vardiamond(A_n^\diamondsuit)$ is suitably small.

The technical components relating to the Cauchy--integral estimate and the construction of the Harris systems are relegated to Section \ref{KBB} and 
Section \ref{SBT}, respectively.  As for the rest, we will divide the proof of the main theorem into three subsections, corresponding to:
\begin{itemize}
\item[(i)] the regularization of the boundary (introduction of $\Omega_n^\square$) and showing that crossing probabilities are close for the domains $\Omega_n^\square, \Omega_n~\&~\Omega$;
\item[(ii)] the construction of the Cauchy--integral $F_n^\square$ and of the domain $\Omega_n^\vardiamond$;
\item[(iii)] the establishment of the remaining estimates needed to show that the domains $\Omega_n^\vardiamond$ and $\Omega_n^\square$ are suitably close, by using the Harris systems of neighborhoods.
\end{itemize}

\subsection{Regularization of Boundary Length}
We now introduce the domain $\Omega_n^\square \subseteq \Omega_n$.  The primary purpose of this domain is to reduce the boundary length of the domain
that need be considered; in particular, this will be pivotal when estimating the discrete analyticity properties of $S_n^\square$ in the next section.

\begin{defn}
Let $1 > a_1 > 0$ and consider a square grid
whose elements are squares of (approximately) microscopic size $n^{a_{1}}$
and let $\Omega^\square_{n}$ denote the union of all (hexagons within the) squares of this grid that are entirely within the original domain 
$\Omega$.  
\end{defn}

We have:
\begin{prop}\label{pr}
Let $\Omega \subseteq \mathbb C$ be a domain with boundary Minkowski dimension at most $1 + \alpha'$ with $\alpha' \in [0, 1]$, which we write as $M(\partial \Omega) < 1 + \alpha$ for any $\alpha > \alpha'$.  Then the domain $\Omega_n^\square$ satisfies $\Omega_n^\square \subseteq \Omega_n$ and 
\[|\partial \Omega_n^\square| \lesssim n^{\alpha(1-a_1)} .\]
\end{prop}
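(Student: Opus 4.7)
The plan is to use the Minkowski dimension hypothesis to control how many grid squares can touch $\partial\Omega$, then multiply by the perimeter contribution of each such square. The inclusion $\Omega_n^{\square} \subseteq \Omega_n$ is essentially immediate from the definitions: any hexagon appearing in $\Omega_n^{\square}$ lies inside a grid square which, by construction, lies entirely in $\Omega$; in particular the closure of this hexagon is in $\Omega$, which is exactly the defining criterion for the canonical approximation $\Omega_n$.

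For the boundary length estimate, I would pass to macroscopic units so that each grid square has side length $\delta := n^{a_1-1}$. The set $\partial \Omega_n^{\square}$ is a union of edges of grid squares, and each such edge separates a grid square contained in $\Omega_n^{\square}$ (hence entirely in $\Omega$) from an adjacent grid square that is \emph{not} entirely in $\Omega$. The latter either lies outside $\Omega$ -- in which case $\partial\Omega$ must meet the boundary of the constituent square next to it, within a region of diameter $O(\delta)$ -- or it itself meets $\partial\Omega$ directly. In either case every edge of $\partial\Omega_n^{\square}$ lies within macroscopic distance $O(\delta)$ of $\partial\Omega$.

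It then suffices to count how many grid squares are within distance $O(\delta)$ of $\partial\Omega$. By the hypothesis $M(\partial\Omega) < 1 + \alpha$ and the definition of upper Minkowski dimension, the $\delta$--covering number of $\partial\Omega$ is bounded by $\delta^{-(1+\alpha)}$ for all sufficiently small $\delta$; since grid squares of side $\delta$ intersect any given ball of radius $\delta$ only a uniformly bounded number of times, the number of boundary--adjacent grid squares is likewise $\lesssim \delta^{-(1+\alpha)}$. Each contributes at most four edges of macroscopic length $\delta$ to $\partial\Omega_n^{\square}$, so
\[
|\partial\Omega_n^{\square}| \;\lesssim\; \delta \cdot \delta^{-(1+\alpha)} \;=\; \delta^{-\alpha} \;=\; n^{\alpha(1-a_1)},
\]
which is the desired bound.

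There is no serious obstacle here. The only place where a small amount of care is needed is in the passage between the upper Minkowski dimension and the covering number estimate, which is why the hypothesis is stated with the strict inequality $M(\partial\Omega) < 1 + \alpha$: this ensures a genuine polynomial bound on $N_\delta(\partial\Omega)$, rather than merely one with subpolynomial error. All implicit constants in $\lesssim$ come from the fixed grid geometry and from the Minkowski bound and are uniform in $n$ once $n$ is large enough (equivalently $\delta$ is small enough).
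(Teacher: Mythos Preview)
Your proof is correct and takes essentially the same approach as the paper's own proof: bound the number of grid squares meeting (or adjacent to) $\partial\Omega$ via the Minkowski dimension covering estimate, then multiply by the macroscopic side length $\delta = n^{-(1-a_1)}$ of a single square. The paper's version is a terse one-liner carrying out exactly this computation, whereas you have supplied more justification for why boundary edges of $\Omega_n^\square$ must lie near $\partial\Omega$ and made the constants explicit; the underlying argument is identical.
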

\begin{proof}
Since $M(\partial \Omega) < 1 + \alpha$ we have (for all $n$ sufficiently large)  that the number of boxes required to cover $\partial \Omega$ is essentially bounded from above by $(n^{1-a_1})^{1+\alpha}$ which is then multiplied by $\frac{1}{n^{(1-a_1)}}$, the size of the box (in macroscopic units).  The fact that $\Omega_n^\square \subseteq \Omega_n$ is self--evident. 
\end{proof}
Next we will choose $A_n^\square, B_n^\square, C_n^\square, D_n^\square \in \partial \Omega_n^\square$ by some procedure to be outlined below and denote by $S_n^\square$ the corresponding CCS--function.  Particularly, this can be done so that the crossing probabilities do not change much:

\begin{theorem}
\label{s1}
Let $\Omega_{n}^\square \subseteq \Omega_n$ with marked boundary points $(A_{n}, \dots , D_{n})$ be as described, so particularly $\partial \Omega_n^\square$ is of distance at most $n^{1-a_1}$ from $\partial \Omega_n$.
Then there is an
$A_{n}^{\square}$ as well as $B_{n}^{\square}$, $C_{n}^{\square}$
and $D_{n}^{\square}$
such that the corresponding
$S_{n}^{\square}$ satisfies, for some 
$a_{2} > 0$ and for all $n$ sufficiently large,
$$
|S_n(A_n) - S_n^\square(A_n^{\square})| \lesssim n^{-a_2}
$$
and, moreover,
\[ |H(A) - H_n^\square(A_n^\square)| \lesssim n^{-a_2}.\]
\end{theorem}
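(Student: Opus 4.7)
The plan is to handle the two estimates separately, using percolation tools for the discrete comparison and a classical conformal-map perturbation argument for the continuum one. First I would fix the marked points: given $A_n, B_n, C_n, D_n \in \partial \Omega_n$, define $A_n^\square, \ldots, D_n^\square \in \partial \Omega_n^\square$ as ``shadow'' points obtained by projecting inward (say along the normal, or by following an explorer to the first hit of $\partial \Omega_n^\square$), chosen so that the four induced arcs on $\partial \Omega_n^\square$ track the corresponding arcs of $\partial \Omega_n$ in cyclic order. Since $\Omega_n^\square \subseteq \Omega_n$ and $\partial \Omega_n^\square$ lies within one grid--square of $\partial \Omega_n$, each pair $A_n, A_n^\square$ (and the other three) are separated by a macroscopic distance at most $n^{-(1-a_1)}$, while by Proposition \ref{pr} the total length of $\partial \Omega_n^\square$ is at most $n^{\alpha(1-a_1)}$.

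To bound $|S_n(A_n) - S_n^\square(A_n^\square)|$, I would couple the percolation configurations on $\Omega_n^\square$ and $\Omega_n$ and directly compare the events $\mathbb S_B, \mathbb S_C, \mathbb S_D$ across the two domains, representing $S_n$ and $S_n^\square$ as the corresponding probabilities. The two events can disagree only through configurations in the thin annular region $\Omega_n \setminus \Omega_n^\square$: either a yellow separating crossing exists in one domain but cannot be prolonged to the other, or the marked point shift from $A_n$ to $A_n^\square$ flips the separation status. Standard RSW and half--plane one--arm estimates, applied box--by--box to the boundary strip of macroscopic width $n^{-(1-a_1)}$, bound the probability of such an obstructing configuration by the product of the number of boundary boxes (a power of $n^{(1-a_1)\alpha}$) and a power--law--small one--arm probability at scale $n^{-(1-a_1)}$. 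Choosing $a_1$ small and $\alpha$ slightly above the boundary dimension gives a genuine power--law gain; the residual shift between $A_n$ and $A_n^\square$ is absorbed by the uniform H\"older continuity of $S_n$ from Definition \ref{nhf}, yielding the claimed $n^{-a_2}$ bound.

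For the continuum estimate $|H(A) - H_n^\square(A_n^\square)|$, I would use that $\Omega_n^\square \subset \Omega$ with Hausdorff boundary distance $\lesssim n^{-(1-a_1)}$. Considering the composition $G := H \circ (H_n^\square)^{-1} : \mathbb T \to H(\Omega_n^\square)$, which is a conformal map of $\mathbb T$ onto a close subdomain of $\mathbb T$, a quantitative Carath\'eodory / boundary distortion estimate (made possible by the Minkowski dimension hypothesis, which forces H\"older continuity of $H$ and its inverse up to the boundary) gives that $G$ is uniformly close to the identity with a power--law rate in the Hausdorff distance. Evaluating at $H_n^\square(A_n^\square)$ yields $|H(A_n^\square) - H_n^\square(A_n^\square)| \lesssim n^{-\gamma(1-a_1)}$ for some $\gamma > 0$; boundary H\"older continuity of $H$ and the estimate $|A - A_n^\square| \lesssim n^{-(1-a_1)}$ then close the argument, perhaps after shrinking $a_2$.

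The main obstacle I anticipate is the discrete step, specifically ensuring that the separating requirement built into $\mathbb S_D$ (and its rotations) survives the domain perturbation uniformly over all boundary boxes. Unlike a plain crossing probability, these events are not monotone in the domain, so the coupling must be supplemented by RSW arguments that simultaneously guarantee both the existence of a yellow connection through the thin strip and the absence of a blue separating circuit disrupting the target arc labeling. Getting a single clean power--law bound requires balancing the total boundary length $n^{\alpha(1-a_1)}$ against the one--arm gain $n^{-\beta(1-a_1)}$ across the strip, and the residual choice of $a_1$ will also have to be compatible with the later estimates in the proof of the Main Theorem.
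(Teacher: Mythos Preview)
Your proposal has two genuine gaps, both rooted in the implicit assumption that the new marked points can be taken Euclidean-close to the old ones.

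\textbf{The fjord problem.} You assume that projecting $A_n$ inward yields an $A_n^\square$ with $|A_n - A_n^\square| \lesssim n^{-(1-a_1)}$, and later that $|A - A_n^\square| \lesssim n^{-(1-a_1)}$. This is false in general: $A_n$ may lie deep inside a narrow fjord of $\partial\Omega$ whose mouth is arbitrarily far (in Euclidean terms) from $A_n$, so that no point of $\partial\Omega_n^\square$ is anywhere nearby. The paper flags exactly this in the Remark following the theorem. Its fix is to locate, via a uniformization argument, points $A_n^p \in [A_n,B_n]$ and $A_n^g \in [A_n,D_n]$ that are within $n^{-(1-a_1)}$ of each other and of $\partial\Omega_n^\square$, and to place $A_n^\square$ at that juncture (the ``mouth of the fjord''). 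The comparison $|S_n(A_n)-S_n^\square(A_n^\square)|$ is then handled not by H\"older continuity in Euclidean distance, but by RSW circuits in the $\sim\log n$ annuli between scales $n^{a_1}$ and $n^{a_3}$ around this juncture, which seal $A_n$ and $A_n^\square$ together with probability $1-n^{-a_4}$, followed by a lowest-crossing continuation argument. Your H\"older step and your box-by-box strip argument both presuppose the Euclidean closeness that is unavailable here.

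\textbf{The continuum estimate.} Your route to $|H(A)-H_n^\square(A_n^\square)|$ via Carath\'eodory-type distortion requires H\"older continuity of $H$ and $H^{-1}$ up to $\partial\Omega$. A Minkowski-dimension bound does not give this (a slit domain has one-dimensional boundary but the Riemann map is not H\"older at the tip), and for the triangular site model the paper imposes \emph{no} dimension hypothesis whatsoever. Even granting boundary H\"older regularity, you would still need $|A-A_n^\square|$ small, which fails by the fjord issue. The paper's argument is completely different and sidesteps all boundary regularity: it treats $\Omega_n$ and $\Omega_n^\square$ as fixed continuum domains and reruns the \emph{same} percolation comparison at scales $N^{-1}\ll n^{-1}$. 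Because every ingredient (RSW circuits, continuation of crossings) involves only annuli and rectangles of bounded aspect ratio at scales $\ge n^{-1}$, the estimate $|S_{n,N}(A_{n,N})-S_{n,N}^\square(A_{n,N}^\square)|\lesssim n^{-a_2}$ is uniform in $N$; letting $N\to\infty$ and invoking convergence to Cardy's formula turns this directly into $|H_n(A_n)-H_n^\square(A_n^\square)|\lesssim n^{-a_2}$, and a repetition of the same idea handles $|H(A)-H_n(A_n)|$. This ``continuum percolation'' trick is the key device you are missing.
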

\begin{remark}
In the case that the separation
between
$A_{n}$ and $\partial\Omega_{n}$
is the order of 
$n^{a_{1}}$ -- as is usually imagined --
facets of Theorem \ref{s1} are essentially trivial.  
However, the reader is reminded that 
$A_{n}$ could be deep inside a ``fjord'' and well separated from 
$\partial\Omega_{n}^{\square}$.  In this language,
the forthcoming arguments will demonstrate that, notwithstanding,
an $A_{n}^{\square}$ may be chosen near the mouth of the fjord
for which the above estimates hold.
\end{remark}

\noindent
\textit{Proof of Theorem \ref{s1}.}
For $\eta > 0$ and a subset $K \subset \overline{\Omega}$ we will denote by  $N_\eta(K)$ the $\eta$--neighborhood of $K$
intersected with $\Omega$.  Now let us first choose $\eta$ sufficiently small so that 
\[ \left[[B, C, D] \cup N_{4\eta}(B) \cup N_{4\eta}(D) \cup N_{4\eta}(C) \right] \cap N_{4\eta}(A) = \emptyset,\]
where $[B, C, D]$ denotes the closed boundary segment containing the prime ends $B, C, D$.

Next we assume that $n > n_{\circ}$ where  $n_{\circ}$
is large enough so that
for all $n > n_{\circ}$, 
$A_{n}\in N_{\eta}(A)$ , \dots, 
$D_{n}\in N_{\eta}(D)$.  Moreover, 
$\Omega_{n}^{\square}\cap N_{\eta}(A) \neq \emptyset$
and similarly for $\Omega_{n}^{\square}\cap N_{\eta}(B), \dots , 
\Omega_{n}^{\square}\cap N_{\eta}(D)$. Then, since
$$
0 < \text{dist}(([A,B]\setminus N_{\eta}(A)), ([D,A]\setminus N_{\eta}(A)))
$$
it is assumed that for $n > n_{\circ}$, the above is very large compared with $n^{-(1-a_{1})}$ and similarly for the other three marked points.  Finally, consider the 
uniformization map $\varphi: \mathbb D \rightarrow \Omega$.  Then taking $n_\circ$ larger if necessary, we assert that for all $n > n_{\circ}$, the distance (in the unit disc) 
between
$\varphi^{-1}(N_{\eta}(A))$ and
$[\varphi^{-1}(N_{4\eta}(A))]^{\text{c}}$
satisfies
\begin{equation}\label{saway}
\text{dist}[\varphi^{-1}(N_{\eta}(A)), 
[\varphi^{-1}(N_{4\eta}(A))]^{\text{c}}]
\gg
n^{-\frac{1}{2}}.
\end{equation}

We first state:

\noindent {\bf Claim.} For $n > n_{\circ}$,
\[ \mbox{dist}(N_{\eta}(A), [B_n, C_n, D_n]) > 0.\]

\bigskip

\noindent \emph{Proof of Claim.}  
 We note that the pre--image of $\partial \Omega_n$
 under uniformization
  has the following property: for $n$ sufficiently large as specified above, consider the pre--image of the boundary element $\varphi^{-1}([A_n, B_n])$.  Then starting at $\varphi^{-1}(A_n)$, once the segment enters $\varphi^{-1}(N_{\eta}(B_n))$, it must hit 
$\varphi^{-1}(B_n)$ before exiting $\varphi^{-1}(N_{4\eta}(B_n))$.

Indeed, if this were not true, then necessarily there would be three or more crossings of the ``annular region'' $\varphi^{-1}(N_{4\eta}(B_n)) \setminus \varphi^{-1}(N_{\eta}(B_n))$.
It is noted that all such crossings -- indeed all
of $\varphi^{-1}(\Omega_{n})$ -- lies within a distance 
of the order $n^{-1/2}$ of $\partial \mathbb D$.
This follows by standard distortion estimates (see e.g., \cite{lawler}, Corollary 3.19 together with Theorem 3.21) and the definition of canonical approximation: each point on $\partial \Omega_n$ is within distance $1/n$ of some point on $\partial \Omega$.  It is further noted, by the final stipulation
 concerning $n_{\circ}$, that the separation scale of the above mentioned ``annular region'' is large compared with 
the distance $n^{-1/2}$.

Envisioning $\partial \Omega$ to be the ``bottom'', consider now a point on the ``topmost'' of these crossings which is well away -- compared with 
$n^{-1/2}$ --
from the lateral boundaries of the annular region and also the pre--image of its associated hexagon.  
Since this point is the pre--image of one on $\partial \Omega_n$, the hexagon in question must intersect $\partial \Omega$ and therefore its pre--image must intersect $\partial \mathbb D$.  However, in order to intersect $\partial \mathbb D$, the pre--image of the hexagon in question must intersect \emph{all} the lower crossings, since our distortion estimate does not permit this pre--image to leave (a 
lower portion of) the annular region.  This 
necessarily implies it passes through the interior of $\Omega_n$, which is impossible for a boundary hexagon.

  The same argument also shows that once $\varphi^{-1}(\partial \Omega_n)$ exits $\varphi^{-1}(N_{4\eta}(B_n))$, it cannot re--enter $\varphi^{-1}(N_{\eta}(B_n))$ and so must be headed towards $\varphi^{-1}(C_n)$ and certainly cannot enter $\varphi^{-1}(N_{\eta}(A))$
since
\[\mbox{dist}(\varphi^{-1}(N_{\eta}(A)), \varphi^{-1}([B, C, D] \cup [N_{4\eta}(B) \cup N_{4\eta}(D) \cup N_{4\eta}(C)])) \gg n^{-1/2}\]
by assumption (by the choice of $\eta$, it is the case that $[B, C, D] \cup [N_{4\eta}(B) \cup N_{4\eta}(D) \cup N_{4\eta}(C)] \subseteq [N_{4\eta}(A)]^c$ from which the previous display follows from Equation \eqref{saway}).  

Altogether we then have that 
$ \mbox{dist}(\varphi^{-1}(N_{\eta}(A)), \varphi^{-1}([B_n, C_n, D_n])) > 0$,
and so the claim follows after applying $\varphi$.
\qed

\vspace{6 pt}

The above claim in fact implies that there exist points $A_n^p \in [A_n, B_n]$ and $A_n^g \in [A_n, D_n]$ such that 
\[ \mbox{dist}(A_n^p, A_n^g) < \frac{1}{n^{1-a_1}}\]
and 
\[ \mbox{dist}(A_n^p, \partial \Omega_n^\square), ~\mbox{dist}(A_n^g, \partial \Omega_n^\square) < \frac{1}{n^{1-a_1}}.\]
Indeed, consider squares of side length $n^{a_1}$ intersecting $\partial \Omega_n$ which share an edge with $\partial \Omega_n^\square$ and have non--trivial intersection with $N_{\eta}(A)$, then since $\partial \Omega_n$ passes through such boxes, we can unambiguously label them as either an $[A_n,B_n]$, an $[A_n,D_n]$ box, or both, and by the claim there are no other possibilities.  Therefore, a pair of such boxes of differing types must be neighbors or there is at least one single box of both types, so we indeed have points $A_n^p, A_n^g$ as claimed.  
Finally, by the stipulation 
$$
\frac{1}{n^{1-a_{1}}} \ll \text{dist}(([A,B]\setminus N_{\eta}(A)), ([D,A]\setminus N_{\eta}(A)))
$$
it is clear that these 
points must lie in $N_{\eta}(A)$.  

Thus
we choose $A_n^\square \in \partial \Omega_n^\square$ to be any point near the $(A_n^p, A_n^g)$ juncture (e.g., the nearest point).
Now consider the scale $n^{a_{3}}$
with $1 > a_{3} > a_{1}$.  We may surround the points
$A_n^{p}, A_n^{g}$ and $A_{n}^{\square}$
with the order of 
$\log_{2} n^{a_{3} - a_{1}}$
disjoint concentric annuli.  These annuli have the property that the \emph{fragment} consisting of its intersection with $\Omega_n$ forms a conduit between some portion of the $[A_n, D_n]$ boundary (which need not be connected) and a similar portion of the $[A_n, B_n]$ boundary.  Moreover, any circuit in the annulus necessarily provides a path which connects these two portions.  Thus the probability of a blue connected path between $[A_n, D_n]$ and $[A_n, B_n]$ within any particular annulus fragment is no less than the probability of a blue circuit in the corresponding full annulus, which is uniformly positive.  So letting $\mathcal A$ denote the event that in at least one of these fragments the desired blue connection occurs, we have 
%each of which forms a conduit between
%$[A_{n}, D_{n}]$ and $[A_{n}, B_{n}]$.  Let $\mathcal A$ denote the event that at least one 
% of these annuli houses a blue circuit (as usual this means a long way crossing of the relevant topological rectangle)
%then we have
%
\begin{equation}
\label{XTC}
\mathbb P(\mathcal A) \geq 1 - n^{-a_4}
\end{equation}
for some $a_{4} > 0$.  
Similar constructions may be enacted about the 
$B_{n}, B_{n}^{\square}; \dots ;  D_{n}, D_{n}^{\square}$
pairs leading, ultimately, to the events
$\mathcal B, \dots, \mathcal D$ (which are analogous to 
$\mathcal A$) with estimates on their probabilities as in Eq.(\ref{XTC}).
For future reference, we denote by
$\mathcal E$ the event $\mathcal A \cap \dots \cap \mathcal D$ and so $\mathbb P(\mathcal E) \gtrsim 1 - n^{a_4}$ (by the FKG inequality).

We are now in a position to verify that
$|S_{n}(A_{n}) - S_{n}^\square(A_{n}^{\square})|$
obeys the stated power law estimate.  
Indeed, the $C$--component
of both functions vanish identically
while the differences between the other two components
amount to comparisons of crossing probabilities 
on the ``topological'' rectangles
$[A_{n}, B_{n}, C_{n}, D_{n}]$
verses
$[A_{n}^{\square}, B_{n}^{\square}, C_{n}^{\square}, D_{n}^{\square}]$.  There are two crossing events contributing to the (complex) function $S_n(A_n)$ (and similarly for $S_n^\square(A_n^\square)$) but since the arguments are identical, it is sufficient to treat one such crossing event.  Thus we denote by
$\mathbb K_{n}$
the event of a crossing in $\Omega_{n}$
by a blue path 
between the 
$[A_{n}, D_{n}]$
and
$[B_{n}, C_{n}]$
boundaries (the event contributing to $S_B(A_n)$)
and similarly for the event
$\mathbb K_{n}^{\square}$
for a blue path in $\Omega_{n}^{\square}$.
It is sufficient to show that
$|\mathbb P(\mathbb K_{n}^{\square}) - \mathbb P(\mathbb K_{n})|$
has an estimate of the stated form. 

The greater portion of the following is rather standard in the context of 2D percolation theory so we shall be succinct.  The idea is to first ``seal'' together e.g., $A_n$ and $A_n^\square$ (and similarly for $B, C, D$) by circuits and then perform a continuation of crossings argument.

Without loss of generality we may assume that $S_B^\square(A_n^\square) > S_B(A_n)$ since otherwise the $S_D$ functions would satisfy this inequality and we may work with $S_D$ instead.
For ease of exposition, let us envision
$[A_{n}, B_{n}]$ and $[A_{n}^{\square}, B_{n}^{\square}]$
as the ``bottom'' boundaries and the
$D, C$ pairs as being on the ``top''.

Let $\Gamma$ denote a crossing between
$[A_{n}^{\square}, D_{n}^{\square}]$ and $[B_n^\square, C_n^\square]$ within
$\Omega_{n}^{\square}$ and let
$\Gamma_{\mathbb K_{n}^{\square}} \in \mathbb K_{n}^{\square}$
denote the event
that $\Gamma$ is the ``lowest'' (meaning
$[A_{n}^{\square}, B_{n}^{\square}]$--most) crossing.
These events form a disjoint partition so that 
$\mathbb P(\mathbb K_n^\square) = \sum_\Gamma ~\mathbb P(\mathbb K_{n}^\square
\mid \Gamma_{\mathbb K_{n}^{\square}}) \cdot \mathbb P(\Gamma_{\mathbb K_n^\square})$.  From previous discussions concerning $A_n^p, A_n^g$, we have that $\mathbb P(\mathcal E) \geq 1 - n^{-a_4}$, which we remind the reader, means that with the stated probability, these crossings do not go into any ``corners'' and hence there is ``ample space'' to construct a continuation.  

So let $a_{5} > a_{1}$ denote another constant which is less than unity (recall that in microscopic units, $\mbox{dist}(\partial \Omega_n^\square, \partial \Omega_n) \leq n^{a_1}$).
Then, to within tolerable error estimate (by the Russo--Seymour--Welsh estimates)
it is sufficient to consider only the crossings 
$\Gamma$
with right endpoint a distance in excess of
$n^{a_{5}}$ away from
$C_{n}^{\square}$ and with left endpoint similarly
separated from
$D_{n}^{\square}$.

Let $\Gamma_{D}$ and $\Gamma_{C}$
denote these left and right endpoints of $\Gamma$, respectively.
Consider a sequence of intercalated 
annuli starting at the scale $n^{a_{1}}$ -- or, if necessary, in slight excess -- and ending at scale
$n^{a_{5}}$ (where ostensibly they might run aground at
$C_{n}^{\square}$) around $\Gamma_C$.  A similar sequence should be considered
on the left.  Focusing on the right, it is clear that each
such annulus provides a conduit
between $\Gamma$ and 
$\partial \Omega_{n}$
that runs through
the $[B_{n}^{\square}, C_{n}^{\square}]$ 
boundary of
$\Omega_{n}^{\square}$.
Let $\bar \gamma_r$ denote an occupied blue circuit in one of these annuli
and similarly for $\bar \gamma_\ell$ on the left.  

The blue circuit $\bar \gamma_r$ must intersect $\Gamma$ and, since e.g., $\Gamma_C$ is at least $n^{a_5}$ away from $A_n^\square, D_n^\square$, these circuits must end on the $[D_n^\square, A_n^\square]$ boundary so that the portion of the circuit above $\Gamma$ forms a continuation to $\partial \Omega_n$; similar results hold for $\Gamma_D$ and $\bar \gamma_\ell$ and the crossing continuation argument is complete. As discussed before,  we may repeat the argument for the other crossing event contributing to the $S$--functions, so we now have that $|S_n(A_n) - S_n^\square(A_n^\square)| \leq n^{-a_2}$ for some $a_2 > 0$, concluding the first half of the theorem.
  
The second claim of this theorem, 
concerning the conformal maps $H_{n}(A_{n})$
and $H_{n}^{\square}(A_{n}^{\square})$
in fact follows readily from the \textit{arguments}
of the first portion.  In particular, we claim that the estimate on the difference can be acquired by an identical sequence of steps
by the realization of the fact that the $S$--function for a given percolative domain which is the canonical approximation to a conformal rectangle
converges to the conformal map of said domain to 
$\mathbb T$ (\cite{stas_perc} , \cite{beffara}, \cite{BCL2}).

Thus, while seemingly a bit peculiar, 
there is no reason why we may not consider 
$\Omega_{n}$ to be a fixed \emph{continuum} domain
and, e.g., for $N \geq n$, the domain
$\Omega_{n,N}$ to be its canonical approximation 
for a percolation problem at scale 
$N^{-1}$.  Similarly for
$\Omega_{n,N}^{\square}$.  
Of course here we underscore that 
e.g., 
$A_{n}^{\square}, \dots D_{n}^{\square}$ are regarded as fixed
(continuum) marked points which have their 
own canonical approximates 
$A_{n,N}^{\square}, \dots D_{n,N}^{\square}$
but there is no immediately useful relationship between them and the approximates
$A_{n,N} \dots D_{n,N}$.

It is now claimed that uniformly in $N$, with $N \geq n$ and
$n$ sufficiently large the entirety of the previous argument 
can be transcribed \emph{mutatis mutantis}
for the percolation problems on
$\Omega_{n,N}$ and $\Omega_{n,N}^{\square}$.  
Indeed, once all points were located,
the seminal ingredients all concerned
(partial) circuits in (partial) annuli 
and/or rectangular crossings
of uniformly bounded aspect ratios and dimensions not smaller than 
$n^{-1}$.  All such events enjoy uniform bounds
away from $0$ or $1$ (as appropriate)
which do not depend on the scale
and therefore apply 
to the percolation problems on
$\Omega_{n,N}$ and $\Omega_{n,N}^{\square}$.
We thus may state without further ado
that for all $N > n$ (and $n$ sufficiently large)
\begin{equation}
|S_{n,N}(A_{n,N}) - S_{n,N}^{\square}(A_{n,N}^{\square})|
\lesssim \frac{1}{n^{a_{2}}}
\end{equation}
and so $|H_n(A_n) - H_n^\square(A_n^\square)| \lesssim n^{-a_2}$ as well.  

Finally, since the relationship between $\Omega_n$ and $\Omega$ is the same as that between $\Omega_n^\square$ and $\Omega_n$ (both $\Omega_n$, $\Omega_n^\square$ are inner domains obtained by the union of shapes (squares or hexagons) of scale an inverse power of $n$ from $\Omega$, $\Omega_n$, respectively) the same continuum percolation argument as above gives the estimate that $|H_n(A_n) - H(A)| \leq n^{-a_2}$.

\qed

We remark that the idea of uniform estimates leading to ``continuum percolation'' statements will be used on other occasions in this paper.

\subsection{The Cauchy--Integral Extension}
\label{YTV}

We will now consider the Cauchy--integral version of 
the function $S_n^\square$.  
Ostensibly this is defined on the full
$\Omega_{n}^{\square}$ however as mentioned in the introduction to this section, its major 
r\^ole will be played on the subdomain $\Omega_n^\vardiamond$ which will emerge shortly.
\begin{lemma}
\label{ci}
Let $\Omega_n^\square$ and $S_n^\square$ be as in Proposition \ref{pr} so that 
\[|\partial \Omega_n^\square| \leq n^{\alpha(1-a_1)},\]
where $M(\partial \Omega) < 1 + \alpha$. 
For $z \in \Omega_n^\square$ (with the latter regarded as a continuum subdomain of the plane) let 
\begin{equation}
\label{FS}
F_n^{\square}(z) = \frac{1}{2\pi i} \oint_{\partial \Omega_n^\square} \frac{S_n^\square(\zeta)}{\zeta - z}~d\zeta.  
\end{equation}
Then for $a_1$ sufficiently close to 1 there exists some $\beta > 0$ and some $a_5 > 0$ such that for all $z \in \Omega_n^\square$ (meaning lying on edges and sites of $\Omega_n^\square$) with $\mbox{dist}(z, \partial \Omega_n^\square) > n^{-a_5}$ it is the case that
$$
\left|S_n^\square(z) - F_n^{\square}(z) \right| \lesssim n^{-\beta}.
$$
\end{lemma}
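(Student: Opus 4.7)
The strategy is to exploit the fact that Cauchy's integral formula would force $F_n^\square \equiv S_n^\square$ identically if $S_n^\square$ were genuinely analytic. Since $S_n^\square$ is only \emph{approximately} discretely analytic, the key quantitative inputs are: (i) the contour-integral bound $|\oint_\Gamma S_n^\square\,d\zeta| \lesssim |\Gamma|\,n^{-\gamma_0}$ for some $\gamma_0 > 0$ and every closed discrete contour $\Gamma$, and (ii) uniform H\"older continuity of $S_n^\square$ with some exponent $\eta > 0$. I will introduce the auxiliary disc $D_r(z)$ of radius $r = n^{-a_5}$, which sits compactly inside $\Omega_n^\square$ by the standing hypothesis, and split $F_n^\square(z) - S_n^\square(z) = (I) + (II)$ where
$$
(I) := F_n^\square(z) - \frac{1}{2\pi i}\oint_{\partial D_r(z)}\frac{S_n^\square(\zeta)}{\zeta-z}\,d\zeta, \qquad (II) := \frac{1}{2\pi i}\oint_{\partial D_r(z)}\frac{S_n^\square(\zeta)}{\zeta-z}\,d\zeta - S_n^\square(z).
$$
Term $(II)$ rewrites as $\frac{1}{2\pi i}\oint_{\partial D_r(z)}\frac{S_n^\square(\zeta)-S_n^\square(z)}{\zeta-z}\,d\zeta$ (using $\oint_{\partial D_r(z)}\frac{d\zeta}{\zeta-z}=2\pi i$), and the uniform H\"older bound gives $|(II)| \lesssim r^\eta = n^{-a_5\eta}$.

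For the ``bulk'' term $(I)$, I tile $\Omega_n^\square \setminus \overline{D_r(z)}$ by small discrete cells $\{C_i\}$ of common side $\delta$ (to be optimized). Cancellation along shared edges yields $(I) = \sum_i \frac{1}{2\pi i}\oint_{\partial C_i}\frac{S_n^\square(\zeta)}{\zeta-z}\,d\zeta$. On each cell, centered at $\zeta_i$ with $d_i = |\zeta_i-z| \geq r$, I invoke the identities $\oint_{\partial C_i}d\zeta = \oint_{\partial C_i}\frac{d\zeta}{\zeta-z} = 0$ (the latter by Cauchy since $z \notin C_i$) to reduce the integrand to two surviving contributions:
\begin{equation*}
\frac{1}{\zeta_i - z}\oint_{\partial C_i} S_n^\square\,d\zeta \qquad \text{and} \qquad \oint_{\partial C_i} \bigl[S_n^\square(\zeta) - S_n^\square(\zeta_i)\bigr]\!\left[\frac{1}{\zeta-z} - \frac{1}{\zeta_i - z}\right] d\zeta.
\end{equation*}
The first is bounded by $\lesssim \delta n^{-\gamma_0}/d_i$ using approximate discrete analyticity; the second is bounded by $\lesssim \delta^{2+\eta}/d_i^2$ via H\"older continuity ($|S_n^\square(\zeta)-S_n^\square(\zeta_i)| \lesssim \delta^\eta$) combined with the Lipschitz estimate $|\tfrac{1}{\zeta-z}-\tfrac{1}{\zeta_i-z}| \lesssim \delta/d_i^2$ (valid provided $\delta \ll d_i$, which forces $a_5 \leq \gamma_0/(1+\eta)$). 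Grouping cells annulus-by-annulus with $\lesssim d/\delta$ cells at distance $d$,
$$
|(I)| \lesssim \sum_{d \geq r}\frac{d}{\delta}\left(\frac{\delta n^{-\gamma_0}}{d} + \frac{\delta^{2+\eta}}{d^2}\right) \lesssim \frac{n^{-\gamma_0}}{\delta} + \delta^\eta \log\frac{1}{r}.
$$
Choosing $\delta = n^{-\gamma_0/(1+\eta)}$ balances the two bulk contributions, and selecting $a_5 = \gamma_0/(1+\eta)$ (or slightly less) makes $|(II)|$ match, producing the desired bound $\lesssim n^{-\beta}$ for any $\beta < \gamma_0\eta/(1+\eta)$, up to logarithmic factors which are absorbed.

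The principal technical obstacle lies in the cells adjoining $\partial \Omega_n^\square$, whose contour integrals involve boundary arcs where the discrete-analyticity estimate must be checked with care and where the cell size $\delta$ must be compatible with the mesoscopic boundary scale $n^{a_1-1}$. Since the total boundary length is $|\partial \Omega_n^\square| \lesssim n^{\alpha(1-a_1)}$ by Proposition \ref{pr}, the number of such cells is at most $\lesssim n^{\alpha(1-a_1)}/\delta$; requiring $a_1$ sufficiently close to $1$ ensures that $\alpha(1-a_1)$ is small enough that the boundary-cell contribution is absorbed into the same power-law bound. The precise admissible ranges of $\beta$ and $a_5$ thus emerge from a joint balancing of the exponents $\gamma_0$, $\eta$, and the boundary-regularization parameter $a_1$.
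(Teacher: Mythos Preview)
Your approach is the paper's: introduce a small circle about $z$, control term $(II)$ by H\"older continuity, tile the region between the circle and $\partial\Omega_n^\square$ by mesoscopic cells of side $\delta$, and on each cell split the integrand into an approximate--analyticity piece plus a H\"older--times--Lipschitz remainder. The paper packages this as a multiplication lemma (Lemma~\ref{panal}) for $q\cdot Q_\varepsilon$ with $q(\zeta)=(\zeta-z)^{-1}$, followed by Corollary~\ref{cauchy}; you do it inline.

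Two points of difference deserve comment. First, the paper keeps \emph{two} radii --- $d_1=n^{-a_5}$ for the distance to $\partial\Omega_n^\square$ and a separate $d_2<d_1$ for the small circle --- and bounds the kernel crudely by $\|q\|_\infty\lesssim d_2^{-1}$, $\|q\|_{C^1}\lesssim d_2^{-2}$ uniformly over all cells rather than summing annulus--by--annulus. Your dyadic summation is sharper on interior cells (gaining a logarithm) but the paper's cruder version handles interior and boundary cells in one stroke, yielding the factor $(|\mathrm{Int}(\Gamma_\varepsilon)|+|\Gamma_\varepsilon|\cdot D)$ in which the boundary length enters explicitly.

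Second, your treatment of boundary cells needs repair. A cell $C_i$ meeting $\partial\Omega_n^\square$ can carry an arc of that boundary of length $\ell_i\gg\delta$, so your per--cell bounds become $(\delta+\ell_i)n^{-\gamma_0}/d_i$ and $(\delta+\ell_i)\,\delta^{1+\eta}/d_i^2$. Summing with $\sum_i\ell_i\lesssim n^{\alpha(1-a_1)}$ and the crude $d_i\ge r=n^{-a_5}$, the second of these contributes $n^{\alpha(1-a_1)+2a_5-(1+\eta)t}$ where $\delta=n^{-t}$. With your stated choice $t=a_5=\gamma_0/(1+\eta)$ this exponent equals $\alpha(1-a_1)+\gamma_0(1-\eta)/(1+\eta)$, which is \emph{positive} whenever $\eta<1$ --- the generic case here --- so the bound blows up regardless of how close $a_1$ is to $1$. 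The fix is to decouple: take $a_5$ a genuinely smaller power than $t$ (not merely ``slightly less''), accept a smaller $\beta$, and rebalance. This is exactly why the paper introduces $d_2$ as an independent parameter.
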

The proof of this lemma is postponed until Section \ref{CIEX} and we remark that while $S_n^\square$ is only defined on vertices of hexagons \emph{a priori}, it can be easily interpolated to be defined on all edges, as discussed in Section \ref{KBB}.  We will now proceed to demonstrate that $F_n^\square$ is conformal in a subdomain of $\Omega_n^\square$.  Let us first define a slightly smaller domain:

\begin{defn}
Let $\Omega_n^\square$ be as described.  Let $a_5 > 0$ be as in Lemma \ref{ci} and define, for temporary use,
\[ \Omega_n^\oo := \{ z \in \Omega_n^\square: \mbox{dist}(z, \partial \Omega_n^\square) \geq n^{-a_5}\}.\]
\end{defn}

We immediately have the following:

\begin{prop}\label{t1}
For $n$ sufficiently large, there exists some $\beta > a_3 > 0$ (with $\beta$ as in Lemma \ref{ci}) such that 
\[ d_{\rm sup}(F_n^\square(\partial \Omega_n^\oo), \partial \mathbb T) \lesssim n^{-a_3}.\]
Here $d_{\rm sup}$ denotes the \emph{supremum distance} between curves, i.e., 
\[ d_{\rm sup}(\gamma_1, \gamma_2) = \inf_{\varphi_1, \varphi_2} \sup_t |\gamma_1(\varphi_1(t)) - \gamma_2(\varphi_2(t))|,\]
where the infimum is over all possible parameterizations.  
\end{prop}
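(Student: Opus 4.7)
The plan is to combine Lemma \ref{ci} with the uniform H\"older continuity of $S_n^\square$ and with the fact that $S_n^\square$ already maps $\partial \Omega_n^\square$ into $\partial \mathbb T$. First, I would verify this boundary-value property: on each arc of $\partial \Omega_n^\square$, exactly one of $S_B^\square, S_C^\square, S_D^\square$ vanishes identically while the other two sum to $1$ (following Smirnov's analysis in \cite{stas_perc}). For instance, on $[C_n^\square, D_n^\square]$ we have $S_B^\square \equiv 0$ and $S_C^\square + S_D^\square \equiv 1$, so that $S_n^\square = \tau S_C^\square + \tau^2 S_D^\square$ is a convex combination of $\tau$ and $\tau^2$ and therefore lies on the corresponding edge of $\mathbb T$; the same applies on the other two arcs, showing $S_n^\square(\partial \Omega_n^\square) \subseteq \partial \mathbb T$.

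Next, for any $z \in \partial \Omega_n^\oo$ I would pick a nearest boundary point $\zeta(z) \in \partial \Omega_n^\square$, so that $|z - \zeta(z)| = n^{-a_5}$ by definition of $\Omega_n^\oo$. By the uniform H\"older continuity of $S_n^\square$ with some $n$-independent exponent $\rho > 0$ (cf.\ Definition \ref{nhf}), $|S_n^\square(z) - S_n^\square(\zeta(z))| \lesssim n^{-a_5 \rho}$, and since $S_n^\square(\zeta(z)) \in \partial \mathbb T$ by the first step this gives $\mbox{dist}(S_n^\square(z), \partial \mathbb T) \lesssim n^{-a_5 \rho}$. Lemma \ref{ci} applied at $z$ (slightly adjusting the constant $a_5$ if necessary to respect the strict inequality) gives $|F_n^\square(z) - S_n^\square(z)| \lesssim n^{-\beta}$, and combining,
\[
\mbox{dist}(F_n^\square(z), \partial \mathbb T) \lesssim n^{-\beta} + n^{-a_5 \rho}.
\]
Choosing any $a_3 \in (0, \min(\beta, a_5 \rho))$ yields a pointwise bound of the desired form, and automatically ensures $a_3 < \beta$ as required.

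To promote this pointwise bound to the sup-distance between curves, I would use the natural cyclic parameterization of $\partial \Omega_n^\oo$, which induces both a parameterization of $F_n^\square(\partial \Omega_n^\oo)$ (via $z \mapsto F_n^\square(z)$) and of $\partial \mathbb T$ (via $z \mapsto S_n^\square(\zeta(z))$, which is a continuous surjection thanks to the first step and the intermediate value theorem applied on each arc). Under these consistent parameterizations the pointwise estimate is precisely the claimed sup-distance bound. The main obstacle, I expect, is not any of the analytic estimates -- these follow essentially immediately from the cited lemma and structural properties of the CCS function -- but rather the verification that the nearest-point projection $z \mapsto \zeta(z)$ can be chosen to yield a genuine (cyclic, non-backtracking) parameterization of $\partial \mathbb T$. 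Small-scale wiggling of $S_n^\square$ on a boundary arc could in principle make the induced map non-monotone on that arc; however, since $\partial \Omega_n^\square$ is piecewise linear at the macroscopic scale $n^{-(1-a_1)}$ and both $\Omega_n^\square$ and $\Omega_n^\oo$ are topologically tame, this issue is local and should contribute at most additional errors well within the tolerance $n^{-a_3}$.
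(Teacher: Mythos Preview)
Your proposal is correct and follows essentially the same approach as the paper: both arguments combine (i) the fact that $S_n^\square$ maps $\partial \Omega_n^\square$ onto $\partial \mathbb T$, (ii) the Cauchy--integral estimate of Lemma \ref{ci} on $\partial \Omega_n^\oo$, and (iii) the H\"older continuity of $S_n^\square$ up to the boundary to pass from $\partial \Omega_n^\oo$ back to $\partial \Omega_n^\square$. Regarding your residual concern about the parameterization, the paper resolves it by noting (as you also implicitly do in your first step) that on each boundary arc the relevant crossing probability increases \emph{monotonically and continuously} from $0$ to $1$, so $S_n^\square|_{\partial \Omega_n^\square}$ is itself a bona fide cyclic parameterization of $\partial \mathbb T$; this directly supplies the needed parameterization without having to worry about monotonicity of the nearest--point projection $\zeta$.
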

\begin{proof}
Let us first re--emphasize that $S_n^\square$ maps $\partial \Omega_n^\square$ to $\partial \mathbb T$.  This is in fact fairly well known (see e.g., \cite{beffara} or \cite{BCL2}, Theorem 5.5) but a quick summary proceeds as follows: by construction $S_n^\square$ is continuous on $\partial \Omega_n^\square$ and e.g., takes the form $\lambda \tau + (1-\lambda) \tau^2$ on one of the boundary segments, where $\lambda$ represents a crossing probability which increases monotonically -- and continuously -- from 0 to 1 as we progress along the relevant boundary piece.  Similar statements hold for the other two boundary segments.

Now by Lemma \ref{ci},  $F_n^{\square}(z)$ is at most the order $n^{-\beta}$ 
away from $S_n^\square(z)$ for any $z \in \partial \Omega_n^{\oo}$, so the curve $F_n^{\square}(\partial \Omega_n^{\oo})$ is in fact also that close to $S_n^\square(\partial \Omega_n^\oo)$ in the supremum norm.  Finally, by the H\"older continuity of $S_n^\square$ up to $\partial \Omega_n^\square$ (see Proposition \ref{sfp}) and the fact that $\partial \Omega_n^\oo$ is a distance which is an inverse power  of $n$ to $\partial \Omega_n^\square$, it follows that $S_n^\square(\partial \Omega_n^\oo)$ is also close to $\partial \mathbb T$ and the stated bound emerges.
\end{proof}

Equipped with this proposition, we can now introduce the domain $\Omega_n^\vardiamond$:

\begin{defn}\label{t2}
Let $a_4 > 0$ be such that $\beta > a_3  > a_4$ (with $a_3 > 0$ as in Proposition \ref{t1}) and let us denote by
$$
\mathbb T^{\vardiamond} = (1- n^{-a_4}) \cdot \mathbb T 
$$
the uniformly shrunken version of $\mathbb T$.  Finally, let 
$$
\Omega_n^{\vardiamond} := (F_n^{\square})^{-1} (\mathbb T^\vardiamond)
$$
and denote by $(B_n^\vardiamond, C_n^\vardiamond, D_n^\vardiamond)$ the preimage of $(1 - n^{-a_4}) \cdot (1, \tau, \tau^2)$ under $F_n^\square$.
\end{defn}

The purpose of introducing $\Omega_n^\vardiamond$ is illustrated in the next lemma:

\begin{lemma}
\label{cdmd}
Let $F_n^\square$ and $\Omega_n^\vardiamond$, etc., be as described.  Then $F_n^\square$ is conformal in $\Omega_n^\vardiamond$.  Next let $H_n^\vardiamond: \Omega_n^\vardiamond \longrightarrow \mathbb T$ be the conformal map which maps $(B_n^\vardiamond, C_n^\vardiamond, D_n^\vardiamond)$ to $(1, \tau, \tau^2)$.  Then for all $z \in \overline{\Omega_n^\vardiamond}$, 
\[ |F_n^\square(z) - H_n^\vardiamond(z)| \lesssim n^{-a_4}.\]
\end{lemma}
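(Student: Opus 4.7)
The plan is to first establish univalence (and hence conformality) of $F_n^\square$ on $\Omega_n^\vardiamond$ via an argument-principle / Rouch\'e step, and then to read off the pointwise estimate by exploiting the fact that the conformal map between $\mathbb T^\vardiamond$ and $\mathbb T$ respecting the vertex correspondence is explicitly the rescaling $w \mapsto w/(1-n^{-a_4})$.

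First I would observe that $F_n^\square$ is \emph{a priori} analytic throughout $\Omega_n^\square$, being the Cauchy integral of an $L^\infty$--function on $\partial \Omega_n^\square$; so univalence on $\Omega_n^\vardiamond$ is the only new property to verify. By Proposition \ref{t1}, $d_{\rm sup}(F_n^\square(\partial \Omega_n^\oo), \partial \mathbb T) \lesssim n^{-a_3}$, and since $a_4 < a_3$ the boundary $\partial \mathbb T^\vardiamond$ lies at distance $\gtrsim n^{-a_4}$ from the image curve $F_n^\square(\partial \Omega_n^\oo)$. In particular $\Omega_n^\vardiamond = (F_n^\square)^{-1}(\mathbb T^\vardiamond)$ is compactly contained in $\Omega_n^\oo$. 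A Rouch\'e comparison on $\partial \Omega_n^\oo$ between $F_n^\square$ and the genuine conformal map $H_n^\square$ (which is univalent and maps a curve $n^{-a_3}$--close to $\partial \mathbb T$) then shows that for every $w \in \mathbb T^\vardiamond$ the winding number is one. Hence $F_n^\square$ assumes each such $w$ exactly once and $F_n^\square|_{\Omega_n^\vardiamond}$ is a biholomorphism onto $\mathbb T^\vardiamond$.

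Once conformality is in hand, the composition $g := H_n^\vardiamond \circ (F_n^\square)^{-1}\colon \mathbb T^\vardiamond \to \mathbb T$ is a conformal map between two equilateral triangles sending the vertices $(1-n^{-a_4})\cdot(1,\tau,\tau^2)$ of $\mathbb T^\vardiamond$ to the vertices $(1,\tau,\tau^2)$ of $\mathbb T$. By uniqueness of the conformal map between Jordan domains with three prescribed boundary correspondences, $g$ must coincide with the explicit rescaling $g(w) = w/(1-n^{-a_4})$. Therefore $H_n^\vardiamond(z) = F_n^\square(z)/(1-n^{-a_4})$ and
\[
|F_n^\square(z) - H_n^\vardiamond(z)| \;=\; |F_n^\square(z)| \cdot \frac{n^{-a_4}}{1 - n^{-a_4}} \;\lesssim\; n^{-a_4}
\]
uniformly for $z \in \overline{\Omega_n^\vardiamond}$, since $F_n^\square(z) \in \overline{\mathbb T^\vardiamond}$ is bounded by a universal constant (with the inequality extending to $\partial \Omega_n^\vardiamond$ by continuity).

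The main obstacle is justifying the Rouch\'e step, i.e., obtaining uniform closeness of $F_n^\square$ to the genuine conformal map $H_n^\square$ on $\partial \Omega_n^\oo$. Lemma \ref{ci} supplies $|F_n^\square - S_n^\square|\lesssim n^{-\beta}$ there, but one still needs $|S_n^\square - H_n^\square|$ small on this inner curve. I would address this by the same ``continuum percolation'' device employed at the end of the proof of Theorem \ref{s1}: regard $\Omega_n^\square$ as a fixed continuum domain, perform percolation at a much finer scale $N^{-1} \ll n^{-1}$ so that Smirnov's theorem yields $S_{n,N}^\square \to H_n^\square$, and verify via uniform RSW/Harris-system estimates -- which depend only on the macroscopic geometry of $\Omega_n^\square$ and on $n^{-a_5}$-scale features, not on $N$ -- that the convergence descends to $S_n^\square$ itself on the $n^{-a_5}$-interior. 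With that input the Rouch\'e comparison goes through, and the remainder of the proof is purely algebraic.
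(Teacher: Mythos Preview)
Your second paragraph---the rescaling identity $H_n^\vardiamond = F_n^\square/(1-n^{-a_4})$ and the resulting bound---is exactly the paper's argument.  The difference is in the univalence step, and the ``main obstacle'' you flag is one you have introduced unnecessarily.

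The paper never compares $F_n^\square$ to $H_n^\square$.  It works directly with the image curve $K_n := F_n^\square(\partial\Omega_n^\oo)$: Proposition~\ref{t1} (which you cite) already gives $d_{\rm sup}(K_n,\partial\mathbb T)\lesssim n^{-a_3}$, and since $a_4<a_3$ the curve $K_n$ lies entirely outside $\mathbb T^\vardiamond$.  An elementary argument on parameterizations---a closed curve sup-norm close to $\partial\mathbb T$ winds exactly once around any point well inside---yields $W(K_n,w)=1$ for every $w\in\mathbb T^\vardiamond$, and the argument principle then gives univalence.  No reference to $H_n^\square$, no Rouch\'e hypothesis to check, no continuum-percolation input.

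Your Rouch\'e route against $H_n^\square$ requires two things that are not freely available: (i) that $|F_n^\square - H_n^\square|$ be small on $\partial\Omega_n^\oo$, and (ii) that the unique $H_n^\square$-preimage of each $w\in\mathbb T^\vardiamond$ actually lie inside $\Omega_n^\oo$, i.e., boundary-distortion control on $H_n^\square$ for a rough domain.  For (i) you propose the continuum-percolation device, but note what that would have to deliver: a quantitative bound $|S_n^\square - H_n^\square|\lesssim n^{-c}$ on an \emph{arbitrary Euclidean} curve $\partial\Omega_n^\oo$ inside $\Omega_n^\square$.  That is essentially a rate-of-convergence statement of the same nature as the Main Theorem itself.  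The places in the paper where the $N\to\infty$ device succeeds (end of Theorem~\ref{s1}, Proposition~\ref{dia2}, Lemma~\ref{s3}) all work at specially chosen points or on $\partial\Omega_n^\vardiamond$---a curve \emph{defined} via $F_n^\square$ and accessible through Harris systems---not on the a priori curve $\partial\Omega_n^\oo$.  So this step is at best substantial extra work and at worst circular.  The fix is simply to drop the comparison with $H_n^\square$ and argue on $K_n$ directly, as the paper does.
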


\begin{proof}
Since $F_n^\square$ is manifestly holomorphic in order to deduce conformality it is only necessary to check that it is 1--to--1.
Let $K_n: = F_n^{\square}(\partial \Omega_n^{\oo})$ and let us start with the following observation on the winding of $K_n$:

\bigskip

\noindent \textbf{Claim.}
If $w \in \mathbb T^{\vardiamond}$, then the winding of $K_n$ around $w$ is equal to one:
\[W(K_n, w) = \frac{1}{2\pi i} \int_{K_n} \frac{dz}{z - w} = 1.\]

\bigskip

\noindent \textit{Proof of Claim.}
The result is elementary and is, in essence, Rouch\'e's Theorem so we shall be succinct and somewhat informal.  Foremost, 
by continuity, the winding is constant for any 
$w \in \mathbb T^\vardiamond$.  (This is easily proved using the displayed formula and the facts that the winding is integer valued and that $K_{n}$ is rectifiable.)  
Clearly, since
$\partial \mathbb T$ and $K_{n}$ are close in the supremum norm, it follows, by construction 
%(for an argument see the end of the proof of Proposition \ref{t1} and use the fact that $F_n^\square$ only differs from $H_n^\vardiamond$ by a small scale factor) 
that 
$\partial \mathbb T^\vardiamond$ and $K_{n}$ are also close in this norm.  

Let $z_{K}(t)$ and $z_{\vardiamond}(t)$, $0 \leq t \leq 1$ denote counterclockwise moving parameterizations of $K_n$ and $\partial \mathbb T^\vardiamond$ that are uniformly close.  For $z_{\vardiamond}$, this starts and ends on the positive real axis
and we let $\theta_{\vardiamond}(t)$ denote the evolving argument of 
$z_{\vardiamond}(t)$ (with respect to the origin as usual): $0 \leq \theta_{\vardiamond}(t) \leq 2\pi$.
We similarly define $\theta_{K}(t)$: in this case, we stipulate that $|\theta_{K}(0)|$
is as \textit{small} as possible -- and thus approximately zero -- but of course $\theta_{K}(t)$
evolves continuously with $z_{K}(t)$ and therefore ostensibly could lie anywhere in
$(-\infty, \infty)$.
But 
$|z_{\vardiamond}(t)|$ and $|z_{K}(t)|$ are both of order unity (and in particular not small) and they are close to each other.  So it follows that 
$|\theta_{\vardiamond}(t) - \theta_{K}(t)|$
must be uniformly small, e.g., within some $\vartheta$ with
$0 < \vartheta \ll \pi$ for all $t\in [0,1]$.   
Now, since 
$\theta_{\vardiamond}(1) - \theta_{\vardiamond}(0) = 2\pi$,
we have
$$
|W(K_{n}, 0) -1| = 
\left|
\frac{\theta_{K}(1) - \theta_{K}(0)}{2\pi} - 
\frac{\theta_{\vardiamond}(1) - \theta_{\vardiamond}(0)}{2\pi}\right|
\leq\frac{2\vartheta}{2\pi} \ll 1,
$$
so we are forced to conclude that $W(K_n, 0) = 1$ by the integer--valued property of winding.  The preceding claim has been established.
\qed

The above implies that $F_n^{\square}$ is in fact 1-1 in $\Omega_n^\vardiamond$: from Definition \ref{t2} we see that $a_4$ is chosen so that (for $n$ sufficiently large) $n^{-a_4}$ is large compared with $n^{-a_3}$ (from the conclusion of Proposition \ref{t1}) so that $K_n$ (which is clearly a continuous and possibly self--intersecting curve) lies outside $\mathbb T^{\vardiamond}$.  Now fix some point 
$\xi \in \Omega_n^{\vardiamond}$
and consider the function $h_\xi(z) := F_n^{\square}(z) - F_n^{\square} (\xi)$.  Next parametrizing $\partial \Omega_n^\square := \gamma$
as $\gamma: [0, 1] \rightarrow \mathbb C$, noting that $F_n^{\square}(\xi) \in \mathbb T^{\vardiamond}$ and using the chain rule we have that 
\[ \begin{split}
1 = W(K_n, F_n^{\square}(\xi))
&= \frac{1}{2\pi i} \oint_{F_n^{\square} \circ \gamma} ~
\frac{1}{\zeta - F_n^{\square}(\xi)}~d\zeta\\
&= \frac{1}{2\pi i} \int_0^1 \frac{(F_n^{\square})'(\gamma(t)) \gamma'(t)}{F_n^{\square}(\gamma(t)) - F_n^{\square}(\xi)}~dt
= \frac{1}{2\pi i} \oint_\gamma h_\xi'/h_\xi~dz.  
\end{split}\]
By the argument principle, the last quantity is equal to the number of zeros of $h_\xi$ in the region enclosed by 
$\gamma$, i.e., in $\Omega_{n}^{\square}$.  The desired 1--to--1 property is established.

We have now that $F_n^{\square} \mid_{\Omega_n^{\vardiamond}}$ is analytic and maps $\Omega_n^{\vardiamond}$ in a one--to--one fashion onto $\mathbb T^{\vardiamond}$.  Therefore $F_n^{\square} \mid_{\Omega_n^{\vardiamond}}$ is \emph{the} conformal map from $\Omega_n^{\vardiamond}$ to $\mathbb T^{\vardiamond}$ (mapping $B_n^{\vardiamond}, C_n^{\vardiamond}, D_n^{\vardiamond}$ to $(1 - n^{-a_4}) \cdot (1, \tau, \tau^2)$, the corresponding vertices of $\mathbb T^\vardiamond$).  Thus by uniqueness of conformal maps we have that $H_n^\vardiamond = \frac{1}{1 - n^{-a_4}} \cdot \left(F_n^\square \mid_{\Omega_n^\vardiamond}\right)$ and the stated estimate immediately follows.
\end{proof}

\subsection{Harris Systems}

We will now introduce the notion of \emph{Harris systems}; proofs will be postponed until Section \ref{SBT}.

\begin{theorem}[Harris Systems.]\label{hreg}
Let $\Omega_n^\square \subseteq \Omega$ be as described with marked boundary points (prime ends) $A, B, C, D \in \partial \Omega$ and let $z$ be an arbitrary point on $\partial \Omega_n$. Further, let $2\Delta$ denote the supremum of the side--length
of all circles contained in $\Omega$, and let 
$\mathcal D_{\Delta}$ denote a circle of side $\Delta$ with the same center as a circle for which the supremum is realized.  

Then there exists some $\Gamma > 0$ such that for all $n \geq n(\Omega)$ sufficiently large, the following holds:  around each boundary point
$z \in \partial \Omega_n^\square$ there is a nested sequence of at least $\Gamma \cdot \log n$ neighborhoods
the boundaries of which are segments (lattice paths) separating 
$z$ from $\mathcal D_{\Delta}$.  We call this sequence of segments 
the \emph{Harris system} stationed at $z$.  The regions between these segments (inside $\Omega_n^\square$)
are called Harris ring fragments (or just Harris rings).  

Further, there exists some $0 < \vartheta < 1/2$ such that  in each Harris ring, the probability of a blue path separating $z$ from $\mathcal D_{\Delta}$ is uniformly bounded from below by $\vartheta$.

Also, let $J$ denote the $d_\infty$--distance (see the definitions in Subsection \ref{dfss}) between successive segments forming a Harris ring -- of course, $J$ depends on the particulars of the ring under consideration -- and let $B > 0$ be such that the probability of a hard way crossing of a $B$ by $1$ topological rectangle (in both yellow and blue; see Proposition \ref{btt}) is less than $\vartheta^2$.  The following properties hold:

\begin{enumerate}

\item\label{hbox} for $r > 0$ sufficiently large (particularly, $2^{-r} < B^{-1}$) the Harris rings can be tiled with boxes of scale $2^{-2r} \cdot J$ and there is an aggregation of full boxes (unobstructed by the boundary of the domain) which connect the segments forming the Harris rings;

\item \label{hhss} successive segments $Y, Y_Q$ satisfy 
\[ B^{-1} \cdot J \leq \|Y\|_\infty \leq 2^{2r+1}(\kappa B)  \cdot J,~~~B^{-1} \cdot J  \leq \|Y_Q\|_\infty \leq \kappa B \cdot J,\]
where e.g., $\|Y\|_\infty$ denotes the diameter of the segment $Y$;

\item\label{hseal2} let $a$ be a point in the Harris system centered at $A_n^\square$ such that the number of Harris rings between $a$ and $\mathcal D_\Delta$ is of order $\log n$. 
Let $\mathcal A(a)$ denote the event of a blue (or yellow) path surrounding both $a$ and $A_n^\square$ with endpoints on $[A_n^\square, B_n^\square]$ and $[D_n^\square, A_n^\square]$.  Then there exists some constant $\lambda > 0$ such that 
\[\mathbb P(\mathcal A(a)) \geq 1 - n^{-\lambda}; \]
similar estimates hold at the points $B_n^\square, C_n^\square, D_n^\square$ and hence the estimate also holds for the intersected event, by FKG type inequalities
(or just independence); 

\item\label{hunif} finally, all estimates are uniform in lattice spacing in the sense of considering $\Omega_n^\square$ to be a fixed domain and performing percolation at scale $N^{-1}$.

\end{enumerate}
\end{theorem}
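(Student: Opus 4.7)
The plan is to station, at each boundary point $z \in \partial \Omega_n^\square$, a concentric tower of scales $r_0 < r_1 < \cdots < r_k$ with $r_0$ of order $n^{-1}$ and $r_k$ comparable to $\Delta$, with successive scales differing by a multiplicative constant $\rho > 1$. This produces $k \sim \log_\rho(n \Delta) \gtrsim \Gamma \log n$ annular shells, each of which will host one Harris ring. Inside $\Omega_n^\square$, the $j$-th shell becomes a \emph{ring fragment}: the connected component of $\{w : r_j \leq |w - z| \leq r_{j+1}\} \cap \Omega_n^\square$ containing a local arc that separates $z$ from $\mathcal D_\Delta$. Where the shell sits cleanly inside $\Omega_n^\square$ as a topological annulus of bounded aspect ratio, standard RSW circuit estimates yield a uniform probability $\vartheta$ of a blue separating circuit. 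Where the boundary intrudes, the fragment becomes a topological rectangle with two opposite sides on $\partial \Omega_n^\square$ and the other two on the circular arcs; RSW then provides a uniform lower bound for the probability of a blue hard--way crossing between the circular arcs, which again separates $z$ from $\mathcal D_\Delta$.

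For item (\ref{hhss}), I would take the scale $J$ attached to a ring to be the minimal bottleneck width of the fragment, and bound the diameter by a constant multiple of $J$; this gives the asserted two--sided estimates. For item (\ref{hbox}), once the tiling box scale satisfies $2^{-r} < B^{-1}$, no single $2^{-2r} \cdot J$--box supports a hard--way crossing in either color with probability exceeding $\vartheta^2$, so that with overwhelming probability the `full' boxes within the fragment form a connected aggregation bridging the two separating segments. A standard duality argument (what blocks blue must percolate in yellow) then forces this chain to exist.

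The principal obstacle will be the geometric construction when $z$ sits deep inside a fjord or near a pinch point of $\partial \Omega_n^\square$, since Euclidean circles centered at $z$ may fail to separate $z$ from $\mathcal D_\Delta$ inside $\Omega_n^\square$. The fix I would employ is to allow the `center' of the shell to drift along a carefully chosen chain of reference points interior to $\Omega_n^\square$, and to pick the connected component of each intersection that genuinely separates $z$ from $\mathcal D_\Delta$. Proving that this drifted construction admits at least $\Gamma \log n$ good fragments between every boundary point and $\mathcal D_\Delta$, with RSW estimates applicable throughout (including in the hard--way rectangle case, where one must verify bounded aspect ratio), is the central technical task.

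Item (\ref{hseal2}) then follows: the events `a blue separating arc exists in the $j$-th ring' are supported on disjoint hexagons and hence are mutually independent, so the failure of $\mathcal A(a)$ requires the failure of all $\Gamma \log n$ rings between $a$ and $A_n^\square$, yielding $\mathbb P(\mathcal A(a)^c) \leq (1-\vartheta)^{\Gamma \log n} = n^{-\lambda}$ with $\lambda = -\Gamma \log(1 - \vartheta) > 0$; FKG then delivers the intersected event over the four marked points. Item (\ref{hunif}), uniformity in $N \geq n$, is immediate because the entire geometric construction is frozen with $\Omega_n^\square$, while all probabilistic inputs are scale--invariant RSW bounds on topological rectangles whose smallest dimension is at least $n^{-1}$, which is macroscopic in units of the finer scale $N^{-1}$.
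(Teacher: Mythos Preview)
Your approach diverges from the paper's at the crucial point, and the divergence is where the real difficulty lies. You build the Harris rings from Euclidean annuli $\{r_j \le |w-z|\le r_{j+1}\}$ with a fixed ratio $\rho$, intersected with $\Omega_n^\square$; the paper does essentially the opposite. It starts from a segment near $\mathcal D_\Delta$ and \emph{slides inward} toward $z$ one lattice layer at a time, declaring the next Harris segment only when the yellow crossing probability between the current segment and the evolving front first lands in $(\vartheta,1-\vartheta)$. This percolation--driven stopping rule is precisely what forces bounded aspect ratio (via Proposition~\ref{btt}): were the fragment too long and thin, the crossing probability would already be out of range. Even so the paper needs substantial repair machinery---the $Q$--construction (Lemma~\ref{er}) trims segments whose diameter is too large relative to the separation $J$, and a backsliding procedure (Lemma~\ref{slid}, Proposition~\ref{er2}) handles discontinuous drops in crossing probability when the front emerges from a tunnel mouth. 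Your Euclidean shells have no analogous mechanism: in a fjord, $\{r_j\le |w-z|\le r_{j+1}\}\cap\Omega_n^\square$ can be a topological rectangle of \emph{unbounded} aspect ratio, and then neither the RSW circuit bound nor the hard--way bound yields a uniform $\vartheta$. You correctly flag this as ``the principal obstacle'' but the proposed fix (drifting centers along an interior chain) is not an argument; carrying it out would amount to reconstructing the sliding/effective--region/backsliding scheme of Section~\ref{SBT}.

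Two secondary gaps are worth naming. Item~\ref{hbox} is a \emph{deterministic} geometric statement about the existence of a connected chain of full boxes; your argument (``with overwhelming probability the full boxes form a connected aggregation\ldots duality'') treats it as a random event, which is a category error. The paper obtains it in Proposition~\ref{bcs} from the way the $R$--construction positions $Y_F^{(b)}$ and from the separation of the blue boundaries at scale $2^{-r}J$ forced by the crossing--probability constraint. For item~\ref{hseal2}, a blue separating path in a ring contributes to $\mathcal A(a)$ only if its endpoints lie on $[D_n^\square,A_n^\square]$ and $[A_n^\square,B_n^\square]$; you assume this, but the paper needs a separate argument (Lemma~\ref{ped}, invoking convergence to Cardy's formula in auxiliary conformal rectangles) to show that all but a bounded number $v_A$ of Harris segments have endpoints on the correct boundary arcs.
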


\begin{proof}
The constructions required for the establishment of this theorem is the content of Section \ref{SBT}.  
That there exists at least of order $\log n$ such neighborhoods follows from the fact that each point on $\partial \Omega_n^\square$ is a distance at least $\Delta$ from $\mathcal D_\Delta$ and so proceeding ``directly'' towards $\mathcal D_\Delta$ and increasing the scale each time by the maximum allowed while fixing the aspect ratio already leads to of order $\log n$ such neighborhoods.

As for the various statements, items 1, 2 are consequences of the full Harris construction (see Subsection \ref{fcont}); item 3 follows from Lemma \ref{ped} and item 4 is a direct consequence of the fact that at criticality, crossing probabilities of rectangles with bounded aspect ratios remain bounded away from 0 and 1 uniformly in lattice spacing.  
\end{proof}

Let us start with the quantification of the ``distance'' between the corresponding marked points of $\Omega_n^\vardiamond$ and $\Omega_n^\square$: 

\begin{prop}\label{pts}
$B_n^\vardiamond$ is in the Harris system stationed at $B_n^\square$.  Moreover, there exists some $\kappa > 0$ such that there are at least $\kappa \cdot \log n$ Harris rings from this Harris system which enclose $B_n^\vardiamond$.  Similar statements hold for $C_n^\vardiamond, D_n^\vardiamond$.
\end{prop}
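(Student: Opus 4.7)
The plan is to argue by contradiction, combining the Cauchy-integral estimate with a standard half-plane one-arm bound from critical percolation: if $B_n^\vardiamond$ were too far from $B_n^\square$ in the Harris metric, then $S_n^\square(B_n^\vardiamond)$ would have to sit a power of $n$ away from the corner value $S_n^\square(B_n^\square) = 1$, contradicting the fact that by construction $F_n^\square(B_n^\vardiamond) = 1 - n^{-a_4}$ and $F_n^\square \approx S_n^\square$ by Lemma \ref{ci}.

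\textit{Step 1 (upper bound at $B_n^\vardiamond$).} I would first verify that $B_n^\vardiamond$ sits at Euclidean distance at least $n^{-a_5}$ from $\partial \Omega_n^\square$ so that Lemma \ref{ci} applies. By Proposition \ref{t1}, $F_n^\square(\partial \Omega_n^\oo)$ lies within $n^{-a_3}$ of $\partial \mathbb T$, and since $a_3 > a_4$ this image curve sits outside $\mathbb T^\vardiamond$, yielding $\Omega_n^\vardiamond \subseteq \Omega_n^\oo$. Invoking Lemma \ref{ci} then gives $|S_n^\square(B_n^\vardiamond) - F_n^\square(B_n^\vardiamond)| \lesssim n^{-\beta}$, and combining with $F_n^\square(B_n^\vardiamond) = 1 - n^{-a_4}$, $S_n^\square(B_n^\square) = 1$, and the barycentric decomposition $S_n^\square = S_B + \tau S_C + \tau^2 S_D$ (with $S_B + S_C + S_D \equiv 1$ from values in $\mathbb T$), projection onto the real axis delivers the key bound $1 - S_B(B_n^\vardiamond) \lesssim n^{-a_4}$.

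\textit{Step 2 (contradictory lower bound).} Suppose for contradiction that fewer than $\kappa \log n$ Harris rings enclose $B_n^\vardiamond$, with $\kappa > 0$ to be chosen small below. Since the ring $\infty$-diameters grow geometrically between $n^{-1}$ and the macroscopic scale $\Delta$ (item \ref{hhss} of Theorem \ref{hreg}), this forces $d := \mathrm{dist}(B_n^\vardiamond, B_n^\square) \gtrsim n^{-\kappa}$. By color duality, $1 - S_B(B_n^\vardiamond) = \mathbb P(\text{blue dual path from } B_n^\vardiamond \text{ to the arc } (C_n^\square, D_n^\square))$, and I would construct such a path by concatenating three ingredients: free bulk annular blue crossings around $B_n^\vardiamond$ at dyadic scales up to $d/2$, half-plane annular blue crossings (with $\partial \Omega_n^\square$ on the $B_n^\square$-side) from scale $d$ out to a macroscopic scale, and a final uniform-probability RSW connection to $(C_n^\square, D_n^\square)$. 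Each per-annulus event has uniformly positive probability by the same RSW inputs that underwrite the Harris systems themselves (items \ref{hhss} and \ref{hunif} of Theorem \ref{hreg}); the standard half-plane one-arm consequence is
\[ 1 - S_B(B_n^\vardiamond) \gtrsim d^{\eta} \gtrsim n^{-\kappa \eta} \]
for some universal $\eta > 0$ depending only on the model.

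\textit{Step 3 (conclusion).} Choosing $\kappa < a_4/\eta$, the lower bound of Step 2 strictly exceeds the upper bound of Step 1 for all $n$ sufficiently large, a contradiction. Therefore $B_n^\vardiamond$ must be enclosed by at least $\kappa \log n$ Harris rings and in particular lies within the Harris system at $B_n^\square$. The corresponding statements at $C_n^\vardiamond$ and $D_n^\vardiamond$ follow by the identical argument with $S_C$ and $S_D$ replacing $S_B$ (and vertices $\tau,\tau^2$ replacing $1$). The main obstacle is Step 2: although a half-plane one-arm bound is a standard RSW consequence, care is required to arrange the free/half-plane transition of the annular construction near scale $d$ and to ensure the terminal blue connection reaches $(C_n^\square, D_n^\square)$ without colliding with the other marked boundary points or their own Harris structures; this bookkeeping is precisely what the Harris-systems framework is designed to support.
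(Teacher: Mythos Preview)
Your overall strategy---compare the Cauchy--integral estimate (forcing $S_B^\square(B_n^\vardiamond)$ close to $1$) against a lower bound on $1-S_B^\square(B_n^\vardiamond)$ coming from a blue obstruction, and derive a contradiction---is exactly the paper's. The gap is in how you implement Step~2.

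First, the conversion ``fewer than $\kappa\log n$ enclosing rings $\Rightarrow d:=\mathrm{dist}(B_n^\vardiamond,B_n^\square)\gtrsim n^{-\kappa}$'' is not supported by item~\ref{hhss} of Theorem~\ref{hreg}. That item only says that within a single ring the segment diameters are comparable to the separation $J$; it says nothing about monotone geometric growth of $J$ across rings, and in an irregular domain the scales can oscillate. Second, and more seriously, even granting a Euclidean distance bound, your half--plane annular construction is precisely the device that fails for the general domains treated here: near $B_n^\square$ the boundary need not look like a half--plane at any scale (fjords, pinches), and $B_n^\vardiamond$ need not be at distance $\sim d$ from \emph{all} of $\partial\Omega_n^\square$ (only from $B_n^\square$), so your ``free bulk'' annuli up to scale $d/2$ may not fit inside $\Omega_n^\square$. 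The entire purpose of introducing Harris systems is that Euclidean annular arguments do not transfer to such domains.

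The paper avoids Euclidean distance altogether. It takes the last Harris ring separating $B_n^\vardiamond$ from $B_n^\square$, plants a blue path there (probability $\geq\vartheta$), and then propagates it outward through the remaining rings to $[C_n^\square,D_n^\square]$ using the box tiling of each ring (item~\ref{hbox}); since the box scales of adjacent rings differ by a bounded factor, each ring is crossed at uniform cost $e^{-V}$. If only $\gamma\log n$ rings lie between $B_n^\vardiamond$ and $\mathcal D_\Delta$, the blue obstruction occurs with probability $\gtrsim n^{-\gamma V}$, forcing $\gamma\geq a_4/V$. Your argument can be repaired by replacing the Euclidean annuli with the Harris rings themselves; the ``half--plane one--arm'' input you invoke is then exactly the per--ring crossing bound already packaged in Theorem~\ref{hreg}.
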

\begin{proof}
The argument that $B_n^\vardiamond$ is indeed in the Harris system stationed at $B_n^\square$ and the argument that there are many Harris rings enclosing $B_n^\vardiamond$ are essentially the same.

\begin{figure}[ ]
\vspace{.15 in}
\centering
\includegraphics[scale=.27]{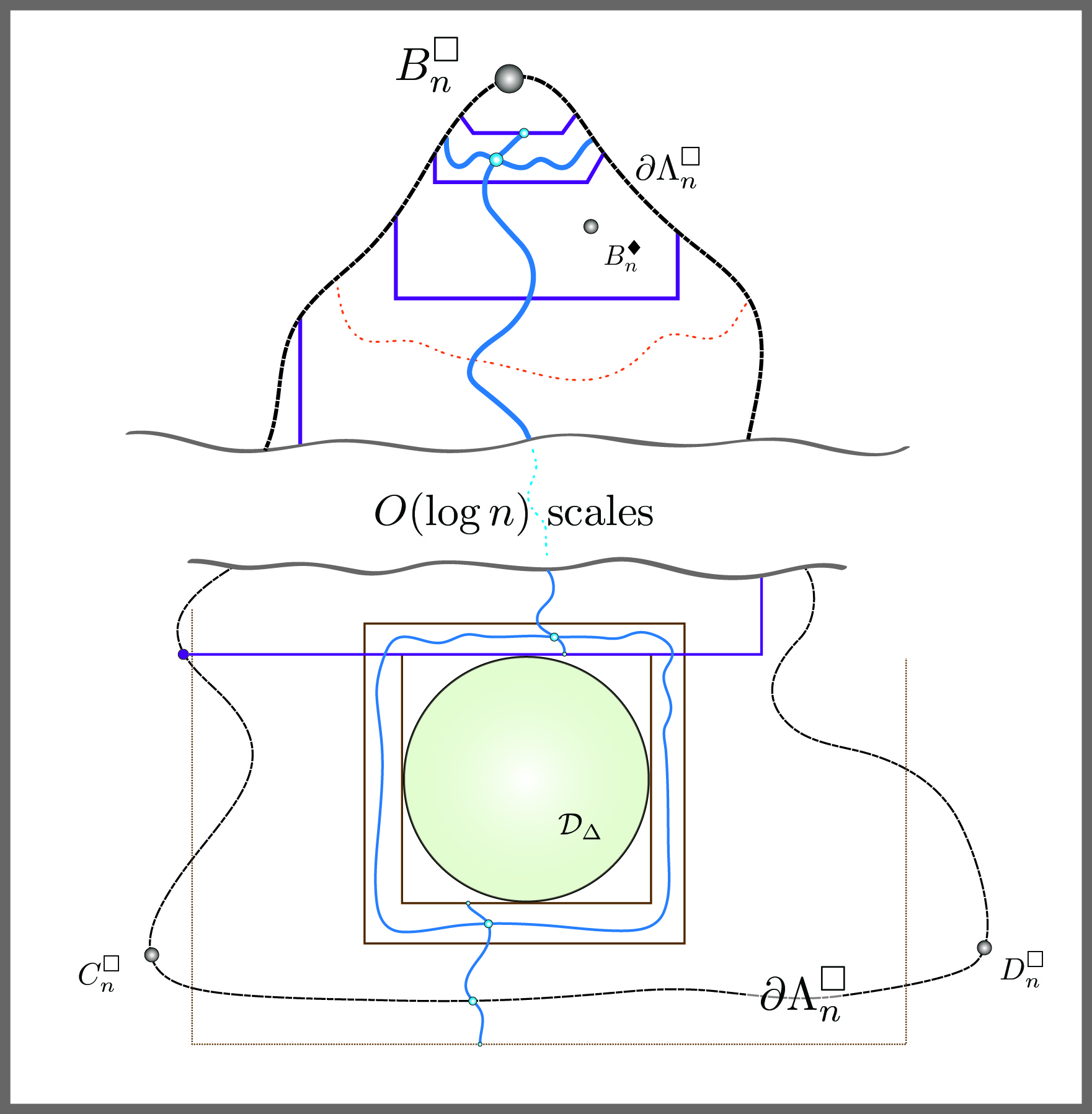}
\begin{changemargin}{.55in}{.55in}
%\flushright
\caption{{\footnotesize Preliminary circuit element ``inside'' $B_{n}^{\vardiamond}$ plus connection to 
$[C_n^{\square},D_n^{\square}]$ boundary -- first via intermediate scales then to the vicinity of $\mathcal D_{\Delta}$ and the rest by large--scale events -- prevents the occurrence of of the yellow circuit described.  If there are fewer than $\gamma\log n$ intermediate scales then this circuit would have substantive probability.}}
\label{XAB}
\end{changemargin}
\vspace{-.3 in}
\end{figure}

First we have that by Lemma \ref{ci} and Definition \ref{t2} that e.g., $|S_B^\square(B_n^\vardiamond)| \gtrsim 1 - n^{-a_4} - n^{-\beta} \gtrsim 1 - n^{-a_4}$. (Recall that $\beta > a_3 > a_4$ and $S_B^\square(B_n^\vardiamond)$ is the probability of a yellow crossing from $(B_n^\square, C_n^\square)$ to $(D_n^\square, B_n^\square)$ separating $B_n^\vardiamond$ from $(C_n^\square, D_n^\square)$.)  On the other hand, let us consider the ``last'' Harris ring separating $B_n^\vardiamond$ from $B_n^\square$ which forms a conduit between $[D_n^\square, B_n^\square]$ and $[B_n^\square, C_n^\square]$, c.f., Theorem \ref{hreg}, item \ref{hseal2}; we may enforce a crossing in this conduit with probability $\vartheta$ (as in Theorem \ref{hreg}) and then via a box construction and a ``large scale'' crossing (as appears below in the proof of Lemma \ref{dp}) the said crossing can be connected to $[C_n^\square, D_n^\square]$ in blue.  This construction procedure is illustrated in Figure \ref{XAB}.  Therefore, if the number of Harris rings enclosing $B_n^\vardiamond$ were \emph{less} than $\gamma \cdot \log n$, then there would be some $V > 0$ such that the journey from the vicinity of $B_n^\vardiamond$ to $[C_n^\square, D_n^\square]$ can occur at a probabilistic cost in excess of $n^{-\gamma V}$. 

Since such a blue connection renders a yellow version of the event $\mathbb S_B^\square(B_n^\vardiamond)$ impossible, we conclude  that there must be more than $a_4/V$ Harris rings enclosing $B_n^\vardiamond$, for $n$ sufficiently large.  Similar arguments yield the result also for $C_n^\vardiamond, D_n^\vardiamond$.  
\end{proof}

More generally, we have the following description of the distance between $\partial \Omega_n^\square$ and $\partial \Omega_n^\vardiamond$:

\begin{lemma}\label{dp}
Let $s \in \partial \Omega_n^\vardiamond$ and $z \equiv z(s)$ the point on $\partial \Omega_n^\square$ which is closest to $s$ (in the Euclidean distance).  Then there exists some $\kappa > 0$ such that in the Harris system stationed at $z$, there are at least $\kappa \cdot \log n$ Harris rings that enclose $s$.
\end{lemma}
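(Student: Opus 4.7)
My plan is to mirror the argument of Proposition \ref{pts}, adapted to the generic situation in which $s \in \partial \Omega_n^\vardiamond$ is arbitrary and $z$ is its Euclidean closest point on $\partial \Omega_n^\square$. First, by Lemma \ref{ci} and Definition \ref{t2}, $S_n^\square(s)$ lies within $n^{-a_4} + n^{-\beta} \lesssim n^{-a_4}$ of $\partial \mathbb T$: indeed $|S_n^\square(s) - F_n^\square(s)| \lesssim n^{-\beta}$, and $F_n^\square(s) \in \partial \mathbb T^\vardiamond$ is by construction within $n^{-a_4}$ of $\partial \mathbb T$. Consequently at least one of the three components $S_B^\square(s), S_C^\square(s), S_D^\square(s)$ must be $\lesssim n^{-a_4}$, and by the cyclic symmetry of these three components I may assume without loss of generality that $S_D^\square(s) \lesssim n^{-a_4}$, i.e.\ that $S_n^\square(s)$ is close to the edge $\{S_D = 0\}$ of $\partial \mathbb T$.

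Next I would argue by contradiction, supposing that fewer than $\kappa \log n$ Harris rings from the system stationed at $z$ enclose $s$, for some small $\kappa > 0$ to be chosen. Mimicking the construction depicted in Figure \ref{XAB} from Proposition \ref{pts}, I would build a yellow path realizing the event $\mathbb S_D^\square(s)$ by (i) a ``preliminary'' yellow crossing inside an appropriate Harris ring separating $s$ from $z$ (of probability at least $\vartheta$, invoking color-exchange symmetry), (ii) the box construction of Theorem \ref{hreg} item \ref{hbox} to patch this crossing across cascaded scales through the Harris rings that do not enclose $s$, and (iii) large-scale RSW-type yellow crossings extending to the distant boundary arcs $[C_n^\square, D_n^\square]$ and $[D_n^\square, B_n^\square]$. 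The number of successive scales required in the cascade is controlled by $\kappa \log n$, so by the uniform RSW and Harris ring lower bounds the total probability is at least $n^{-\kappa V}$ for some constant $V > 0$ depending only on $\vartheta$ and the RSW constants. Combined with $\mathbb P(\mathbb S_D^\square(s)) = S_D^\square(s) \lesssim n^{-a_4}$, the contradiction $n^{-\kappa V} \lesssim n^{-a_4}$ forces $\kappa V \geq a_4$, which yields the claim with $\kappa := a_4/V$.

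The principal technical obstacle is the geometric construction in the second step. When $F_n^\square(s)$ is near a vertex of $\mathbb T^\vardiamond$, two components of $S_n^\square(s)$ are simultaneously small and the corresponding $z$ is near a corner of $\partial \Omega_n^\square$, in which case the argument reduces to the scheme of Proposition \ref{pts}. In the generic case, however, $z$ lies in the interior of a boundary arc of $\partial \Omega_n^\square$, so the Harris rings at $z$ have both endpoints on this single arc rather than straddling a corner; one must then arrange the cascade of box constructions and RSW crossings so that the yellow path wraps around $s$ and separates it from all of $[B_n^\square, C_n^\square]$, which is considerably more delicate than the corner case. Verifying that the polynomial-in-$n$ probability estimate with exponent proportional to $\kappa$ survives in this generality, uniformly in lattice spacing (by Theorem \ref{hreg} item \ref{hunif}), is the crux of the argument.
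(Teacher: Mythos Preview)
Your overall strategy matches the paper's: assume fewer than $\kappa\log n$ rings enclose $s$, build a percolation event realizing (say) $\mathbb S_D^\square(s)$ with probability at least $n^{-\kappa V}$, and contradict $S_D^\square(s)\lesssim n^{-a_4}$. However, the specific construction you propose has a genuine gap.

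Your seed in step~(i) is a crossing inside a Harris ring \emph{between $s$ and $z$}. When $z$ lies in the interior of an arc, say $z\in[B_n^\square,C_n^\square]$, such a crossing has both endpoints on $[B_n^\square,C_n^\square]$ and separates $s$ only from the small portion of that arc near~$z$. To realize $\mathbb S_D^\square(s)$ you must separate $s$ from \emph{all} of $[B_n^\square,C_n^\square]$, which forces the eventual path to wrap around $s$ on both sides before reaching $[C_n^\square,D_n^\square]$ and $[D_n^\square,B_n^\square]$. Your cascade in step~(ii) does not do this, and you yourself note in the last paragraph that this ``considerably more delicate'' issue is unresolved. (There is also a directional slip: the rings you traverse outward toward $\mathcal D_\Delta$ are precisely those that \emph{do} enclose $s$, of which there are $\le\kappa\log n$; the phrase ``rings that do not enclose $s$'' is inconsistent with your cost estimate $n^{-\kappa V}$.)

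The paper's resolution is to abandon the Prop.~\ref{pts} seed entirely and instead start \emph{at $s$}. Using the box grid of the Harris ring containing $s$ (Theorem~\ref{hreg}, item~\ref{hbox}), one first shows---via a three-case analysis that crucially exploits $z$ being the \emph{Euclidean-closest} boundary point---that $s$ sits in a full box of the main percolating cluster of boxes. One then encircles $s$ with a yellow ring in the eight neighboring boxes and runs a \emph{double-bond} construction (two disjoint yellow paths) outward through the $\le\kappa\log n$ enclosing rings to $\mathcal D_\Delta$, finally splitting the two strands to the two required boundary arcs. The encircling ring plus the two disjoint strands is exactly what supplies the separation topology that your single-crossing seed cannot. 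You are also missing the preliminary step of ruling out that $s$ lies outside the Harris system at $z$ altogether, which the paper handles first by showing all three $S$-components would then be of order unity.
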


\begin{figure}[ ]
%\vspace{.15 in}
\centering
\includegraphics[scale=.27]{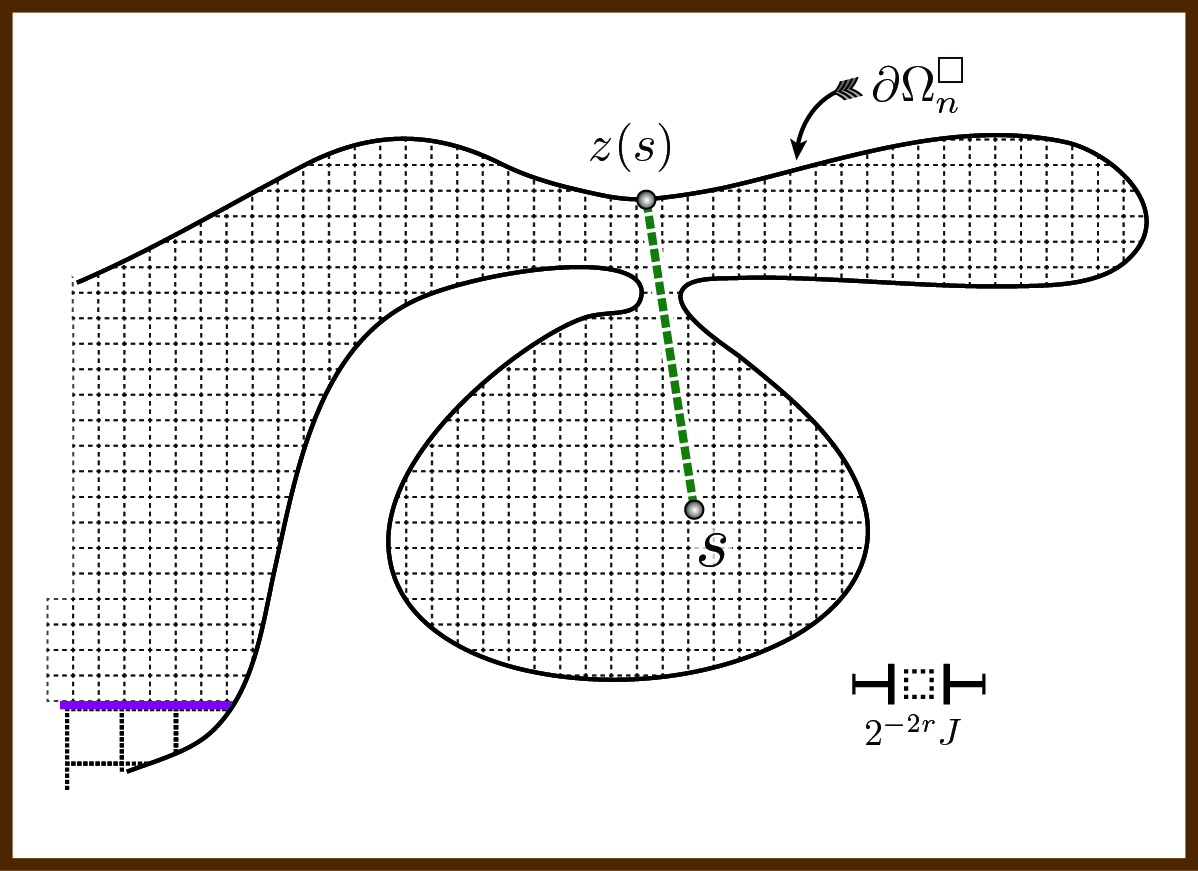}
\begin{changemargin}{.55in}{.55in}
%\flushright
\caption{{\footnotesize Case 3 of Lemma \ref{dp}:  if the box containing $s$ is separated from the main percolating cluster of full boxes (of scale $2^{-2r} \cdot J$) associated with its ring by a partial box then, necessarily, $z$ could not be the point on 
$\partial \Omega_n^\square$ which is closest to $s$.}}
\label{VFA}
\end{changemargin}
\vspace{-.3 in}
\end{figure}

\begin{proof}
Let us set $\lambda := \mbox{dist}(s, z)$.  First, logically speaking, we must rule out the possibility that $s$ is outside the Harris system stationed at $z$ altogether: if this were true, then it would  imply that $\mbox{dist}(s, \partial \Omega_n^\square) = \lambda > \frac{1}{2}\Delta$ (since Harris circuits plug into $\partial \Omega_n^\square$ the point $s$ can only be outside the Harris system at $z$ altogether if it is ``beyond'' the last Harris segment which parallels $\partial \mathcal D_\Delta$; see Theorem \ref{hreg}) which then readily implies that all of the $S$--functions are of order unity: indeed, in this case $S_B^\square(s), S_C^\square(s)$ and $S_D^\square(s)$ can all be bounded from below by large scale events of order unity  (consider e.g., the crossing of a suitable annulus whose aspect ratio is order unity with $s$ on the boundary of the inner square and the outer square touching $\partial \Omega_n^\square$ (from inside $\Omega_n^\square$) together with yet another couple of crossings from the inner square of this annulus to a larger rectangle which encloses all of $\Omega_n^\square$) which would place $s$ well away from the boundary of $\Omega_n^\vardiamond$ by Definition \ref{t2} and Lemma \ref{ci}.  Thus $s$ is in a Harris ring of $z$.

If the separation -- measured in number of Harris rings -- between $s$ and $\mathcal D_\Delta$ is not so large, then we will show that $|S_n^\square(z)|$ is larger than a small inverse power of $n$.  We will accomplish this by constructing configurations which lead to the occurrence of all three events corresponding to $S_B^\square, S_C^\square, S_D^\square$ with sufficiently large probability.  To this end we will make detailed use of the Harris system. 

Let $J$ denote the separation distance of the Harris segments which form the ring containing $s$ and let $r > 0$ be as given in Theorem \ref{hreg}.
Now note that if the statement of the lemma were false, then there would be an abundance of Harris rings separating $z$ from $s$, which will enable us to construct a path ``beneath'' $s$ to yield the events $\mathbb S_B^\square, \mathbb S_C^\square, \mathbb S_D^\square$.  Consider the boxes of size $2^{-2r} \cdot J$ which grid the ring containing $s$.  Let us observe that there are three cases: 1) the \emph{main} type, $s$ is contained in a full box which is connected to the cluster which percolates through the ring (see Theorem \ref{hreg}, item \ref{hbox}); 2) the \emph{partial} type, meaning that $s$ is in a partial box, i.e., a box intersected by $\partial \Omega_n^\square$; 3) $s$ is in a full box which is separated from the cluster of main types of percolating boxes by a partial box.  

Let us rule out the possibility of 2) and 3).  Case 2) is impossible since it implies that $\mbox{dist}(s, z) = \mbox{dist}(s, \partial \Omega_n^\square) \leq 2^{-2r} \cdot J$ which (see Theorem \ref{hreg}, items \ref{hbox} and \ref{hhss}) necessarily implies that $z$ and $s$ are in the \textit{same} ring.  But, supposing they do reside in the same ring then with probability in excess of (some constant times) $1 - n^{-\Gamma}$, with $\Gamma$ as in Theorem \ref{hreg}, the occurrence or not of the events contributing to $S_B^\square, S_C^\square, S_D^\square$ would be the same for both $s$ and $z$ (c.f., the proof of Proposition \ref{s2} below).  Then by Lemma \ref{ci} and Definition \ref{t2}, it would the case that $|S_n^\square(z) - S_n^\square(s)| \gtrsim n^{-a_4} - n^{-\beta}$, which is a contradiction if $a_4, \beta$ are appropriately chosen relative to $\Gamma$. 

Similar reasoning shows that 3) is also not possible: indeed, since $z$ is the closest point to $s$, $z$ and $s$ must lie along a straight line segment which lies in $\Omega_n^\square$ and this segment must pass through the partial box in question (i.e., the ``bottleneck''; we emphasize here that we are considering the Harris system centered \emph{at} $z$) which separates $s$ from \emph{the} connected component of boxes which percolate through the ring.  From previous considerations regarding $2^{-2r}J$ (the scale of the boxes) versus $\mbox{dist}(s, z)$, it is clear that there is a point on $\partial \Omega_n^\square$ within this partial box which is closer to $s$ than $z$, a contradiction.  These considerations are illustrated in Figure \ref{VFA}.

Thus, we find $s$ in the main percolating component of boxes.  For convenience, we focus on the sub--case where the box containing $s$ is separated from 
$\partial\Omega_{n}^{\square}$
by at least one layer of full boxes.  Indeed, the complementary sub--cases are easily handled by arguments similar to those which dispensed with cases 2) and 3).

We shall now proceed to construct, essentially by hand, any of the events 
$\mathbb S_{B}^{\square}(s)$, 
$\mathbb S_{C}^{\square}(s)$
or $\mathbb S_{D}^{\square}(s)$
corresponding to the functions $S_B^\square, S_C^\square, S_D^\square$, respectively, with ``unacceptably large'' probability.

It is understood that the constructions that follow utilize the main body of boxes percolating through a given Harris ring fragment, as detailed in Theorem \ref{hreg}, item \ref{hbox}.  Ultimately we will be constructing two (disjoint) paths.
E.g., for the $\mathbb S_{B}^{\square}(s)$ event, one path from the vicinity of $s$ to the $[B^\square_n, C^\square_n]$ boundary and the other from the vicinity of $s$ to the $[D^\square_n, B^\square_n]$ boundary.  While not strictly necessary, it is slightly more convenient to construct the ``bulk'' of both paths at once.  Therefore, we shall undertake a 
\textit{double bond} construction.
For further convenience, we will base our construction on
$3\times 1$ bond events which will be described in the next paragraph.

We remark, again, that arguments of this sort have appeared before, e.g., at least as far back as \cite{accfr}, so we will be succinct in our descriptions.  The events are described as follows:  let us assume, for ease of exposition, that three neighboring boxes form a horizontal 
$3\times 1$ rectangle.  The \textit{bond event} -- in yellow -- would then consist
of two disjoint left--right yellow crossings of the 
$3\times 1$ rectangle together with two disjoint top--bottom yellow crossings in each of the outer two squares, 
as is illustrated in Figure \ref{WLM},
it is seen that if a pair of such rectangles overlap on an end--square, and the bond event occurs for both of them, then, regardless of the orientations, there are two disjoint yellow paths which transmit
from the beginning of one to the end of the other.  I.e., these ``bonds'' have the same connectivity properties as the bonds of
$\mathbb Z^{2}$ and provide us with double paths.  

Starting with the square containing $s$ we may suppose there is (or construct) a yellow ring in the eight boxes immediately surrounding and encircling this square.  Via the bond events just described, we connect this encircling ring  to the outward boundary of the Harris annulus to which $s$ belongs.  
Each of these events -- which are positively correlated --
incurs a certain probabilistic cost.  However, 
it is observed, with emphasis, that since the \emph{relative} scales of the Harris ring and the bonds used in the construction are fixed 
independent of the actual scale, the cost may be bounded by a number independent of the \emph{actual} scale.  

Similarly, we may use the bonds to acquire a double path across the next (outward) ring and the two double paths may be connected to form a continuing double path by explicit use of a ``patch'' consisting of the smaller of the two bond types.  Again, since the ratio of scales of (boxes of) successive Harris rings are uniformly bounded above and below, the probabilistic cost 
does not depend on the actual scale.  
The procedure of double crossing via bond events and patches can be continued till the boundary of $\mathcal D_{\Delta}$ is reached; thereupon, 
treating $\mathcal D_{\Delta}$ and its vicinity as an annulus in its own right, the two paths can be connected to separate boundaries at an additional cost of order unity.  

Now let us assume for the moment that $s \in [B_n^\vardiamond, C_n^\vardiamond]$, so that by Lemma \ref{ci} and Definition \ref{t2} it is the case that $S_D^\square(s) \leq C \cdot(n^{-a_4} + n^{-\beta})$ for some constant $C> 0$, so denoting by $\text{e}^{-V}$ (for some $V > 0$) the uniform bound on the cost of one patch and one annular crossing via the double bonds, if 
$\kappa > 0$ is sufficiently small so that
$\text{e}^{-\kappa V\log n} = n^{-\kappa V} > C\cdot (n^{-a_4} + n^{-\beta})$, then it is not possible that $s \in [B_n^\vardiamond, C_n^\vardiamond]$.  By cyclically permuting the relevant $B, C, D$ labels, the cases where $s \in [C_n^\vardiamond, D_n^\vardiamond]$ and $s \in [D_n^\vardiamond, B_n^\vardiamond]$ follow similarly.
\end{proof}

\begin{figure}[ ]
%\vspace{.15 in}
\centering
\includegraphics[scale=.28]{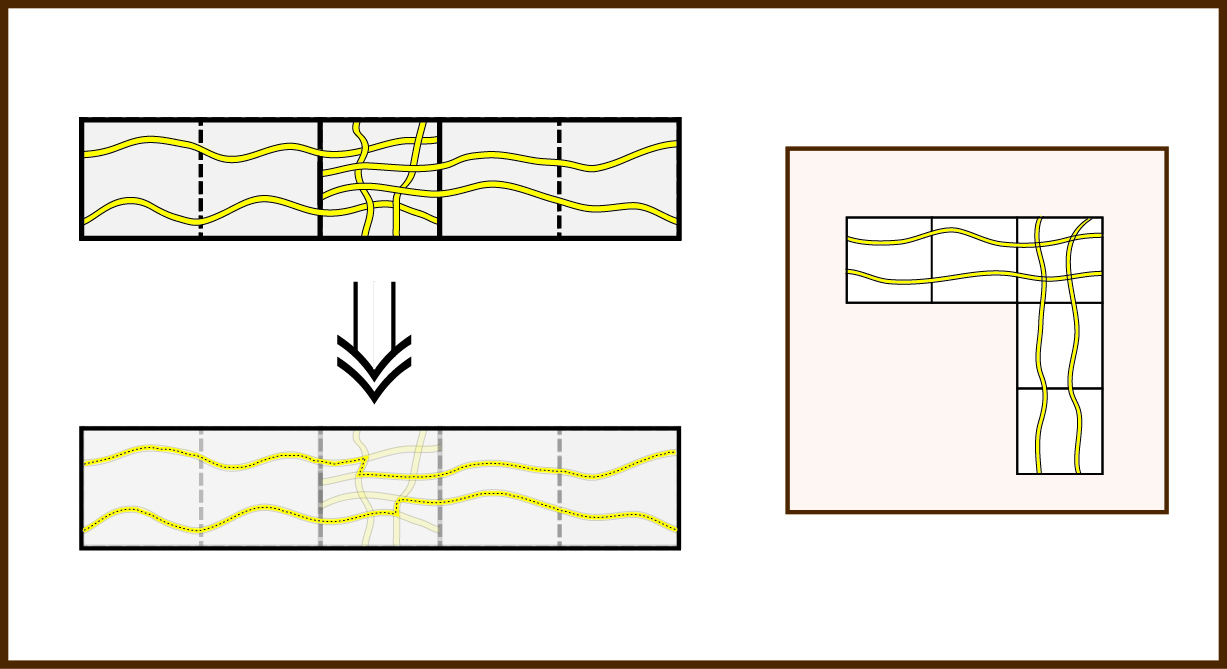}
\begin{changemargin}{.55in}{.55in}
%\flushright
\caption{{\footnotesize Two pairs of disjoint left--right crossings in successive overlapping $3\times1$ blocks together with two disjoint top--bottom crossings in the common square allow for the continuation of two disjoint paths.  Not all crossings described in the bond event are shown.  Note, as illustrated in the insert that, in the case of right angle continuations, the additional paths in the overlap block are superfluous.}}
\label{WLM}
\end{changemargin}
\vspace{-.3 in}
\end{figure}

The ensuing arguments will require an auxiliary point somewhat inside $\Omega_n^\vardiamond$, which we will denote $A_n^\diamondsuit$:

\begin{defn}\label{admd}
Let $\Omega_n^\square, \Omega_n^\vardiamond$, etc., be as described.  Let $\eta > 0$ be a number to be specified in Proposition \ref{dia2}.  Then we let $A_n^\diamondsuit$ be a point in the Harris ring of the Harris system stationed at $A_n^\square$ which is separated from $\mathcal D_\Delta$ by $\eta \cdot \log n$ Harris segments.  Moreover, $A_n^\diamondsuit$ is in the center of a box which belongs to the connected component of the boxes which percolate through the relevant ring (see the description in Theorem \ref{hreg}, item 1) as in the proof of Lemma \ref{dp}.  
\end{defn}

\begin{prop}\label{dia2} 
There exists some $\eta > 0$ such that if $A_n^\diamondsuit$ is as in Definition \ref{admd}, then there exists some $\gamma > 0$ such that 
\begin{enumerate}
\item $|S_n^\square(A_n^\square) - S_n^\square(A_n^\diamondsuit)| \lesssim n^{-\gamma}$;
\item $|H_n^\square(A_n^\square) - H_n^\square(A_n^\diamondsuit)| \lesssim n^{-\gamma}$;

\end{enumerate}
In particular, with appropriate choice of $\gamma$, $A_n^\diamondsuit$ is strictly inside $\Omega_n^\vardiamond$.
\end{prop}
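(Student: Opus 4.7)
The plan is to treat the three assertions in order: the estimate on $S_n^\square$ will come from a Harris--shielding argument; the estimate on $H_n^\square$ will follow from exactly the same argument carried out uniformly at an auxiliary scale $N^{-1}\ll n^{-1}$ and then passing to the limit; and the topological claim that $A_n^\diamondsuit\in\Omega_n^\vardiamond$ will require a complementary RSW lower bound on the barycentric coordinate $S_C^\square(A_n^\diamondsuit)$.  For the $S_n^\square$ bound I would apply Theorem~\ref{hreg}, item~\ref{hseal2}, at $a=A_n^\diamondsuit$: since $A_n^\diamondsuit$ is separated from $\mathcal D_\Delta$ by $\eta\log n$ Harris segments, the hypothesis holds, and one obtains the sealing event $\mathcal A(A_n^\diamondsuit)$ with $\mathbb P(\mathcal A(A_n^\diamondsuit)^c)\lesssim n^{-\lambda}$, on which a monochromatic (say, blue) path surrounds both $A_n^\square$ and $A_n^\diamondsuit$ with endpoints on the two arcs $[A_n^\square,B_n^\square]$ and $[D_n^\square,A_n^\square]$ at $A_n^\square$.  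Because the yellow paths defining any of the events $\mathbb S_B^\square,\mathbb S_C^\square,\mathbb S_D^\square$ cannot cross this shield (by planar duality) and must have their endpoints on \emph{other} boundary arcs, both $A_n^\square$ and $A_n^\diamondsuit$ lie inside the shielded region and on the same side of every such yellow path.  Topologically this forces $\mathbf 1_{\mathbb S_X^\square(A_n^\square)}=\mathbf 1_{\mathbb S_X^\square(A_n^\diamondsuit)}$ on $\mathcal A(A_n^\diamondsuit)$ for each $X\in\{B,C,D\}$, whence
\[
|S_n^\square(A_n^\square)-S_n^\square(A_n^\diamondsuit)|\;\le\; 3\,\mathbb P(\mathcal A(A_n^\diamondsuit)^c)\;\lesssim\; n^{-\lambda},
\]
and we may take $\gamma\le\lambda$.

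For the second estimate I would view $\Omega_n^\square$ as a \emph{fixed} continuum domain and perform percolation at scales $N^{-1}\ll n^{-1}$.  By Theorem~\ref{hreg}, item~\ref{hunif}, the sealing--event bound is uniform in $N$, so the topological argument above gives $|S_{n,N}^\square(A_n^\square)-S_{n,N}^\square(A_n^\diamondsuit)|\lesssim n^{-\lambda}$ uniformly in $N\ge n$; passing to the limit $N\to\infty$ and using the convergence $S_{n,N}^\square\to H_n^\square$ provided by Cardy's formula on the fixed domain $\Omega_n^\square$ gives the claimed estimate on the conformal maps.  For the last clause, observe that $S_C^\square(A_n^\square)=0$ (since $A_n^\square$ lies on the arc $[D_n^\square,B_n^\square]$), so $S_n^\square(A_n^\square)$ sits on the edge of $\mathbb T$ opposite the vertex $\tau$ and $\mbox{dist}(S_n^\square(A_n^\diamondsuit),\partial\mathbb T)\ge (\sqrt 3/2)\,S_C^\square(A_n^\diamondsuit)$.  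A standard RSW plus bond--patch construction in the macroscopic portion of $\Omega_n^\square$ lying on the $C_n^\square$--side of $A_n^\diamondsuit$ produces a yellow crossing from $[B_n^\square,C_n^\square]$ to $[C_n^\square,D_n^\square]$ enclosing $A_n^\diamondsuit$ on that side; since the Harris--ring scales grow geometrically, the box in which $A_n^\diamondsuit$ sits has macroscopic diameter of order $n^{-c\eta}$ and this construction delivers $S_C^\square(A_n^\diamondsuit)\gtrsim n^{-c_1\eta}$ for some $c_1>0$.  Choosing $\eta$ small enough that $c_1\eta<\min(a_4,\beta)$ and combining with $|F_n^\square(A_n^\diamondsuit)-S_n^\square(A_n^\diamondsuit)|\lesssim n^{-\beta}$ from Lemma~\ref{ci} yields $\mbox{dist}(F_n^\square(A_n^\diamondsuit),\partial\mathbb T)\gg n^{-a_4}$, placing $F_n^\square(A_n^\diamondsuit)$ inside $\mathbb T^\vardiamond$ and $A_n^\diamondsuit$ strictly inside $\Omega_n^\vardiamond$.

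The hard part will be making the topological identification of indicators in the first paragraph fully rigorous: one must verify that the endpoints of the blue shield and of any yellow crossing path cannot interleave along $\partial\Omega_n^\square$ in a way that would let the yellow path pass between $A_n^\square$ and $A_n^\diamondsuit$.  This is a planar--duality argument in the simply--connected domain $\Omega_n^\square$, but the three events $\mathbb S_B^\square,\mathbb S_C^\square,\mathbb S_D^\square$ involve different pairs of endpoint arcs and each case must be checked separately.  The second delicate point is balancing the parameters: $\eta$ must be small enough that the RSW lower bound on $S_C^\square(A_n^\diamondsuit)$ outperforms the $n^{-a_4}$ margin defining $\mathbb T^\vardiamond$, yet still compatible with the Harris ring count in Definition~\ref{admd} and with the convergence rate $\gamma\le\lambda$ ultimately declared.
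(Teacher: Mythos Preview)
Your proposal is correct and follows essentially the same route as the paper: the sealing event $\mathcal A(A_n^\diamondsuit)$ from Theorem~\ref{hreg} forces equality of the indicator functions (the paper phrases this as $\mathbb E(|\mathbb I_{\mathbb S_n^\square(A_n^\square)}-\mathbb I_{\mathbb S_n^\square(A_n^\diamondsuit)}|\mid\mathcal A(A_n^\diamondsuit))=0$), the uniformity in $N$ plus convergence to Cardy's formula transfers this to $H_n^\square$, and a double--bond/RSW construction through the $\eta\log n$ outer rings gives the lower bound placing $A_n^\diamondsuit$ inside $\Omega_n^\vardiamond$. Your observation that only $S_C^\square(A_n^\diamondsuit)$ needs an explicit lower bound (the other two barycentric coordinates being of order unity by item~1 and the separation of $A$ from $B,D$) is a slight sharpening of the paper's phrasing, which speaks of lower--bounding ``any of the events associated with the $S$--functions''; both lead to the same conclusion via Lemma~\ref{ci}.
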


\begin{proof}
First let us establish item 1.  It is claimed that for any configuration in which the event 
$\mathcal A(A_n^\diamondsuit)$ -- of a blue circuit 
connecting $[D_{n}^{\square}, A_{n}^{\square}]$ to $[A_{n}^{\square}, B_{n}^{\square}]$
which surrounds both $A_n^\square$ and $A_n^\diamondsuit$ (as described in Theorem \ref{hreg}, item \ref{hseal2}) -- occurs, the indicator function of the yellow version of
$\mathbb S_{n}^{\square}(A_{n}^{\square})$ is \textit{equal} to that of 
$\mathbb S_{n}^{\square}(A_{n}^{\diamondsuit})$. Indeed, for the $S_C^\square$--component,
which always vanishes for $A_{n}^{\square}$,
the requisite event in yellow is directly obstructed
by the blue paths of $\mathcal A(A_n^\diamondsuit)$.  
As for the rest, for either of the differences in the $B$ or $D$ components to be non--zero, there must be a long yellow path separating
$A_{n}^{\square}$ from $A_{n}^{\diamondsuit}$ heading to a distant boundary,
but this separating path is preempted by the blue event $\mathcal A(A_n^\diamondsuit)$.  We may thus conclude that
\begin{equation}
\mathbb E(|\mathbb I_{\mathbb S_{n}^\square(A_{n}^{\square})} -  \mathbb I_{\mathbb S_n^\square(A_{n}^{\diamondsuit})}|\mid\mathcal A(A_n^\diamondsuit))  =  0
\end{equation}
(where $\mathbb I_{(\bullet)}$ denotes the indicator)
which together with Lemma \ref{dp} and Theorem \ref{hreg}, item \ref{hseal2} gives the result.

As for item 2, recalling the discussion near the end of the proof of Theorem \ref{s1}, 
we may consider 
$\Omega_{n}^\square$ to be a fixed \emph{continuum} domain
and, e.g., for $N \geq n$, the domain
$\Omega_{n,N}^\square$ to be its canonical approximation (together with appropriate approximations for the marked points $A_n^\square, B_n^\square$, etc.) for a percolation problem at scale 
$N^{-1}$.  We will consider the corresponding CCS--functions $S_{n, N}^\square$ on the domains $\Omega_{n, N}^\square$.  

Let us now argue that the arguments for item 1 persist, uniformly, for all $N$ sufficiently large.  First, it is emphasized that all the results follow from the occurrence of paths in each Harris ring, which has probability uniformly bounded from below.  We claim that this remains the case for percolation performed at scale $N^{-1}$.  Indeed, while the scales of the Harris rings were constructed existentially to ensure uniform bounds on crossings at scale $n^{-1}$, it is recalled that these rings are gridded  by boxes of scale $2^{-2r}$ \emph{relative} to the rings themselves (see Theorem \ref{hreg}, item 1).  Thence, using uniform probability crossings in squares/rectangles, etc., the necessary crossings can be constructed by hand as in e.g., the proof of Lemma \ref{dp}.

For the last statement, we invoke an argument similar to that in the proof of Lemma \ref{dp}.  Recapitulating the construction, we acquire a lower bound on the probability of occurrence of any of the events associated with the $S$--functions for $A_n^\diamondsuit$.  Finally, since $S_n^\square$ is close to $F_n^\square$ by Lemma \ref{ci} the latter of which is used to \textit{define} $\partial \Omega_n^\vardiamond$, with an appropriate choice of power of $n$ (i.e., $\gamma$) $A_n^\diamondsuit$ can be placed in the interior of $\Omega_n^\vardiamond$.
\end{proof}

\begin{prop}\label{s2}
There exists some $a_5 > 0$ such that 
\[ |F_n^\square(A_n^\diamondsuit) - S_n^\square(A_n^\square)| \leq n^{-a_5}.\]

\end{prop}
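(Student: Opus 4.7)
The plan is to apply the triangle inequality
\[
|F_n^\square(A_n^\diamondsuit) - S_n^\square(A_n^\square)|
\leq
|F_n^\square(A_n^\diamondsuit) - S_n^\square(A_n^\diamondsuit)|
+ |S_n^\square(A_n^\diamondsuit) - S_n^\square(A_n^\square)|
\]
and bound each of the two resulting terms by a (possibly different) negative power of $n$, then take the exponent $a_5$ of the proposition to be any positive constant strictly less than the smaller of the two. The second term is controlled directly by Proposition \ref{dia2} (item 1), which yields $|S_n^\square(A_n^\diamondsuit) - S_n^\square(A_n^\square)| \lesssim n^{-\gamma}$.

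For the first term the natural tool is Lemma \ref{ci}, which gives $|F_n^\square(z) - S_n^\square(z)| \lesssim n^{-\beta}$ at any point $z$ with $\mathrm{dist}(z,\partial\Omega_n^\square) > n^{-a_5'}$, where I write $a_5'$ for the constant called $a_5$ in the statement of Lemma \ref{ci}, in order to avoid clashing with the $a_5$ of the current proposition. The essential step is therefore to verify that $A_n^\diamondsuit$ lies sufficiently deep inside $\Omega_n^\square$. This is exactly the point of the placement prescribed in Definition \ref{admd}: $A_n^\diamondsuit$ is located in a Harris ring which is separated from $\mathcal{D}_\Delta$ by $\eta\log n$ rings, and hence is separated from $A_n^\square$ (and in fact from $\partial\Omega_n^\square$) by roughly $(\Gamma-\eta)\log n$ rings. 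By Theorem \ref{hreg} (item \ref{hhss}) the diameters of successive Harris segments can differ by at most a bounded multiplicative factor, so the scale of the ring containing $A_n^\diamondsuit$, and consequently the Euclidean distance from $A_n^\diamondsuit$ to $\partial\Omega_n^\square$, is bounded below by a quantity of the form $c_0\,n^{-\eta/\Gamma}$ for some $c_0 > 0$. Choosing $\eta$ small enough (it is a free positive parameter in Definition \ref{admd}) so that $\eta/\Gamma < a_5'$ then places $A_n^\diamondsuit$ within the range of validity of Lemma \ref{ci}, producing $|F_n^\square(A_n^\diamondsuit) - S_n^\square(A_n^\diamondsuit)| \lesssim n^{-\beta}$.

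Combining the two estimates yields the proposition with any $a_5$ strictly smaller than $\min(\beta,\gamma)$. The only step that requires genuine care is the quantitative distance estimate above: one must keep the constants $\eta$, $\Gamma$, $\beta$, $\gamma$ and $a_5'$ properly balanced, so that the value of $\eta$ selected here remains compatible with the constraints already imposed in Definition \ref{admd} and in the proof of Proposition \ref{dia2}. Once that calibration is in place, the result is a direct consequence of Lemma \ref{ci} together with Proposition \ref{dia2}.
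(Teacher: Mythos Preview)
Your proof is correct and follows the same approach as the paper: a triangle inequality splitting into a term controlled by Proposition~\ref{dia2} (item~1) and a term controlled by Lemma~\ref{ci}. The paper's own proof is a one--line invocation of exactly these two results. Your additional paragraph verifying the distance hypothesis of Lemma~\ref{ci} at $A_n^\diamondsuit$ is more than the paper spells out; note that this verification is in fact already implicit in the last sentence of Proposition~\ref{dia2}, which places $A_n^\diamondsuit$ strictly inside $\Omega_n^\vardiamond$, and by construction $\Omega_n^\vardiamond \subset \Omega_n^{\oo}$, the latter being precisely the set where Lemma~\ref{ci} applies. Your direct Harris--ring argument for the distance lower bound is fine in spirit, though the exact exponent $\eta/\Gamma$ is a heuristic; what the induction in Section~\ref{fcont} actually gives is that successive ring scales differ by a bounded multiplicative factor, so after $\eta\log n$ steps the scale is at least $c^{-\eta\log n}=n^{-\eta\log c}$ for some fixed $c>1$, which is all you need.
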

\begin{proof}
This follows immediately from Proposition \ref{dia2}, item 1 and Lemma \ref{ci}.  
\end{proof}

Finally, we will need a result concerning the conformal maps $H_n^\vardiamond$ and $H_n^\square$.  First we state a distortion estimate:

\begin{lemma}\label{distort}
Let $\epsilon > 0$ and let $K \subseteq \mathbb T$ be a domain whose boundary is a Jordan curve such that the sup--norm distance between $\partial K$ and $\partial \mathbb T$ is less than $\epsilon$.   We consider $K$ to be a conformal triangle with some marked points $K_B, K_C, K_D$ such that  $|K_B-1|<\epsilon$, $|K_C-\tau|<\epsilon$, $|K_D-\tau^2|<\epsilon$, and let $g_K$ denote the conformal map from $K$ to $\mathbb T$ mapping $(K_B, K_C, K_D)$ to $(1, \tau, \tau^2)$.  Then for $z \in K$ it is the case that
\[ |g_K(z) - z| \lesssim [\epsilon \cdot \log(1/\epsilon)]^{1/3}.\]
\end{lemma}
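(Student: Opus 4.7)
The plan is to bound $|g_K(z) - z|$ by controlling the boundary values of the holomorphic function $h(z) := g_K(z) - z$ on $\partial K$ and then applying the maximum modulus principle. For $z \in \partial K$, the image $g_K(z)$ lies on $\partial \mathbb T$ while $z$ itself is within $\epsilon$ of $\partial \mathbb T$; letting $\pi(z)$ denote a nearest--point projection of $z$ onto $\partial \mathbb T$, the decomposition $|g_K(z) - z| \leq |g_K(z) - \pi(z)| + \epsilon$ reduces matters to controlling the deviation of the conformal boundary correspondence $z \mapsto g_K(z)$ from the nearest--point projection. The three--point normalization $g_K(K_B) = 1$, $g_K(K_C) = \tau$, $g_K(K_D) = \tau^2$, with each of $K_B, K_C, K_D$ within $\epsilon$ of the corresponding vertex of $\mathbb T$, anchors this correspondence at the three corners and rules out any large rotational mismatch.

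The central step is to estimate the discrepancy between the conformal boundary correspondence and the nearest--point projection on each of the three open arcs delimited by the marked corners. I would proceed via a harmonic measure comparison: for $z$ on the arc of $\partial K$ from $K_B$ to $K_C$, the harmonic measure (seen from a fixed interior reference point) of the sub--arc $[K_B, z] \subset \partial K$ in $K$ equals that of $[1, g_K(z)] \subset \partial \mathbb T$ in $\mathbb T$ by conformal invariance. Since $K$ differs from $\mathbb T$ only in an $\epsilon$--neighborhood of the boundary, an Ahlfors distortion (or Beurling projection) estimate shows that harmonic measures of corresponding arcs in $K$ and $\mathbb T$ differ by at most $O(\epsilon \log(1/\epsilon))$, the logarithm being the standard Ahlfors/Beurling contribution when the perturbation reaches all the way out to the boundary.

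Transporting this harmonic--measure discrepancy back to Euclidean distance on $\partial \mathbb T$ is where the $1/3$ exponent emerges. Near each vertex of $\mathbb T$, the Schwarz--Christoffel representation shows that the conformal map from a half--plane onto $\mathbb T$ behaves like a cube root, since the interior angle is $\pi/3$; hence harmonic measure assigns weight proportional to the cube of Euclidean distance near the corners. A discrepancy of size $O(\epsilon \log(1/\epsilon))$ in harmonic measure therefore translates to a discrepancy of size $O((\epsilon \log(1/\epsilon))^{1/3})$ in Euclidean distance in the worst case, realized near the corners. On the open edges the correspondence is much better behaved (essentially Lipschitz), so this corner contribution dictates the overall rate.

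The main obstacle will be making the corner analysis simultaneously uniform and quantitative: the cube--root H\"older regularity is local to each vertex and must be patched with the smoother behavior on the open arcs between, while the harmonic measure estimate must be good enough to survive the cube root. Once the boundary bound $\sup_{z \in \partial K} |h(z)| \lesssim (\epsilon \log(1/\epsilon))^{1/3}$ is in hand, the maximum modulus principle applied to the holomorphic function $h$ on $K$ yields the claimed uniform interior estimate without further loss.
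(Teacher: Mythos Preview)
Your approach is correct in outline and genuinely different from the paper's. The paper does not work directly in $\mathbb T$: it transports the problem to the unit disk via a fixed conformal map $\phi:\mathbb T\to\mathbb D$ with $\phi(0)=0$, invokes Marchenko's theorem (cited from Warschawski \cite{W}) as a black box to obtain $|G_K(w)-w|\lesssim \epsilon\log(1/\epsilon)$ for the disk map $G_K$ after checking the normalization is approximately correct, and then pulls back to $\mathbb T$. The cube root appears only in this last step, because $\phi^{-1}$ has H\"older exponent $1/3$ at the three vertices.

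Your route --- bounding the boundary correspondence by a harmonic--measure comparison and then applying the maximum principle --- is essentially a direct proof of the same estimate without passing through the disk, and you have correctly located the source of the exponent $1/3$ in the Schwarz--Christoffel corner behavior. The one point to tighten is the harmonic--measure perturbation bound: the sharp $O(\epsilon\log(1/\epsilon))$ estimate for a sup--norm boundary perturbation is precisely the content of the Marchenko/Warschawski theory, not an immediate consequence of Beurling projection (which would give $O(\sqrt{\epsilon})$) or the Ahlfors distortion theorem as usually stated. So while your argument is self--contained in spirit, at this step you are effectively reproving Marchenko's theorem; the paper simply cites it. What your approach buys is transparency about where each factor comes from; what the paper's approach buys is brevity, since the disk result is classical and the transfer via $\phi$ is a two--line computation.
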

\begin{proof}

The result for the disk (without the power of 1/3) is a classical result going back to Marchenko (for a statement see \cite{W}, Section 3) and of course, we can transfer our hypotheses to the disk by applying a conformal map $\phi$, which maps $\mathbb T$ to the unit disk such that $\phi(0) = 0$. The map $\phi$ does not increase the distances, because it is smooth up to the boundary everywhere but at $1$, $\tau$, and $\tau^2$, where it behaves locally like $\epsilon^3$, which in fact only \emph{decreases} the distances. 

We are almost in a position to directly apply Marchenko's Theorem except for a few caveats.  First of all Marchenko's Theorem requires a certain geometric condition on the tortuosity of the boundary of $K$, which is manifestly satisfied under the assumption that $\partial K$ and the boundary of the \emph{triangle} are close in the \emph{sup--norm} distance.

Secondly, Marchenko's Theorem is stated for some map $f_K$ with $f_K(0) =0$ and $f'_K(0)>0$, and we have a possibly different normalization.  Specifically, we have some map $G_K: \phi(K) \rightarrow \mathbb D$ so that $\phi^{-1} \circ G_K \circ \phi = g_K$, so it suffices to check that $G_K$ has approximately the correct normalizations (indeed, the conformal self--map of the unit disc mapping a point $a$ to the origin takes the form ~$e^{i\theta}\cdot \left(\frac{z - a}{1 - \bar a z}\right)$).  

Since $\phi(0) = 0$ and $1 + \tau + \tau^2 = 0$ it is the case that $\phi^{-1}((1-\epsilon) \cdot \phi(K_B + K_C + K_D))$ is close to 0 and also close to $w:= \phi^{-1}((1-\epsilon) \cdot \phi(K_B)) + \phi^{-1}((1-\epsilon) \cdot \phi(K_C)) + \phi^{-1}((1-\epsilon) \cdot \phi(K_D))$; since it is also the case that $g_K(K_B) + g_K(K_C) + g_K(K_D)$ is close to 0, we have that $G_K (w)$ is close to 0.  So we now have that $G_K(z)$ is close to some $e^{i\theta}z$ for some fixed $\theta$. But since $\phi (K_B)$ is close to $\phi(1)$, and so $z_0:=\phi^{-1}((1-\epsilon)\phi(K_B))$ is close to both 1 and $e^{-i\theta} \cdot G_K(1)$, it follows that $|e^{i\theta}-1|\lesssim \epsilon \cdot \log(1/\epsilon)$.  

Finally, in transferring the result back to the triangle, the behavior near the vertices of the triangle requires us to replace the distances by their cube roots. 
\end{proof}

\begin{remark}
We remark that for our purposes, we can in fact avoid the fractional power: indeed, we shall only use this result at the point $A_n^\diamondsuit$, which we remind the reader is chosen to be in the Harris system stationed at $A_n^\square$ and by Lemma \ref{ped} we may assert that it is within a fixed small neighborhood of $A_n^\square$ and therefore outside fixed neighborhoods of the other marked points.  
\end{remark}

\begin{lemma}\label{s3}
There exists some $a_6 > 0$ such that for all $n$ sufficiently large, 
\[ |H_n^\vardiamond(A_n^\diamondsuit) - H_n^{\square}(A_n^{\diamondsuit})| \lesssim n^{-a_6}.\]
\end{lemma}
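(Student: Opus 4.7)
The strategy is to view $H_n^\vardiamond \circ (H_n^\square)^{-1}$ as a conformal map from $K := H_n^\square(\Omega_n^\vardiamond) \subseteq \mathbb T$ onto $\mathbb T$ and to invoke Lemma \ref{distort} (Marchenko). Writing $K_B := H_n^\square(B_n^\vardiamond)$, $K_C := H_n^\square(C_n^\vardiamond)$, $K_D := H_n^\square(D_n^\vardiamond)$, uniqueness of conformal maps identifies $H_n^\vardiamond \circ (H_n^\square)^{-1}$ with the conformal map of $K$ onto $\mathbb T$ that carries $(K_B, K_C, K_D)$ to $(1, \tau, \tau^2)$. Thus it suffices to verify, for some $\mu > 0$: (a) the sup--norm distance between $\partial K$ and $\partial \mathbb T$ is $\lesssim n^{-\mu}$, and (b) each of $K_B, K_C, K_D$ is at distance $\lesssim n^{-\mu}$ from its intended vertex.

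For (a), proceed via the continuum--percolation device used at the end of the proof of Theorem \ref{s1}: regard $\Omega_n^\square$ as a fixed \emph{continuum} domain and consider the CCS--functions $S_{n,N}^\square$ of its canonical approximation at scale $N^{-1} \leq n^{-1}$, for which $S_{n,N}^\square(s) \to H_n^\square(s)$ as $N \to \infty$ by Smirnov's theorem. The task is to establish, \emph{uniformly} in $N \geq n$, that $S_{n,N}^\square(s)$ lies within $n^{-\mu}$ of $\partial \mathbb T$ for every $s \in \partial \Omega_n^\vardiamond$. For such $s$, lying say on the $[B_n^\vardiamond, C_n^\vardiamond]$ arc of $\partial \Omega_n^\vardiamond$, Lemma \ref{ci} together with Definition \ref{t2} already force $S_D^\square(s)$ at scale $n^{-1}$ to be of order $n^{-a_4}$. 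To extend this bound to all finer scales, transcribe the double--bond construction from Lemma \ref{dp}: were $S_{n,N,D}^\square(s)$ appreciably larger than a prescribed inverse power of $n$, one could -- using the Harris ring stationed at the closest point of $\partial \Omega_n^\square$ to $s$, whose crossing events have uniformly positive probability at every scale by Theorem \ref{hreg} item \ref{hunif} -- manufacture a separating yellow path realizing $\mathbb S_D^\square(s)$ with a probability inconsistent with the ceiling just recorded. Passing to the limit $N \to \infty$ then delivers that $H_n^\square(s)$ lies within $n^{-\mu}$ of $\partial \mathbb T$.

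For (b), Proposition \ref{pts} places $B_n^\vardiamond$ inside order $\log n$ rings of the Harris system stationed at $B_n^\square$; the analogous Harris construction -- now forcing the yellow versions of the events contributing to $S_B^\square(B_n^\vardiamond)$ to occur with probability nearly $1$ -- yields $|S_{n,N}^\square(B_n^\vardiamond) - 1| \lesssim n^{-\mu}$ uniformly in $N$, whence $|K_B - 1| \lesssim n^{-\mu}$ in the limit. The cases of $K_C$ and $K_D$ are identical by cyclic relabeling.

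Granted (a) and (b), Lemma \ref{distort} applied at the interior point $H_n^\square(A_n^\diamondsuit)$ gives
\[ |H_n^\vardiamond(A_n^\diamondsuit) - H_n^\square(A_n^\diamondsuit)| = |g_K(H_n^\square(A_n^\diamondsuit)) - H_n^\square(A_n^\diamondsuit)| \lesssim [n^{-\mu} \log n]^{1/3}, \]
and by the remark following Lemma \ref{distort} -- $A_n^\diamondsuit$ being kept uniformly away from the three vertex preimages by the Harris structure around $A_n^\square$ -- the fractional power may in fact be dispensed with. In either case the right--hand side is $\lesssim n^{-a_6}$ for any $0 < a_6 < \mu/3$ (or $a_6 < \mu$), concluding the proof. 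The principal technical obstacle is step (a): one must ensure that the Harris obstructions used at scale $n^{-1}$ in Lemma \ref{dp} persist verbatim at all finer scales $N^{-1}$, which rests on the fact that all the relevant crossing and circuit estimates, being phrased for topological rectangles of bounded aspect ratio, are scale invariant (Theorem \ref{hreg} item \ref{hunif}).
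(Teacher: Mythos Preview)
Your overall framework --- write $H_n^\vardiamond = g_K \circ H_n^\square$ with $K = H_n^\square(\Omega_n^\vardiamond)$ and invoke Lemma \ref{distort} --- matches the paper exactly, and your treatment of (b) via Proposition \ref{pts} and the scale--$N$ argument is essentially the paper's. However, your argument for (a) has two genuine gaps.

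First, the mechanism you describe is garbled. You write that if $S_{n,N,D}^\square(s)$ were too large, the Harris rings would let you ``manufacture a separating yellow path realizing $\mathbb S_D^\square(s)$'' contradicting ``the ceiling just recorded.'' But manufacturing $\mathbb S_D^\square(s)$ gives a \emph{lower} bound on $S_{n,N,D}^\square(s)$, not an upper one; and the ceiling you recorded is a scale--$n$ bound (coming from Lemma \ref{ci} and Definition \ref{t2}) which says nothing \textit{a priori} about scale $N$. The double--bond construction of Lemma \ref{dp} is the wrong tool here: it builds paths, whereas you need to \emph{block} them. The paper instead compares $S_{n,N}^\square(s)$ directly to $S_{n,N}^\square(z(s))$, the latter lying exactly on $\partial \mathbb T$ since $z(s)\in\partial\Omega_n^\square$. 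A monochrome circuit in any Harris ring enclosing both $s$ and $z(s)$ forces the indicators of the $\mathbb S$--events to coincide at the two points, and this holds uniformly in $N$. For this to work one must know that (all but boundedly many of) the Harris segments have endpoints on the \emph{correct} boundary arcs of $\partial\Omega_n^\square$; the paper devotes a separate claim to this, and you do not address it.

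Second, even if you had established that every point of $H_n^\square(\partial\Omega_n^\vardiamond)$ lies within $n^{-\mu}$ of $\partial\mathbb T$, that is only Hausdorff closeness. Lemma \ref{distort} requires the stronger \emph{sup--norm} closeness $d_{\rm sup}(\partial K,\partial\mathbb T)\lesssim n^{-\mu}$, which is what controls the tortuosity condition in Marchenko's theorem. The paper spends an additional claim ruling out large ``gaps'' in the image along $\partial\mathbb T$, via a monotonicity and nearest--neighbour argument on the auxiliary set $\mathcal Z_N = \{z(s):s\in\partial\Omega_n^\vardiamond\}$, before exhibiting a parametrization. You omit this step entirely.
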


\begin{proof}
Denoting by $G_n$ the conformal map mapping $H_n^\square(\Omega_n^\vardiamond)$ to $\mathbb T$ with the points 
$(H_n^\square(B_n^\vardiamond), H_n^\square(C_n^\vardiamond)$, $H_n^\square(D_n^\vardiamond))$ mapping to the points $(1, \tau, \tau^2)$, we have by uniqueness of conformal maps that 
\[H_n^\vardiamond = G_n \circ H_n^\square.\]
The stated result will follow from Lemma \ref{distort}, and in order to utilize this lemma, we need to verify that $(H_n^\square(B_n^\vardiamond), H_n^\square(C_n^\vardiamond), H_n^\square(D_n^\vardiamond))$ is close to $(1, \tau, \tau^2)$ and to show that the sup--norm distance between $\partial[H_n^\square(\partial \Omega_n^\vardiamond)]$ and $\partial \mathbb T$ is less than $n^{-\gamma}$ for some $\gamma > 0$.  The first statement is a direct consequence of Proposition \ref{pts}: since $O(\log n)$ Harris rings surround both $B_n^\vardiamond$ and $B_n^\square$, by an argument as in the proof of Proposition \ref{dia2}, their $S_n^\square$ values differ by an inverse power of $n$ and the result follows since $S_n^\square(B_n^\square) \equiv 1$; similar arguments yield the result for $C_n^\vardiamond, D_n^\vardiamond$. 

As for the second statement, first we have by Lemma \ref{ci} and Lemma \ref{cdmd} that the distance between $\partial [S_n^\square(\partial \Omega_n^\vardiamond)]$ and $\partial \mathbb T$ is less than (some constant times) $n^{-a_4} + n^{-\beta}$; we emphasize that here we in fact have closeness in the sup--norm since both lemmas yield pointwise estimates.  Next, as near the end of the proof of Theorem \ref{s1}, 
we may consider 
$\Omega_{n}^\square$ to be a fixed \emph{continuum} domain
and, e.g., for $N \geq n$, the domain
$\Omega_{n,N}^\square$ to be its canonical approximation (together with appropriate approximations for the marked points $A_n^\square, B_n^\square$, etc.) for a percolation problem at scale 
$N^{-1}$.  We will consider the corresponding CCS--functions $S_{n, N}^\square$ on the domains $\Omega_{n, N}^\square$.  

We claim that there exists some $\gamma > 0$ such that uniformly in $N$ for $N$ sufficiently large, the sup--norm distance between $\partial[S_{n, N}^\square(\partial \Omega_n^\vardiamond)]$ and $\partial \mathbb T$ is less than $n^{-\gamma}$.  Indeed, from Lemma \ref{dp}, we know that for each point $s$ on $\partial \Omega_n^\vardiamond$, there are $\kappa \cdot \log n$ Harris rings stationed at $z(s)$ which separate it from the central region $\mathcal D_\Delta$.  While by \emph{fiat} $S_{n, N=n}^\square(\partial \Omega_n^\vardiamond)$ is close to $\partial \mathbb T$, we shall reprove this using the Harris systems since we require an estimate which is uniform in $N$.  We start with the following observation concerning the central region $\mathcal D_\Delta$:  

\bigskip

\noindent \textbf{Claim.} For $n$ sufficiently large, with probability of order unity independent of $n$, there are monochrome percolative connections (in blue or yellow) between $\mathcal D_\Delta$ and any or all of the three boundary segments.  

\bigskip

\noindent \emph{Proof of Claim.} Consider the domain $\Omega$ with marked points $B, C, D$, viewed as a conformal triangle.  It is recalled that $\mathcal D_\Delta$ is roughly half the size of the largest circle which can be fit into $\Omega$.  Let us focus on two of the three marked points, say $B$ and $D$.  We now mark two boundary points on $\mathcal D_\Delta$ and denote them by $b$ and $d$ and consider two disjoint curves which join $B$ to $b$ and $D$ to $d$, thereby forming a conformal rectangle.  Since the aspect ratio of the said rectangle is fixed, it therefore follows, by convergence to Cardy's Formula, that for $n$ sufficiently large, there is a uniform lower bound on the probability of a discrete realization of the desired connection.  Similar arguments apply to the other two boundary segments.
\qed

\bigskip

\noindent \textbf{Claim.} Consider $s \in \partial \Omega_n^\vardiamond$ and the Harris rings from the Harris system stationed at $z(s) \in \partial \Omega_n^\square$ which also enclose $s$ as in Lemma \ref{dp}.  Without loss of generality, we may assume that $z(s) \in [B_n^\square, D_n^\square]$.  Then there exists some fixed constant $\Upsilon < \infty$ such that all but $\Upsilon$ of the Harris segments have at least one endpoint on $[B_n^\square, D_n^\square]$.  Moreover, among these, \emph{either} the other endpoint of the Harris segment is also on $[B_n^\square, D_n^\square]$ \emph{or} the existence of the corresponding path event within this Harris segment achieves $\mathbb S_B^\square$ \emph{or} $\mathbb S_D^\square$ (which, we remind the reader, are the percolation events defining $S_D^\square$ and $S_B^\square$, respectively) for \emph{both} $s$ and $z(s)$.  Similar statements hold if $z$ belongs to the other boundary segments.

\bigskip

\noindent \emph{Proof of Claim.}  Let us first rule out the possibility that too many Harris segments have endpoints on $[B_n^\square, C_n^\square, D_n^\square]$.  It is noted that each Harris segment of this type in fact separates all of $[B_n^\square, D_n^\square]$ from $\mathcal D_\Delta$.  Thus, if there are say $\Upsilon$ such Harris segments, then the probability of a connection between $\mathcal D_\Delta$ and $[B_n^\square, D_n^\square]$ would be less than $(1 - \vartheta)^\Upsilon$, with $\vartheta > 0$ as in Theorem \ref{hreg}.  It follows from the previous claim that $\Upsilon$ cannot scale with $n$.

Finally, if there are too many Harris segments with one endpoint on $[B_n^\square, D_n^\square]$, but accomplishes \emph{neither} $\mathbb S_B^\square$ \emph{nor} $\mathbb S_D^\square$, then necessarily the other endpoint must be on $[B_n^\square, C_n^\square]$ or $[C_n^\square, D_n^\square]$ in such a way that the Harris segment separates $\mathcal D_\Delta$ from $[B_n^\square, C_n^\square]$ or $[C_n^\square, D_n^\square]$.  The same reasoning as in the above paragraph then implies that this also cannot occur ``too often''.  For illustrations of some of these cases, see Figure \ref{FZX}.
\qed 

\begin{figure}[ ]
%\vspace{.15 in}
\centering
\includegraphics[scale=.4]{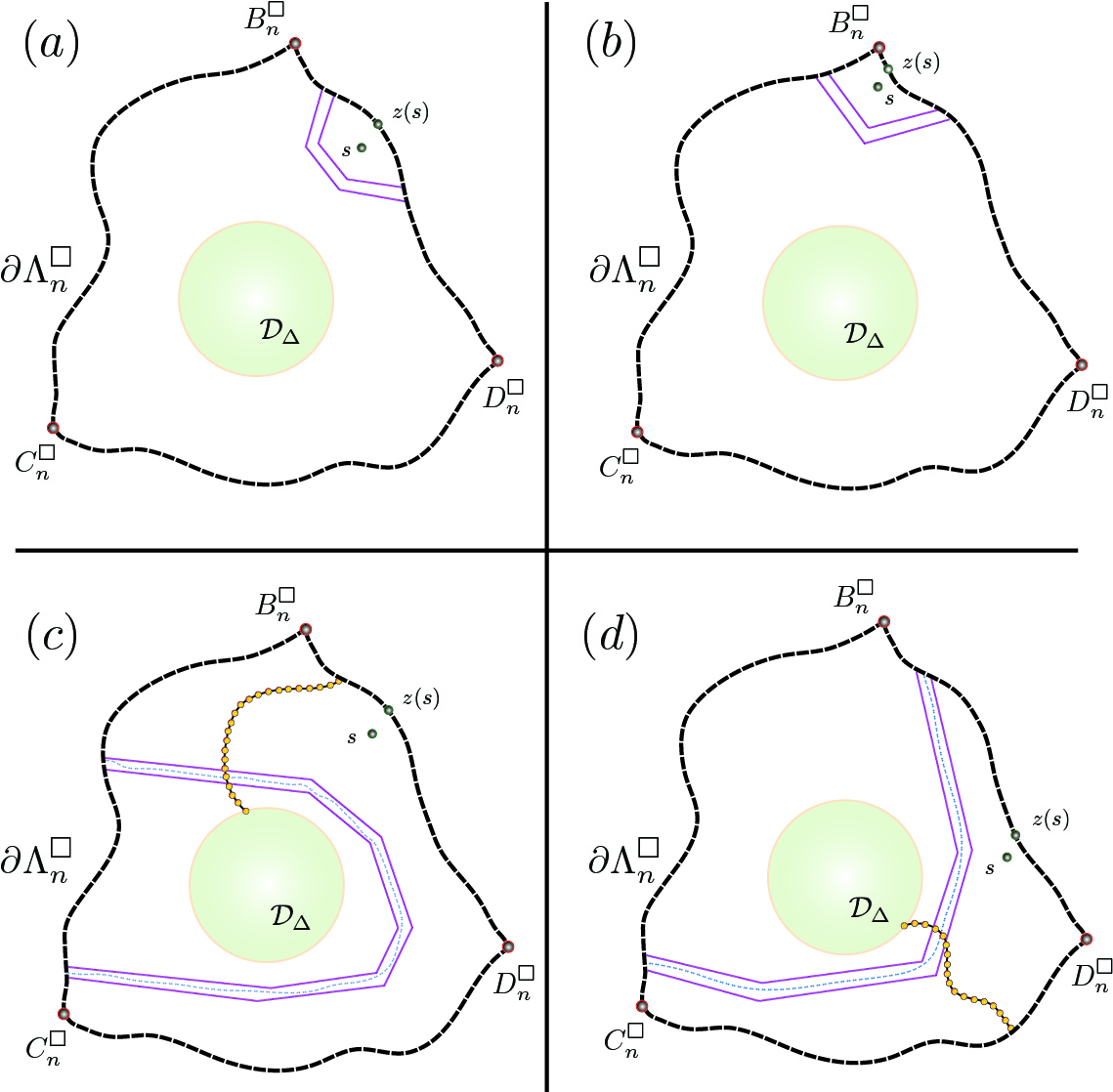}
\begin{changemargin}{.55in}{.55in}
%\flushright
\caption{{\footnotesize (a) and (b): Harris annular segment of the type envisioned for $s \in \partial \Omega_n^\vardiamond$ and $z(s) \in [B_n^\square, D_n^\square]$; in case (b), it so happens that $z(s)$ and $B_n^\square$ are close.
(c)  and (d):  since yellow paths indicated in these illustrations represent typical large scale events, there cannot be too many Harris rings of the contrary type.}}
\label{FZX}
\end{changemargin}
\vspace{-.3 in}
\end{figure}

\bigskip

We also note that there cannot be Harris segments of conflicting ``corner types'' (e.g., $[D_n^\square, B_n^\square]$ to $[B_n^\square, C_n^\square]$ \emph{and} $[B_n^\square, C_n^\square]$ to $[C_n^\square, D_n^\square]$) since the Harris segments are topologically ordered and cannot intersect one another.

\bigskip

We can now acquire the needed conclusion that the Harris rings themselves force $S_{n, N}^\square(s)$ to be close to $\partial \mathbb T$.  The essence of the argument can be captured by the (redundant) case $N = n$, so let us proceed.  Consider then $s \in \partial \Omega_n^\vardiamond$ and the Harris system stationed at $z(s) \in \partial \Omega_n^\square$ as above which, without loss of generality, we assume to be in $[B_n^\square, D_n^\square]$.  Then we claim that $|S_n^\square(z(s)) - S_n^\square(s)| \lesssim n^{-\kappa}$.  Indeed, from the previous claim, all but $\Upsilon$ of the Harris segments have beginning and ending points on $\partial \Omega_n^\square$ which are such that conditioned on the existence of paths of the appropriate color within these segments, the indicator functions of all $S_n^\square$--events are the same value for both $s$ and $z(s)$.

Let us now argue that the above argument persists, uniformly, for all $N$ sufficiently large.  First, it is emphasized that all arguments follow from the occurrence of paths in each Harris ring, which has probability uniformly bounded from below.  We claim that this remains the case for percolation performed at scale $N^{-1}$.  
Indeed, let us again recall that these rings are gridded  by boxes of scale $2^{-2r}$ \emph{relative} to the rings themselves (see Theorem \ref{hreg}) and using uniform probability of crossings in squares/rectangles, etc., which is characteristic of critical 2D percolation problems, the necessary crossings can be constructed by hand as in e.g., the proof of Lemma \ref{dp}.

Now by convergence to Cardy's Formula (or rather, the statement that the CCS--function converges uniformly on compact sets to the conformal map to $\mathbb T$)  it is the case that $S_{n, N}^\square(s) \rightarrow H_n^\square(s)$.  Uniformity in $s$ follows from the fact that $\overline{\Omega_n^\vardiamond} \subseteq \Omega_n^\square$ is a fixed (for $n$ fixed) compact set, c.f.,~Section 5 in \cite{BCL2}.  We conclude therefore that each point on $\partial \Omega_n^\vardiamond$ maps to a point sufficiently close to $\partial \mathbb T$, and since $\partial[H_n^\square(\partial \Omega_n^\vardiamond)]$ is a curve, it easily follows that the \emph{Hausdorff distance} is small.

However, we require the stronger statement that the relevant objects are close in the \emph{sup--norm} (i.e., in $d_{\rm sup}( \cdot, \cdot)$).  We will now strengthen the above arguments to acquire this conclusion. Let us define the set of all points which are chosen as the $z(s)$ (the closest point to $s$) for some $s$ in $\langle \partial \Omega_n^\vardiamond\rangle_N$ (the approximation to $\partial \Omega_n^\vardiamond$ at scale $N^{-1}$):
\[ \mathcal Z_N:= \{ z \in \partial \Omega_{n, N}^\square \mid \exists s \in \langle\partial \Omega_n^\vardiamond\rangle_N, z = z(s)\}.\]

Let us first observe that \emph{a priori} $S_{n, N}^\square(\mathcal Z_N)$ is a discrete set of points on $\partial \mathbb T$ which we may consider to be a curve by linear interpolation.  For simplicity let us consider the portion of $\partial \mathbb T$ corresponding to the $[C, D]$ boundary, i.e., the vertical segment connecting $\tau$ and $\tau^2$.  Let us focus attention on $S_{n, N}^\square([C_{n, N}^\square, D_{n, N}^\square] \cap \mathcal Z_N)$.  By monotonicity of crossing probabilities, it is the case that these points are ordered along the vertical segment.  

Now our contention is that there are no substantial gaps between successive points: 

\bigskip

\noindent \textbf{Claim.} Let $s \in \langle \partial \Omega_n^\vardiamond \rangle_N$ and $z(s)$ be as described.  Let $\nu > 0$ be such that the inequality $n^{-\nu} \gg n^{-\kappa}$ is sufficiently strong, as will emerge in the proof, where $\kappa$ as above is such that $|S_{n, N}^\square(s) - S_{n, N}^\square(z(s))| \lesssim n^{-\kappa}$.  Then for all $N > n$, it is the case that the maximum separation between successive points of $S_{n, N}^\square([C_{n, N}^\square, D_{n, N}^\square] \cap \mathcal Z_N)$ is less than $n^{-\nu}$, with $\mathcal Z_N$ as described.

\bigskip

\noindent \emph{Proof of Claim.}  Suppose there are two points $x_1, x_2 \in [C_{n, N}^\square, D_{n, N}^\square] \cap \mathcal Z_N$, say with $S_{n, N}^\square(x_1)$ below $S_{n, N}^\square(x_2)$, which are separated by a gap in excess of $n^{-\nu}$.  Let us denote by $s_1, s_2 \in \langle \partial \Omega_n^\vardiamond \rangle_N$ the points corresponding to $x_1, x_2$, respectively.  Next consider the $\frac{1}{4} \cdot n^{-\nu}$ neighborhoods of $S_{n, N}^\square(s_1)$ and $S_{n, N}^\square(s_2)$ and consider the points in $\partial \Omega_n^\vardiamond$ ``between'' $s_1$ and $s_2$. There must be points between $s_1$ and $s_2$ since $|S_{n, N}^\square(s_1) - S_{n, N}^\square(s_2)| \gtrsim n^{-\nu} - 2 \cdot n^{-\kappa}$.  So if these points were neighbors, by standard critical percolation arguments, the difference between their $S_{n, N}^\square$ values must be small and the above inequality would render this difference unacceptably large, for $\nu$ appropriately chosen.     

If these points all have $S_{n, N}^\square$--value which lie in the $\frac{1}{4} \cdot n^{-\nu}$ neighborhoods described above, then there would be a neighboring pair whose $S_{n, N}^\square$ values are separated by $\frac{1}{2} \cdot n^{-\nu}$, which would again be unacceptably large.  We conclude therefore that there exists some point between $s_1$ and $s_2$ with $S_{n, N}^\square$ value outside these neighborhoods and therefore a point in $\mathcal Z_N$ whose $S_{n, N}^\square$ value lies between those of $x_1$ and $x_2$.  This is a contradiction.
\qed

Finally, let us describe the parametrization. First we denote by $U_N$ the number of points in $\mathcal Z_N$ and then we may parametrize say the vertical portion of $\partial \mathbb T$ by having, for $t = j$, the curve on the $j^{\text{th}}$ site of $\mathcal Z_N$ and linearly interpolating for the non--integer times.  Similarly, we parametrize the corresponding portion of $S_{n, N}^\square(\langle \partial \Omega_n^\vardiamond\rangle_N$, so that pairs of points at integer times correspond to their $s, z(s)$ pair.  The above claim then implies that with this parametrization, the two curves are within $n^{-\nu}$ at all times.  We have verified that $S_{n, N}^\square(\langle \partial \Omega_n^\vardiamond\rangle_N)$ is sup--norm close to $\partial \mathbb T$, uniformly in $N$.

The stated result now follows from Lemma \ref{distort}.
\end{proof}

\bigskip

\noindent \emph{Proof of the Main Theorem.} The required power law estimate for the rate of convergence of crossing probabilities now follows by concatenating the various theorems, propositions and lemmas we have established.  Let us temporarily use the notation $A \sim B$ to mean that $A$ and $B$ differ by an inverse power of $n$.  

Starting with $S_n(A_n)$, we have that $S_n(A_n) \sim S_n^\square(A_n^\square)$ by Theorem \ref{s1}; $S_n^\square(A_n^\square) \sim S_n^\square(A_n^\diamondsuit)$ by Proposition \ref{dia2}, item 1; $S_n^\square(A_n^\diamondsuit) \sim F_n^\square(A_n^\diamondsuit)$ by Lemma \ref{ci}; $F_n^\square(A_n^\diamondsuit) \sim H_n^\vardiamond(A_n^\diamondsuit)$ by Lemma \ref{cdmd}; $H_n^\vardiamond(A_n^\diamondsuit) \sim H_n^\square(A_n^\diamondsuit)$ by Lemma \ref{s3}; $H_n^\square(A_n^\diamondsuit) \sim H_n^\square(A_n^\square)$ by Proposition \ref{dia2}, item 2; finally, $H_n^\square(A_n^\square) \sim H(A)$ by Theorem \ref{s1}.

\qed

\section{$\sigma$--Holomorphicity}
\label{KBB}

The main goal in this section is to establish the so--called Cauchy integral estimates which is one of the more technical aspects required for the proof of Lemma \ref{ci}.  We will address such issues in somewhat more generality than strictly necessary by extracting the two properties of functions of the type $S_n(z)$ which are of relevance: i) H\"older continuity and ii) that their discrete (closed) contour integrals are asymptotically zero as the lattice spacing tends to zero.  As for the latter, it should be remarked that the details of how 
our particular
$S_n(z)$ exhibits its cancelations on the \emph{microscopic scale} can be directly employed to provide the Cauchy--integral estimates.  

\subsection{$(\sigma, \rho)$--Holomorphicity}

As a starting point -- and also to fix notation -- let us review the concept of a discrete holomorphic function on a hexagonal lattice.  Let $\mathbb H_{\e}$ denote the hexagonal lattice at scale $\e$, e.g., the length of the sides of each hexagon is $\e$, so we envision $\e = n^{-1}$, where the hexagons are oriented horizontally (i.e., two of the sides are parallel to the $x$--axis).  For now, let 
$\Lambda_{\e}$ denote any collection of hexagons and 
$Q: \Lambda_{\e} \to \mathbb C$ a function on the vertices of $\Lambda_{\e}$.  
For each pair of adjacent vertices in $\Lambda_{\e}$ let us linearly interpolate $Q$ on the edges.  In particular, $Q$ as a function on \emph{edges} when integrated with respect to arc length yields the average of the values of $Q$ at the two endpoints.  Hence all integrals may be regarded as taking place in the continuum.

\begin{itemize}
\item[~] We say that $Q$ is \emph{discrete holomorphic} on 
$\Lambda_{\e}$
if for any hexagon
$h_\e \in \Lambda_{\e}$ with vertices $(v_{1}, \dots v_{6})$  -- in counterclockwise order with $v_{1}$ the leftmost of the lowest two -- the following holds:
$$
0 = \left(\frac{Q(v_{1}) + Q(v_{2})}{2} + \dots + 
e^{i\frac{5}{3}\pi} \cdot \frac{Q(v_{6}) + Q(v_{1})}{2}\right)
= \varepsilon^{-1} \cdot \oint_{\partial h_\e} Q~dz.
$$
\end{itemize}
That is, the usual discrete contour integral (by this or any equivalent) definition vanishes.  By way of contrast, we have the following mild generalization pertaining to \textit{sequences} of functions.   

\begin{defn}
\label{nhf}
Let $\Lambda \subseteq \mathbb C$ be a simply connected domain and denote by $\Lambda_\e$ the (interior) discretized domain given as $\Lambda_\e := \bigcup_{h_\e \subseteq \Lambda} h_\e$ and let $(Q_{\e}: \Lambda_{\e} \to \mathbb C)$ be a 
sequence of functions defined on the vertices of $\Lambda_\e$.  Here $\e$ is tending to zero and, without much loss, may be taken as a discrete sequence.
We say that the sequence $(Q_{\e})$ is \emph{$\sigma$--holomorphic} if there exist constants $0 < \sigma, \rho \leq 1$ such that for all $\e$ sufficiently small:

(i) $Q_{\varepsilon}$ is H\"older continuous (down to the scale $\e$) and up to $\partial \Lambda_\e$, in the sense that there exists some $\psi > 0$ (envisioned to be small) such that  1) $Q_\e$ is H\"older continuous in the usual sense for $z_\e, w_\e \in \Lambda_\e \setminus N_\psi(\partial \Lambda_\e)$: if $|z_\e - w_\e| < \psi$, then $|Q_\e(z_\e) - Q_\e(w_\e)| \lesssim \left( \frac{|z_\e - w_\e|}{\psi} \right)^\sigma$ and 2) if $z_\e \in N_\psi(\partial \Lambda_\e)$, then there exists some $w_\e^\star \in \partial \Lambda_\e$ such that $ |Q_\e(z_\e) - Q_\e(w_\e^\star)| \lesssim \left(\frac{|z_\e - w_\e^\star|}{\psi} \right)^\sigma$.

(ii) for any simply closed lattice contour $\Gamma_\e$,

\begin{equation}
\label{sint} 
| \oint_{\Gamma_\e} Q~dz 
| = |\sum_{h_\e \subseteq \Lambda_\e'} \oint_{\partial h_\e} Q~dz | 
\lesssim |\Gamma_\e| \cdot \e^\rho,
\end{equation}

with $\Lambda_\e', |\Gamma_\e|$ denoting the region enclosed by $\Gamma_\e$ and the Euclidean length of $\Gamma_\e$, respectively.

\end{defn}

\begin{remark}  (i)  Obviously any sequence of discrete holomorphic functions \textit{which also satisfy the H\"older continuity condition} are $\sigma$--holomorphic.

(ii) There are of order $|\Gamma_\e|/\e$ terms in a discrete contour integration but each term is multiplied by $\e$ and so in cases where $|\Gamma_\e| = O(1)$ (a contour of fixed finite length)  $|\Gamma_\e|$ need not be explicitly present on the right hand side of Equation \eqref{sint}.  We have introduced a more general definition as we shall have occasion to consider contours whose lengths scale with $\e$ (specifically they are discrete approximations to contours that are not rectifiable).  
(iii) From the assumption of H\"older continuity alone, we already have that $|\oint_
{\partial h_\e} Q~dz | \lesssim \e^{1+\sigma}$,
but on a moment's reflection, it is clear that this is quite far from what is necessary to 
provide adequate estimates for the integral around contours of \emph{larger scales} that are amenable to the
$\e \rightarrow 0$ limit.  
\end{remark}

We will now gather the necessary ingredients to establish that the (complexified) CCS--functions are $(\sigma, \rho)$--holomorphic.  The arguments here are certainly not new: various ideas and statements needed are already almost completely contained in \cite{stas_perc}, \cite{CL} and \cite{BCL2}. 

\begin{prop}\label{sfp}  
Let $\Lambda$ denote a conformal triangle with marked points (or prime ends) 
$B$, $C$, $D$ and let 
$\Lambda_{\e}$ denote an interior approximation (see Definition 3.1 of \cite{BCL2}) of $\Lambda$ with $B_{\e}, C_\e, D_\e$ the associated boundary points.  
Let $S_{\e}(z)$ denote the complex crossing function defined on $\Lambda_{\e}$.  
Then for all $\e$ sufficiently small, the functions $(S_\e: \Lambda_\e \rightarrow \mathbb C)$ are $(\sigma, \rho)$--holomorphic for some $\sigma, \rho > 0$.
\end{prop}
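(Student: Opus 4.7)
The plan is to verify the two defining properties of $(\sigma, \rho)$-holomorphicity separately: H\"older continuity (property (i)) and the contour integral estimate (property (ii)). The technical inputs are essentially already contained in \cite{stas_perc}, \cite{CL}, and \cite{BCL2}; my task is to repackage them in the present formalism.

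For property (i), I would reduce H\"older continuity to standard RSW-type percolation estimates. Since each of $S_B, S_C, S_D$ is by definition a probability of a specific percolation connection event, the discrepancy $|S_\e(z)-S_\e(w)|$ for nearby interior $z, w$ is bounded by the probability that an interface passes between them, which is in turn controlled by a polynomial-decay multi-arm estimate in an annulus of inner radius of order $|z-w|$. Uniformity of these arm estimates at criticality yields $|S_\e(z)-S_\e(w)| \lesssim (|z-w|/\psi)^\sigma$ in the bulk. The boundary statement is analogous: near a boundary segment where one of the $S$-components is identically $0$ or $1$ (or a controlled monotone crossing probability, as recalled in the proof of Proposition \ref{t1}), the values of $S_\e$ at interior points of distance $|z - w^\star|$ to a boundary point $w^\star$ are bounded by half-plane arm probabilities that enjoy polynomial decay.

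For property (ii), the fundamental input is Smirnov's color-swap identity, which in the triangular site model gives the \emph{exact} pointwise cancellation $\oint_{\partial h_\e} S_\e\,dz = 0$ for every interior hexagon $h_\e$. Telescoping over the interior of $\Gamma_\e$ -- each interior edge is traversed in opposite directions from its two adjacent hexagons -- yields $\oint_{\Gamma_\e} S_\e\,dz = 0$ identically, so (ii) is trivial in this case with any $\rho$. For the generalized models of \cite{CL} the identity holds only approximately, with a per-hexagon residual $R_\e(h_\e)$ that can be expressed in terms of specific pivotal/arm events; the arguments of \cite{CL} show that these residuals telescope in summation so that the total integral reduces to a boundary-type quantity of order $|\Gamma_\e| \cdot \e^\rho$ for some $\rho > 0$ governed by the relevant arm exponents.

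The main obstacle is property (ii) in the generalized setting. The interior of $\Gamma_\e$ contains of order $|\Gamma_\e|^2/\e^2$ hexagons, so any naive per-hexagon bound -- for instance the H\"older-type estimate $|\oint_{\partial h_\e} S_\e\,dz|=O(\e^{1+\sigma})$ -- gives only $O(|\Gamma_\e|^2\cdot\e^{\sigma-1})$, which fails to be small. The cancellation must therefore be exploited collectively rather than per-hexagon, via a discrete summation-by-parts in the color-swap residual so that only a boundary-layer contribution of the desired order $|\Gamma_\e|\cdot\e^\rho$ survives. Making this quantitative and uniform in $\e$ and in $\Lambda_\e$, leveraging the machinery of \cite{CL} and \cite{BCL2}, is the technical heart of the argument.
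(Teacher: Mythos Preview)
Your treatment of property (ii) contains a factual error: Smirnov's color--switching identity does \emph{not} yield $\oint_{\partial h_\e} S_\e\,dz = 0$ exactly for each hexagon in the triangular site model. As the paper itself notes in the preliminaries, ``the functions $S_n$ are not \emph{discrete analytic} but the associated contour integrals vanish with lattice spacing.'' What the color--switching lemma gives is an identity between certain \emph{increments} of the three components $S_B, S_C, S_D$ across adjacent edges; when combined into $S_\e = S_B + \tau S_C + \tau^2 S_D$ and summed around an elementary contour, the residual does not vanish but is expressed in terms of multi--arm (pivotal--type) events whose probabilities decay polynomially. Thus the mechanism you describe for the generalized models of \cite{CL} --- per--unit residuals governed by arm exponents, with a collective summation yielding $|\Gamma_\e|\cdot\e^\rho$ --- is in fact already required for the \emph{pure} triangular model. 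Your last paragraph correctly identifies this collective cancellation as the heart of the matter; you have simply misattributed where the difficulty first arises. The paper's proof, like yours, cites \cite{stas_perc}, \cite{beffara}, and \cite{CL} for this estimate rather than reproving it.

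Your boundary H\"older argument is also underspecified. Invoking ``half--plane arm probabilities'' presupposes a locally flat boundary, which is not available here: $\Lambda$ may have arbitrarily rough boundary (Minkowski dimension up to $2$ for the triangular model), and the approximating points $A_\e, B_\e,\dots$ may sit deep inside fjords, far in Euclidean terms from the prime ends they approximate. The paper handles this by passing to the uniformization $\varphi:\mathbb D\to\Lambda$: one takes a crosscut neighborhood of $\varphi^{-1}(A)$ in the disk, pulls it forward, and uses that the resulting region supports a nested family of (partial) annuli in which RSW circuits separate $z$ from $A_\e$ with uniformly positive probability per annulus. This is what furnishes the reference scale $\Delta^\star(A)$ uniformly in $\e$; a compactness argument over finitely many such boundary neighborhoods then gives the global $\psi$. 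Your interior argument is fine and matches the paper's, but the boundary case needs this additional input.
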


\begin{proof}
We will first establish, using some conformal mapping ideas, that $S_\e$ enjoys H\"older continuity up to the boundary; since arguments like this already appear in \cite{BCL2}, we will be brief.  Let us start with a pointwise statement:

\noindent{\bf Claim.} Suppose we have a point 
$A$ on the $[D,B]$ boundary, then we claim that there is some 
$\Delta^{\star} \equiv \Delta^\star(A)$ (with $1 \gg \Delta^{\star} \gg \e$)
and a connected set $N_{\Delta^{\star}} \subset \Lambda_{\e}$, also contained in the $\Delta^{\star}$ neighborhood of $A_{\e}$ and connected to $A_{\e}$, 
such that the following holds:
there exists some $\sigma > 0$ such that 
for any $z \in N_{\Delta^\star}$,
$$
|S_{\e}(z) - S_{\e}(A_{\e})| 
\lesssim
\left[ \frac{|z - A_{\e}|}{\Delta^{\star}}\right]^{\sigma}.
$$

\noindent \emph{Proof of Claim.}  
Let $z\in \Lambda_{\e}$ and consider $S_\e(z)$ to correspond to blue paths.  Then it is clear that if there is a yellow path starting on
$[D_{\e}, A_{\e}]$ and ending on $[A_{\e}, B_{\e}]$ which encircles $z$, then events contributing to  $S_\e(z)$ and $S_\e(A_\e)$ occur together
and there is no contribution to $|S_\e(z) - S_\e(A_\e)|$.
The power $(|z - A_{\e}|/\Delta^{\star})^{\sigma}$
corresponds to having the order of 
$|\log (|z - A_{\e}|/\Delta^{\star})|$ annuli (or coherent portions thereof)
connecting the two parts of the $[D_{\e}, B_{\e}]$ boundary with an independent 
chance of such a yellow circuit in each segment with uniformly bounded
probability.   Thus the principal task is to construct the reference scale 
$\Delta^{\star}$ in a manner which is uniform in $\e$.
While the entire issue is trivial when $|A-A_{\e}|$, $|B - B_{\e}|$ etc., are small compared to the distance between various relevant ``points'' on $\Lambda$, 
we remind the reader that under certain circumstances, the separation between these points and their approximates  may be spuriously large. Thus we turn to uniformization.

To this end, let $\varphi: \mathbb D \to \Lambda$
denote the uniformization map.  Let $X^{\prime}_{A}$ denote a crosscut neighborhood of $\varphi^{-1}(A)$ which does not contain any of the inverse images of the marked points $\varphi^{-1}(B), \dots$ nor, for $\e$ small, the inverse images of their approximates $\varphi^{-1}(B_{\e}), \dots$
but which \textit{does} (for $\e$ small) contain
$\varphi^{-1}(A_{\e})$. 
Next we set 
$X_A : = X_A^{\prime} \cap \varphi^{-1}(\Lambda_\e)$ so that
$$
\varphi(X_A) = \varphi(X_A^{\prime} \cap \varphi^{-1}(\Lambda_\e)).
$$
Note that $(\varphi_\e^{-1} \circ \varphi)(X_A)$ is itself a 
crosscut neighborhood of the image of $A_\e$ since $\Lambda_\e$ is an \emph{interior} approximation; here $\varphi_\e$ denotes the uniformization map associated with $\Lambda_\e$.

Next let $r_\Pi = r_\Pi(A_\e)$ be standing notation for the square centered at $A_\e$ of side $\Pi$.  Then, for $\Delta^{\star}$ sufficiently small, it is the case that $\varphi^{-1}(r_{\Delta^{\star}}) \subseteq X_{A}$ and it is worth observing that 
$\varphi_\e^{-1} (\partial (r_\Pi \cap \varphi(X_\delta)))$
is a crosscut containing $\varphi_\e^{-1}(A_\e)$ for all $\Pi \leq \Delta^\star$.

But now, it follows that there is a nested sequence of (partial) annuli, down to scale $|z - A_{\e}|$, contained inside $r_{\Delta^\star}$, within each of which there is a connected monochrome chain with uniform and independent probability separating $z$ from $A_\e$.  \qed

\bigskip

From the claim we have that corresponding to each boundary point of $\Lambda$, we have a neighborhood $\Delta^\star(z)$ in which we have H\"older continuity and it is certainly the case that $\partial \Lambda \subseteq \bigcup_{z \in \partial \Lambda} N_{\Delta^\star(z)}$, so by compactness there exist $z^{(1)}, \dots, z^{(k)}$ such that $\partial \Lambda \subseteq \bigcup_{\ell = 1}^k N_{\Delta^\star(z^{(\ell)})}$.  Adding a few $N_{\Delta(z)}$'s if necessary so that all neighborhoods have non--trivial overlap, this implies the existence of some $\psi > 0$ such that $N_\psi(\partial \Lambda) \subseteq \bigcup_{\ell = 1}^k N_{\Delta^\star(z^{(\ell)})}$ (here $N_\psi(\partial \Lambda)$ denotes the Euclidean $\psi$--neighborhood of $\partial \Lambda$).  In particular, $\psi \leq \Delta^\star(z^{(\ell)}), \ell = 1, \dots, k$, so if $\e \ll \psi$, and $z_\e \in N_\psi(\partial \Lambda)$, then $z_\e \in N_{\Delta^\star(z^{(\ell)})}$ for some $\ell$ and so $|S_\e(z_\e) - S_\e(z^{(\ell)}_\e)| \lesssim \left( \frac{|z_\e - z^{(\ell)}_\e|}{\psi} \right)^\sigma$.
For $z_\e, w_\e \in \Lambda_\e \setminus N_\psi(\partial \Lambda)$, $|z_\e - w_\e| < \psi$, so there are clearly of the order $\log(|z_\e - w_\e|/\psi)$ annuli surrounding both $z_\e$ from $w_\e$ and we obtain $|z_\e - w_\e| \lesssim \left( \frac{|z_\e - w_\e|}{\psi} \right)^\sigma$.

Finally, the statement concerning the behavior of discrete contour integrals of $S_\e$ can be directly found in \cite{stas_perc} for the triangular lattice (also c.f., \cite{beffara}) and in \cite{CL}, \S 4.3, for the extended models.
\end{proof}

\subsection{Cauchy Integral Estimate}
\label{CIEX}

We will start by establishing a multiplication lemma for an actual holomorphic function with a nearly--holomorphic function:

\begin{lemma}\label{panal}
Let $Q_{\e}$ be part of a $(\sigma, \rho)$--holomorphic sequence
as described in Definition \ref{nhf}.
above.
Let $\e > 0$ and suppose $\Gamma_{\e}$ is a discrete closed contour consisting of edges of hexagons at scale $\e$.  Let $q(z)$ be a holomorphic function on $\Lambda$ restricted to
$\Lambda_{\e}$ (both vertices and edges, all together regarded as a subset of $\mathbb C$).  Next let $1 \gg D \gg \e$ (both considered small).  Then for all $\e \geq 0$ sufficiently small
\[ |\oint_{\Gamma_{\e}} q \cdot Q_{\e}~dz| 
\lesssim (\|q\|_\infty \cdot \frac{\e^\rho}{D} + \|q\|_{C^1}\cdot D^\sigma)\cdot (|{\rm{Int}}(\Gamma_\e)| + |\Gamma_\e| \cdot D). \]
Here we remind the reader that $\|q\|_{C^1} = \|q\|_\infty + \|q'\|_\infty$.  Moreover, in the statement and upcoming proof of the lemma we also remind the reader that all integrals are regarded as taking place in the continuum.
\end{lemma}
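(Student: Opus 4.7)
The plan is to reduce the discrete contour integral to a sum over mesoscopic boxes of side $D$, on each of which the holomorphic function $q$ may be frozen to a constant modulo a linear correction controlled by $\|q\|_{C^1}$, while the $(\sigma,\rho)$-holomorphicity of $Q_\e$ absorbs the remaining discrete cancellations.

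First, tile $\mathrm{Int}(\Gamma_\e)$ with boxes $\{\tilde B_j\}$ of macroscopic side of order $D$, each a union of hexagons, so that $\partial \tilde B_j$ is a closed lattice path of length of order $D$. Call a box \emph{full} if $\tilde B_j \subseteq \mathrm{Int}(\Gamma_\e)$ and \emph{cut} otherwise: there are of order $|\mathrm{Int}(\Gamma_\e)|/D^2$ full boxes and of order $|\Gamma_\e|/D$ cut boxes. Invoking the discrete Stokes identity built into Definition \ref{nhf}, write
\[
\oint_{\Gamma_\e} q\, Q_\e\, dz \;=\; \sum_j \oint_{C_j} q\, Q_\e\, dz,
\]
where $C_j := \partial(\tilde B_j \cap \mathrm{Int}(\Gamma_\e))$ is a closed lattice contour with $|C_j|\lesssim D + |\Gamma_\e \cap \tilde B_j|$ (and of order $D$ for full boxes).

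For each box pick a reference point $z_j\in \tilde B_j$ and decompose
\[
q\, Q_\e \;=\; q(z_j)\, Q_\e \;+\; Q_\e(z_j)\bigl(q-q(z_j)\bigr) \;+\; \bigl(q-q(z_j)\bigr)\bigl(Q_\e-Q_\e(z_j)\bigr).
\]
The middle integral vanishes by the \emph{continuum} Cauchy theorem, since $q - q(z_j)$ is genuinely holomorphic on $\Lambda$ and $C_j$ is a closed Jordan contour enclosing a region of $\Lambda$. The first integral is bounded via the $\rho$-holomorphic cancellation on $Q_\e$:
\[
\bigl|q(z_j)\oint_{C_j}Q_\e\, dz\bigr|\;\leq\; \|q\|_\infty\, |C_j|\, \e^\rho.
\]
For the third, combine the mean value bound $|q(z)-q(z_j)|\leq \|q'\|_\infty D$ on $\tilde B_j$ with the $\sigma$-H\"older continuity $|Q_\e(z)-Q_\e(z_j)|\lesssim D^\sigma$ to obtain
\[
\bigl|\oint_{C_j}(q-q(z_j))(Q_\e-Q_\e(z_j))\, dz\bigr|\;\lesssim\; \|q'\|_\infty\, D^{1+\sigma}\, |C_j|.
\]

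Summing, the full boxes together contribute of order $(\|q\|_\infty \e^\rho/D + \|q'\|_\infty D^\sigma)\cdot |\mathrm{Int}(\Gamma_\e)|$, while the cut boxes, using $\sum_j |C_j|\lesssim |\Gamma_\e|$, contribute of order $(\|q\|_\infty \e^\rho + \|q'\|_\infty D^{1+\sigma})\cdot |\Gamma_\e|$, which is precisely $(\|q\|_\infty \e^\rho/D + \|q'\|_\infty D^\sigma)\cdot |\Gamma_\e|\, D$. Bounding $\|q'\|_\infty\leq \|q\|_{C^1}$ and adding the two yields exactly the claimed estimate. The principal point requiring care is the simultaneous requirement that $C_j$ be a closed lattice contour (so that the discrete $\rho$-cancellation for $Q_\e$ applies) and a Jordan curve in $\Lambda$ (so that continuum Cauchy applies to $q$); this is routine since both $\partial \tilde B_j$ and $\Gamma_\e$ are lattice paths, with the only subtlety being boxes near $\partial \Lambda_\e$, where the H\"older estimate must be invoked in the boundary version of Definition \ref{nhf} with $z_j$ chosen on the actual boundary.
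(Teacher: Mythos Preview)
Your proof is essentially the same as the paper's: both tile $\mathrm{Int}(\Gamma_\e)$ by boxes of scale $D$, decompose $q\,Q_\e$ into the same three terms (freezing $q$ at a reference point, using Cauchy's theorem to kill the middle term $Q_\e(z_j)(q-q(z_j))$, and H\"older continuity plus $|q-q(z_j)|\leq \|q'\|_\infty D$ for the cross term), and then sum separately over interior and boundary boxes. The only point the paper treats more carefully is the bound on the number of cut boxes, $M(\Gamma_\e,D)\lesssim |\Gamma_\e|/D$, which it proves via a $3\times 3$ sublattice counting argument; you simply assert this, and the ensuing claim $\sum_j |C_j|\lesssim |\Gamma_\e|$ for cut boxes tacitly uses it (since each $C_j$ picks up order $D$ of ``new'' box-edge boundary in addition to its share of $\Gamma_\e$).
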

\begin{proof}

Consider a square--like grid of scale $D$ and let $R_k$ denote the $k^{\text{th}}$ such square which has non--empty intersection with $\Lambda_\e$.  Next we let 
\[ \gamma_k := \partial(R_k \cap \mbox{Int}(\Gamma_\e)).\]
Note that $\gamma_k$ is not necessarily a single closed contour, but each $\gamma_k$ is a collection of closed contours.  See Figure \ref{OWW}.  It is observed that if $F$ is a function, then $\oint_{_{\Gamma_\e}} F ~dz= \sum_k \oint_{\gamma_k} F~dz$, where by abuse of notation, as mentioned above, each term on the righthand side may represent the sum of several contour integrals.  Next let us register an estimate within a single region bounded a $\gamma_k$, the utility of which will be apparent momentarily:

\begin{figure}[ ]
%\vspace{.15 in}
\centering
\includegraphics[scale=.25]{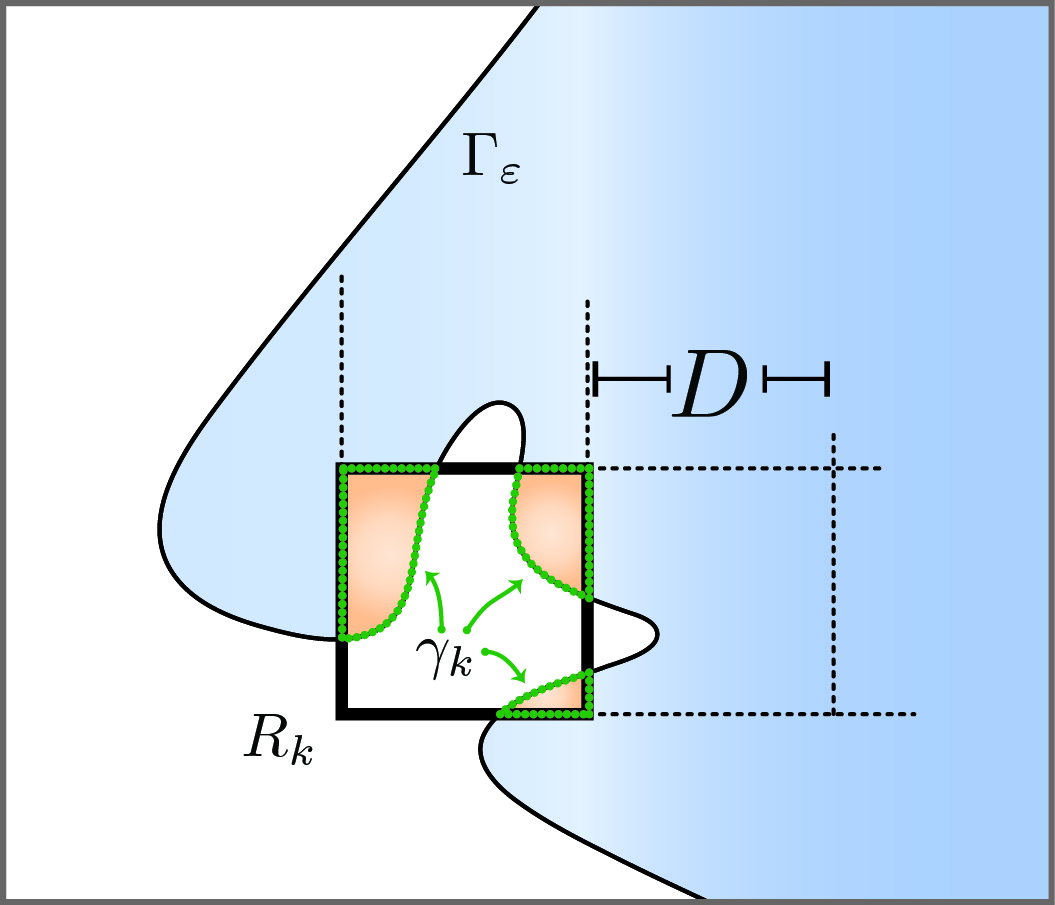}
\begin{changemargin}{.55in}{.55in}
%\flushright
\caption{{\footnotesize Aspects described in the proof of Lemma \ref{panal}.}}
\label{OWW}
\end{changemargin}
\vspace{-.3 in}
\end{figure}

\bigskip

\noindent \textbf{Claim.}  Let $z_k \in R_k$ (if $R_k$ intersects $\partial \Lambda_\e$ then choose $z_k$ in accordance with item (i) of the definition of $\sigma$--holomorphicity so that H\"older continuity of $Q$ can be assumed).  Then 
\begin{equation}
\label{eoh}
\oint_{\gamma_k} q \cdot Q~dz  = q(z_k)  \cdot   \oint_{\gamma_k} Q~dz + \mathcal E_k,\end{equation}
where 
\[ |\mathcal E_k| \lesssim |\gamma_k| \cdot \|q\|_{C^1} \cdot D^{1+\sigma}\]
and to avoid clutter, we omit the $\e$ subscript on the $Q$'s.  

\noindent \emph{Proof of Claim.}  Let us write 
\[ Q(z) = Q(z_k) + \delta Q(z)\]
and similarly, 
\[ q(z) = q(z_k) + \delta q(z). \]
We then have that 
\[ \oint_{\gamma_k} q \cdot Q~dz - q(z_k) \cdot \oint_{\gamma_k} Q~dz = \oint_{\gamma_k} \delta Q \cdot \delta q~dz + Q(z_k) \cdot \oint_{\gamma_k} \delta q~dz.\]
The second term on the right hand side vanishes identically by analyticity of $q$ whereas the integrand of the first term, by the assumed H\"older continuity of $Q$ and analyticity of $q$, can be estimated via {$\lesssim$}~$\|q\|_{C^1} \cdot D \cdot D^\sigma$ and the claim follows.
\qed

\bigskip

Therefore we may write 
\[ \oint_{\Gamma_\e} q \cdot Q~dz = \sum_k \oint_{\gamma_k} q \cdot Q~dz := \sum_k q(z_k) \cdot \oint_{\gamma_k} Q~dz + \sum_k \mathcal E_k, \]
where $z_k$ is a representative point in the region $R_k \cap \mbox{Int}(\Gamma_\e)$.  We divide the error on the righthand side into two terms, corresponding to \emph{interior} boxes -- which do not intersect $\Gamma_\e$, and boundary boxes -- the complementary set.

Let us first estimate the interior boxes.  Here, from the claim we have that the integral over each such box incurs an error of 
$\|q\|_{C^1} \cdot D^{2+\sigma}$ since here $|\gamma_k| \lesssim D$.  There are of the order $|{\rm{Int}}(\Gamma_\e)| \cdot D^{-2}$ interior boxes so we arrive at the estimate $\|q\|_{C^1}\cdot D^\sigma \cdot |{\rm{Int}}(\Gamma_\e)|$.  On the other hand, for boundary boxes, the contribution to the errors from the boundary boxes will certainly contain the original contour length $|\Gamma_\e|$.  To this we must add
{$\lesssim$}
 $D$ $\times$ [the number of boundary boxes]
corresponding to the ``new'' boundary of the boxes themselves that we  might have introduced by considering the boxes in the first place.  This is estimated as follows:

\bigskip

\noindent \textbf{Claim.}  Let $M(\Gamma_\e, D)$ denote the number of boundary boxes -- i.e., the number of boxes on the grid visited by
$\Gamma_{\varepsilon}$.
Then $M \lesssim |\Gamma_\e|/D$.

\noindent \emph{Proof of claim.} Since arguments of this sort have appeared in the literature (e.g., \cite{C1}, \cite{C2}) many times, we shall be succinct: we divide the grid into 9 disjoint sublattices each of which is indicated by its position on a $3 \times 3$ square.  Let $M_1, \dots, M_9$ denote the number of boxes of each type that are visited by 
$\Gamma_{\varepsilon}$.  We may assume without loss of generality that 
$\forall j$, $M_1 \geq M_j$.  Let us consider the coarse grained version of $\Gamma_\e$ as a sequence of boxes on the first lattice (visited by $\Gamma_\e$); revisits of a given box are not recorded until/unless a different element of the sublattice has been visited in--between.   Since the distance between each visited box is more than $D$ it follows that corresponding to each visited box the curve $\Gamma_\e$ must ``expend'' at least $D$ of its length, i.e., $|\Gamma_\e| \cdot D \geq M_1 \geq (1/9) \cdot M$ and the claim follows.  \qed

\bigskip

It is specifically observed that the additional boundary length incurred is at most comparable to the original boundary length.  In any case altogether we acquire an estimate of the order 
$ |\Gamma_\e| \cdot \|q\|_{C^1} \cdot D^{1 + \sigma}$.
 We have established 
\[ |\sum_k \mathcal E_k| \lesssim \|q\|_{C^1} \cdot D^\sigma \cdot (|{\rm{Int}}(\Gamma_\e)| + |\Gamma_\e| \cdot D).\]

Finally, by item ii) of ($\sigma, \rho$)--holomorphicity,
$$
\sum_k |q(z_k) \cdot \oint_{\gamma_k} Q~dz| 
\lesssim \|q\|_\infty \cdot \e^\rho \cdot (|{\rm{Int}}(\Gamma_\e)| \cdot D^{-1} + |\Gamma_\e|) .
$$
This follows from the decomposition similar to the estimation of the $\mathcal E_k$ terms with the first term corresponding to interior boxes and the second to boundary boxes.  The lemma been established.
\end{proof}

We can now immediately control the Cauchy integral of a $(\sigma, \rho)$--holomorphic function uniformly away from the boundary:

\begin{cor}\label{cauchy}
Let $Q_{\e}$ be part of a $(\sigma, \rho)$--holomorphic sequence
as described in Definition \ref{nhf}
above.  Let $G_{\varepsilon}(z)$ be given as the Cauchy--integral of $Q_{\varepsilon}$ 
 -- as in Eq.(\ref{FS}) -- over some (discrete Jordan) contour 
$\Gamma_{\e}$.  Let $z$ denote any lattice point in ${\rm Int}(\Gamma_{\varepsilon})$ 
such that 
$$
{\rm dist}(z, \Gamma_\e) \geq n^{-a_5}:=d_1
$$ 
for some $a_5 > 0$ and let $D \gg \e$ (both considered small).  Then for all $\e > 0$ sufficiently small, and any $d_{2} < d_{1}$, 

\begin{equation}
\label{fms}
\begin{split} 
|G_{\varepsilon}(z) - Q_{\varepsilon}(z)| &= |\frac{1}{2\pi i} \oint_{\Gamma_{\e}} (Q_{\varepsilon}(\zeta) - Q_{\varepsilon}(z))\cdot \frac{1}{\zeta - z}~d\zeta~|\\
&\lesssim \left(\frac{\e^\rho}{d_{2} D} + \frac{D^\sigma}{d_{2}^2}\right) \cdot (|{{\rm{Int}}}(\Gamma_\e)| + |\Gamma_\e| \cdot D) + 
\left(\frac{d_{2}}{d_{1}}\right)^{\sigma}
\end{split}
\end{equation}
\end{cor}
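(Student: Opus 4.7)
The plan is to prove the stated inequality by a contour deformation that trades the singularity of $1/(\zeta-z)$ against the H\"older bound on $Q_{\varepsilon}$. The first equality in the display is immediate once we remember that, by the conventions set up at the start of Section \ref{KBB}, all contour integrals are genuine continuum integrals of piecewise-linear functions. Hence $\oint_{\Gamma_\varepsilon}\frac{d\zeta}{\zeta - z} = 2\pi i$ exactly (since $z$ lies strictly inside $\Gamma_\varepsilon$), which lets us subtract $Q_\varepsilon(z)$ inside the integrand.

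Next I would introduce a discrete Jordan contour $B_\varepsilon$ built from lattice edges which approximates a Euclidean circle of radius $d_2$ centered at $z$, so that $z\in\mathrm{Int}(B_\varepsilon)\subset\mathrm{Int}(\Gamma_\varepsilon)$, $|B_\varepsilon|\lesssim d_2$, and on $B_\varepsilon$ one has $|\zeta-z|\asymp d_2$. Let $A_\varepsilon$ denote the annular (discrete) region with $\partial A_\varepsilon=\Gamma_\varepsilon - B_\varepsilon$. Writing the integral as
\[
\oint_{\Gamma_\varepsilon}\frac{Q_\varepsilon(\zeta)-Q_\varepsilon(z)}{\zeta-z}\,d\zeta
= \oint_{B_\varepsilon}\frac{Q_\varepsilon(\zeta)-Q_\varepsilon(z)}{\zeta-z}\,d\zeta
+ \oint_{\partial A_\varepsilon}\frac{Q_\varepsilon(\zeta)-Q_\varepsilon(z)}{\zeta-z}\,d\zeta
\]
puts the singularity only on the inner piece while the outer (annular) piece involves an integrand which is the product of the analytic function $q(\zeta):=1/(\zeta-z)$ (which is indeed holomorphic on $A_\varepsilon$) and the $(\sigma,\rho)$--holomorphic function $Q_\varepsilon-Q_\varepsilon(z)$ (subtraction of a constant preserves both properties in Definition~\ref{nhf}).

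For the annular integral I would invoke Lemma \ref{panal} on the contour $\partial A_\varepsilon$ with $q=1/(\zeta-z)$, noting that on $A_\varepsilon$ we have $\|q\|_\infty\lesssim 1/d_2$ and $\|q\|_{C^1}\lesssim 1/d_2^{\,2}$. Since $|\mathrm{Int}(\partial A_\varepsilon)|=|A_\varepsilon|\le|\mathrm{Int}(\Gamma_\varepsilon)|$ and $|\partial A_\varepsilon|\le|\Gamma_\varepsilon|+|B_\varepsilon|\lesssim|\Gamma_\varepsilon|$, the lemma yields
\[
\Bigl|\oint_{\partial A_\varepsilon}\frac{Q_\varepsilon(\zeta)-Q_\varepsilon(z)}{\zeta-z}\,d\zeta\Bigr|
\lesssim \Bigl(\frac{\varepsilon^\rho}{d_2 D}+\frac{D^\sigma}{d_2^{\,2}}\Bigr)\bigl(|\mathrm{Int}(\Gamma_\varepsilon)|+|\Gamma_\varepsilon|\cdot D\bigr),
\]
which is the first major term in \eqref{fms}. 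For the small-circle integral, I would use the H\"older clause of $(\sigma,\rho)$--holomorphicity: since $z$ is at Euclidean distance $d_1$ from $\partial\Lambda_\varepsilon\supseteq\Gamma_\varepsilon$ and $|\zeta-z|\le d_2<d_1$, the point $\zeta$ lies in a $d_1$-neighborhood of $z$ where the H\"older bound gives $|Q_\varepsilon(\zeta)-Q_\varepsilon(z)|\lesssim(|\zeta-z|/d_1)^\sigma\lesssim(d_2/d_1)^\sigma$; combined with $|1/(\zeta-z)|\asymp 1/d_2$ and $|B_\varepsilon|\asymp d_2$, this produces the final $(d_2/d_1)^\sigma$ term.

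The main obstacle, and the part warranting most attention, is the deformation itself: Lemma \ref{panal} is stated for a contour bounding a simply connected discrete region, whereas $A_\varepsilon$ is annular. I would handle this either by cutting $A_\varepsilon$ along a single radial lattice slit (whose contributions cancel in pairs when we form $\oint_{\partial A_\varepsilon}$) and then applying the lemma to the resulting simply connected region, or equivalently by reexamining the box-grid proof of Lemma \ref{panal} and observing that it applies verbatim once the boxes tile $A_\varepsilon$. A secondary technical point is the existence of a lattice $B_\varepsilon$ with the claimed length and enclosure properties; this is standard and should be done at the outset so the remaining estimates can be stated cleanly.
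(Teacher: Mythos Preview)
Your proposal is correct and follows essentially the same route as the paper: introduce a small circular contour of radius $d_2$ about $z$, apply Lemma~\ref{panal} with $q(\zeta)=1/(\zeta-z)$ on the region between it and $\Gamma_\varepsilon$ (so that $\|q\|_\infty\lesssim d_2^{-1}$ and $\|q\|_{C^1}\lesssim d_2^{-2}$), and then use H\"older continuity on the inner circle. The paper handles the annular topology exactly by your first suggestion---it adjoins a back-and-forth traverse (keyhole slit) to $\Gamma_\varepsilon$ and the reversed small circle to form a single simply connected contour $\Gamma_\varepsilon'$, so your concern about Lemma~\ref{panal} being stated only for simply connected regions is resolved in precisely the way you anticipated.
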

\begin{proof}
This is the adaptation of standard arguments from the elementary theory of analytic functions to the present circumstances.  Let $\gamma_{d_{2}}$ denote an approximately circular contour 
that is of radius $d_{2}$ and which is
centered at the point $z$.  Let $\Gamma_{\e}^{\prime}$ 
denote the contour $\Gamma_{\e}$ together with 
$\gamma_{d_{2}}$ -- traversed backwards -- and a back and forth traverse connecting the two.  We have, by Lemma \ref{panal}, that
$$
 |\frac{1}{2\pi i} \oint_{\Gamma_{\e}^\prime} 
 (Q_{\varepsilon}(\zeta))\cdot \frac{1}{\zeta - z}~d\zeta~|
 \lesssim \left(\frac{\e^\rho}{d_{2} D} + \frac{D^\sigma}{d_{2}^2}\right) \cdot (|{{\rm{Int}}}(\Gamma_\e)| + |\Gamma_\e| \cdot D)
$$
where, in the language of this lemma, 
we have used $\|q\|_\infty \lesssim d^{-1}_{2}$ and 
$\|q\|_{C^1} \lesssim d^{-2}_{2}$.
Thus we write
$$
G_{\e}(z) = \frac{1}{2\pi i}\oint_{\gamma_{d_{2}}}\frac{Q_{\e}(\zeta)}{\zeta - z }d\zeta + \mathcal E_{2},
$$
with $|\mathcal E_{2}|$ bounded by the right hand side of the penultimate display.  
So, subtracting $Q_{\e}(z)$ in the form
$$
Q_{\e}(z)  =  
\frac{1}{2\pi i}\oint_{\gamma_{d_{2}}}\frac{Q_{\e}(z)}{\zeta - z }d\zeta,
$$
we have that 
$$
|G_{\varepsilon}(z) - Q_{\varepsilon}(z)| \lesssim |\mathcal E_{2}|  +
\frac{1}{2\pi}\oint_{\gamma_{d_{2}}}\frac{|Q_{\e}(z) - Q_{\e}(\zeta)|}{\zeta - z }d\zeta
$$
and the stated result follows immediately from the H\"older continuity of 
$Q_{\e}$.
\end{proof}

By inputing information on $|\partial \Omega_n^\square|$, the required Cauchy--integral estimate now follows:

\noindent\textit{Proof of Lemma \ref{ci}.}  We first recall the statement of the lemma: 

\textit{Let $\Omega_n^\square$ and $S_n^\square$ be as in Proposition \ref{pr} so that 
\[|\partial \Omega_n^\square| \leq n^{\alpha(1-a_1)},\]
where $M(\partial \Omega) < 1 + \alpha$. 
For $z \in \Omega_n^\square$ (with the latter regarded as a continuum subdomain of the plane) let 
\begin{equation}
\label{FS}
F_n^{\square}(z) = \frac{1}{2\pi i} \oint_{\partial \Omega_n^\square} \frac{S_n^\square(\zeta)}{\zeta - z}~d\zeta.  
\end{equation}
Then for $a_1$ sufficiently close to 1 there exists $0 < \beta < \sigma, \rho$ and some $a_5 > 0$ such that for all $z \in \Omega_n^\square$ so that $\mbox{dist}(z, \partial \Omega_n^\square) > n^{-a_5}$,
$$
\left|S_n^\square(z) - F_n^{\square}(z) \right| \lesssim n^{-\beta}.
$$
}

By Proposition \ref{sfp}, we have that the functions 
$S_n^\square(z)$ (with $\e = n^{-1}$) have the 
$(\sigma, \rho)$--holomorphic property.   In addition, we shall also have to keep track of a few other powers of $\e$, which we now enumerate:
\begin{enumerate}[i)]
\item let us define $b_1 > 0$ so that in macroscopic units we have 
\[ |\partial \Omega_N^\square| \leq \e^{-\alpha (1-a_1)} := \e^{-\alpha b_{1}};\]
\item let us define
\[ d_2:= \e^s, \]
for some $s > 0$ to be specified later;  
\item finally, we define
\[ D := \e^t, \]
where the role of $D$ will be the same as in the proof of Lemma \ref{panal} (it is the size of a renormalized block).
\end{enumerate}

Plugging into Corollary \ref{cauchy} (again with $n^{-a_5} = d_1$) 
we obtain that 
\[\begin{split} \left|S_n^\square(z) - F_n^{\square}(z) \right| &\lesssim \left(\frac{\e^\rho}{d_2 D} + \frac{D^\sigma}{d_2^2}\right) \cdot (|{{\rm{Int}}}(\partial \Omega_n^\square)| + |\partial \Omega_n^\square| \cdot D) + \left(\frac{d_2}{d_1} \right)^\sigma\\
&\lesssim \left(\e^{\rho - (s+t)} + \e^{t\sigma - 2s} \right) \cdot (1 + \e^{-\alpha b_1 + t}) + \frac{\e^{s\sigma}}{d_1^\sigma}\\
&= \e^{\rho - (s+t)} + \e^{\rho - s - \alpha b_1} + \e^{t\sigma - 2s} + \e^{(1+\sigma)t - \alpha b_1 - 2s} + \frac{\e^{s\sigma}}{d_1^\sigma}.
\end{split}\]

With $\sigma$ fixed, the parameters $s, t > 0$ and $d_1$ can be chosen so that all terms in the above are positive powers of $\e$: set $t = \lambda \sigma$, where $\lambda \in (0, 1)$ so that $\sigma > \frac{1-\lambda}{\lambda}$.  This choice of $t$ implies that $(1 + \sigma)t > \sigma > t$.
Now let $s > 0$ and $b_1 > 0$ be sufficiently small so that $2s < t\sigma$ and $\alpha b_1 < t$ so altogether we have the last two terms are positive powers of $\e$. Next take $t$ and then $s$ and $b_1$ even smaller if necessary, we can also ensure that $\rho > s+t$ and $\rho > s + \alpha b_1$.  Finally, $d_1$ can be chosen to be some power of $\e$ so that $\e^s \ll d_1$.

\qed

\section{Harris Systems}
\label{SBT}

This last section is devoted to the proof of Theorem \ref{hreg}, although the construction may be of independent interest and find further utility.

\subsection{Introductory remarks}  For many purposes, the pertinent notion of distance -- or separation -- is Euclidean; in the context of critical percolation, what is more often relevant is the \textit{logarithmic} notion of distance:  how many scales separate two points.  These matters are relatively simple deep in the interior of a domain or in the presence of smooth boundaries.  However, for points in the vicinity of rough boundaries, circumstances may become complicated.  For certain continuum problems, including, in some sense, the limiting behavior of critical percolation, there is a natural notion for a system of increasing neighborhoods about a boundary point:   the preimages under uniformization of the logarithmic sequence of cross cuts
centered about the preimage of the boundary point in question.  
This device was employed implicitly and explicitly at several points in \cite{BCL2}.  In the present context, we cannot so easily access 
the limiting behavior we are approaching.  Moreover, 
in order to construct such a neighborhood sequence at the discrete level, 
it will be necessary to work directly with $\Omega_{n}$ itself.  

We will construct a neighborhood system for each point in $\partial\Omega_{n}$ by inductively exploring the entire domain via a sequence of crossing questions.  
Our construction demonstrates (as is \emph{a posteriori} clear from the convergence of $S_n$ to a conformal map) that various domain irregularities 
e.g., nested tunnels,
which map to a small region under uniformization are, in a well--quantified way, also unimportant as far as percolation is concerned.

\subsection{Preliminary Considerations}\label{dfss}

For completeness let us first recall the setting.  Let  $\Omega \subset \mathbb C$ be a simply connected domain with 
$\text{diam}(\Omega) < \infty$
and let 
$2\Delta$ denote the supremum of the radius of all circles which are contained in 
$\Omega$.  
Further, let 
$\mathcal D_{\Delta}$ denote a circle of radius $\Delta$ with the same center as a circle for which the supremum is realized.  We will denote by $\Omega_m$ any interior discretization of $\Omega$, e.g., one of the types discussed before; we use $n^{-1}$ to denote the lattice spacing.  For 
$\omega \in\partial \Omega_{m}$
we will define a sequence of segments the boundaries of which are paths beginning and ending on $\partial \Omega_{m}$.  As a rule, these segments separate 
$\omega$ from $\mathcal D_{\Delta}$.  The dimensions of these segments will be determined by percolation crossing probabilities analogous to the system of annuli (of which these are fragments) investigated by Harris in \cite{H}.  
Notwithstanding that the regions between segments do not actually form annuli, 
we will refer to the resultant objects as \textit{Harris rings} -- or, occasionally, ring segments, annular fragments, etc.  The ultimate goal will be to ensure that Harris rings have uniform upper and, to some extent, lower bounds on their crossing probabilities (among other properties).  Moreover, these represent the essence of what must be traversed by paths emanating from $\mathcal D_{\Delta}$ which reach to ``the essential vicinity'' of the point $\omega$.  Details will unfold with the construction.  The ultimate object will be called \textit{the Harris system stationed at} $\omega$.  

We will start with the preliminary considerations of the construction.   
Let $S_0(\omega)$ denote the smallest square 
(i.e., lattice approximation thereof) which is
centered at $\omega \in \partial \Omega_m$ and whose  boundary intersects $\mathcal D_{\Delta}$.  That is to say, the boundary is approximately 
tangent to $\partial \mathcal D_\Delta$.  We set $R_0(\omega) := S_0(\omega) \cap \Omega_m$.  Successive topological rectangles $A_1(\omega), \dots, A_{k}(\omega), \dots$ which may be \textit{envisioned} as the intersection of 
$\Omega_{m}$ with a nested sequence of square annuli  centered at $\omega$
will in practice 
be constructed via a non--trivial inductive procedure: 1) there will be deformations of the shape of the annular segments; 2) the sizes of the ``smaller squares'' (i.e., the location of the ``next'' boundary) will be determined by percolation crossing probabilities; 3) the basic shape will not always be a square centered at 
$\omega$.
Nevertheless, we will call these annular (ring) fragments.

The annular fragment $A_k$ will have four boundary pieces, forming a topological rectangle; opposing pairs of boundaries will be denoted as yellow and blue with blue corresponding to a portion of $\partial \Omega_{m}$.  The rectangle $A_{k}$ will constitute an arena for exclusive crossing type events e.g., yellow crossings between the yellow boundaries and blue crossings between the blues.  A good portion of our inductive procedure involves the refinement and coloring of the boundaries.  

All points on the yellow boundaries can be connected to $\omega$ via a (self--avoiding) path in the complement of $\partial \Omega_{m}$ and in the complement of the blue boundaries.  The outer and inner yellow boundaries may be -- somewhat loosely -- defined by the stipulation that all such paths to the outer boundary must first pass through the inner boundary.    Already, it is the case that all of $\partial \Omega_{m}\cap\partial A_{k}$ is blue; indeed, envisioning $A_{k}$ as the intersection of 
$\Omega_{m}$ with an actual square annulus, some of the blue boundary will be where $\partial \Omega_{m}$ cuts through such a ring.  

Key in the initial portion of the construction is that for some
$\vartheta$ with
 $0 < \vartheta < 1/2$, it will be the case that the probability of a yellow crossing between the yellow segments of the boundaries \emph{and} the probability of a blue crossing between the blue segments of the boundaries are both in excess of $\vartheta$ (and therefore less than $1-\vartheta$).  Eventually we will forsake the lower bound 
for the yellow crossings 
in favor of an ostensibly much smaller bound pertaining to geometric properties of the annular fragments which permit yellow crossings under tightly controlled conditions.  And, eventually, we may have to consider $\vartheta$ as a small parameter.  Nonetheless all quantities will be uniform in the ultimate 
progression of fragments and in the lattice spacing $n^{-1}$ for $n$ sufficiently large.

The essence of the geometric property which we will require of the Harris regularization scheme is that 
$\omega$ -- or its relevant vicinity -- can be connected to $\mathcal D_\Delta$ via a sequence of boxes housed within the ring fragments.  The size of the boxes
is uniform within a layer and
does not increase or decrease too fast between neighboring layers.
While the box sizes may be ``small'', this will only be relative to the characteristic scale of the layer via a numerical constant which is independent of $n$ and the particulars of the fragment.  Thus, the scale of the boxes may be considered comparable to the scale of the fragments to which they belong.

Dually, $\omega$ can be ``sealed off'' from $\mathcal D_{\Delta}$ by the independent events of separating paths which have an approximately uniform probability in each segment.  Thus we envision an orientation to our constructions leading from $\mathcal D_\Delta$ to $\omega$.  (Indeed, it is this orientation which permits us to choose the appropriate components to be colored yellow at various stages of the construction.)  Moreover, from these considerations, it emerges that only the first 
$O(\log n)$ of these segments are relevant for the problems under consideration.
If $\Omega_{m}$ has a smooth boundary this would, in fact, be all of them; under general 
circumstances, as it turns out, the configurations in the region beyond the 
first $O(\log n)$ segments have negligible impact on the percolation problem at hand. 

We will describe what is fully required in successive stages of increasing complexity, but before we begin, let us dispense with some geometric and some lattice details.  While the definitions and conventions which follow are certainly not all immediately necessary -- and possibly unnecessary for an \textit{understanding} of the overall scheme --  
we have elected to display them at the outset in a place where they are readily accessible.  
The reader is invited to skim these lightly and 
later, if required, refer back to these paragraphs.

\vspace{.25 cm}

\noindent \textbf{Preliminary definitions.}
\label{ldts}
The moral behind the upcoming definitions is that all lattice details should be resolved in as organic a way as possible via the definition of the percolation properties for the model at hand.
All of the models of interest for us have representations which are hexagonal based: each model provides some \emph{smallest independent unit} (abbreviated SIU) in the sense that such a unit (a subset of the lattice) can be stochastically configured independently of one another and any smaller subset is either empty or is correlated with some neighbor.  In the case of independent hexagonal tiling, the smallest such unit is simply a hexagon whereas for the generalized models in \cite{CL} the smallest unit can be either a single hexagon or a flower (which consists of 7 hexagons).  All notions of neighborhood, self--avoiding, etc., 
are now to be thought of in terms of the intrinsic definition of connectivity associated with the underlying percolation problem.  However, it is pointed out that in the case of the models introduced in \cite{CL}, path transmissions may take place over fractions of hexagons.

\begin{figure}[ ]
%\vspace{.15 in}
\centering
\includegraphics[scale=.3]{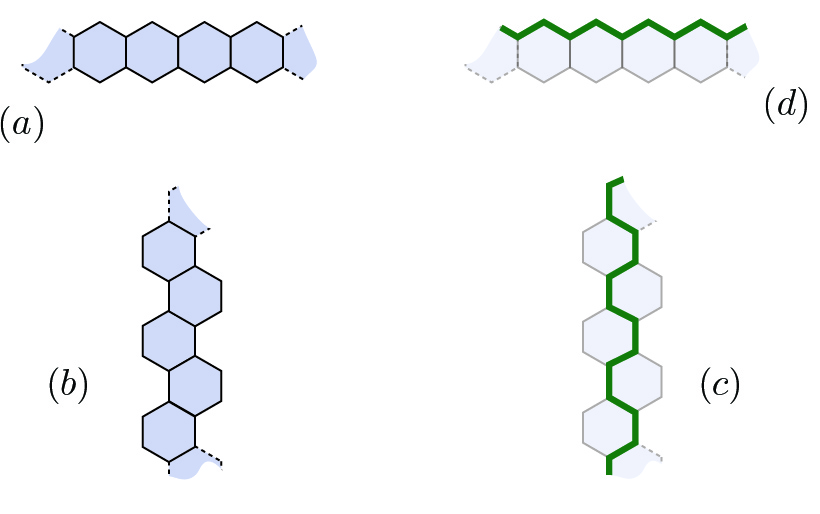}
\begin{changemargin}{.55in}{.55in}
%\flushright
\caption{{\footnotesize Various of notions of horizontal and vertical consistent with hexagonal tiling: $(a)$  a horizontal string $(b)$ a vertical string $(c)$ a vertical segment $(d)$ a horizontal segment. }}
\label{KXT}
\end{changemargin}
\vspace{-.3 in}
\end{figure}

\begin{enumerate}

\item  A \textit{string} of hexagons is a sequence of hexagons with no repeats each of which 
-- save the first -- has an edge in common with its predecessor.  Similarly, a \textit{segment} is a self--avoiding path  consisting of edges (which are the boundary elements of hexagons).  With apology, both objects will on occasion be referred to as paths; in all cases, the relevant notion will be clear from context and/or the distinction will be of no essential consequence.  Similarly, if 
$\mathcal A \subset \Omega_{m}$ is a connected set of hexagons 
then the boundary, $\partial \mathcal A$, could mean the 
\textit{hexagons} in $\overline{\Omega}_{m}\setminus \mathcal A$ which share an edge with an element of $\mathcal A$ or the segment consisting of the afore described shared \textit{edges}.  Again, the distinction will usually be unimportant but, if need arises will be made explicit.  

\item For $x, y\in \overline{\Omega}_{m}$, we will use 
$\mathscr P : x\leadsto y$ to denote a string in $\Omega_{m}$ connecting $x$ and $y$.  By contrast, the \textit{event} $\{x \leadsto y\}$ will mean, in the pure hexagon problem, the existence of a $\mathscr P:x\leadsto y$ such that each element of $\mathscr P$ is of the same color thereby forming a monochrome percolation connection (or transmission) between these points.  For the more general cases, 
$\{x \leadsto y\}$ also denotes this monochrome connectivity event but here some of the transmissions may take place through hexagon fragments (see, if desired, \cite{CL}, Figure 1).

\item We shall often use descriptions like horizontal and vertical, and this should be understood to mean the closest lattice approximation to either a horizontal or vertical segment.
The precise details are actually unimportant but
recall that, to be definitive, we have assumed that 
hexagons are oriented so that there are two vertical edges parallel to the $y$--axis.  For the moment, let us discuss the pure hexagon problem.  
  
  A horizontal string of hexagons
would consist of a sequence of neighboring hexagons 
the centers of which are horizontally aligned.  
Thus, each hexagon in the string shares only vertical edges in common with its neighbors or neighbor.  By contrast, a vertical string of hexagons will 
``zigzag'' a bit.  E.g., the $k^{\text{th}}$ hexagon in the sequence shares in common the segments to the left of the right vertical segment with the $(k \pm 1)^{\text{st}}$ hexagon. And similarly for the $(k + 2)^{\text{nd}}$ hexagon in the sequence.  More pertinently, a horizontal segment consists, e.g., of the tops of hexagons in a horizontal string with each hexagon in the string contributing (except, perhaps at the endpoints) the two edges above and adjacent to the vertical.  
A vertical segment alternates e.g., between left vertical edges on even hexagons and, for the odd hexagons, the entire left sides connect these (even vertical edges) together.  
Thus, horizontal segments are above or below horizontal strings and similarly vertical segments are to the left or right of vertical strings.  See figure \ref{KXT}.  Square boxes will be understood as already aligned with respect  to the fixed Cartesian axes with boundaries consisting of vertical and horizontal segments as described.  

There is a natural notion of neighboring segments, namely with a fixed \textit{string} in mind, the segment which is of (approximately) the same length that is on the other side of the string.  More formally, given a segment and associated  string, the next segment in the direction of the string is given by the symmetric difference over all hexagons in the string of the edges of each hexagon and then deleting the initial segment.  These observations will be useful when we define the concept of \textit{sliding}.  Notwithstanding, the ends of strings and segments must often be tended to on an individual basis as is the case of composite SIU.

For extended SIU -- flowers -- for now it is only important to 
state a few rules:  for strings, the entire SIU must be incorporated as a single unit.  It should be noted that this incorporation can have a variety of manifestations; e.g., in the case of horizontal segments, there are three levels at which the string may impinge.  However, all pure hexagons in a horizontal string must still have their centers horizontally aligned.  Similarly for the case of vertical strings.  
In conjunction with the above, 
segments are forbidden to pass through the flower and therefore must circumvent.  In the context of extended SIU, there will be some small amount of ``special considerations'' when we 
turn our 
discussion to distances and neighborhoods.  
\item If If $x = (x_1, x_2)$ and $y = (y_1, y_2)$ are SIU in $\Omega_m$
and $V\subseteq \Omega_{m}$ then we define the distance 
$d_{\infty}^{V}(x,y)$ to be the $L^{\infty}$-- distance, i.e., $\max\{|x_1 - x_2|, |y_1 - y_2|\}$, 
subject to the local connectivity constraint that they can actually be connected within the stated distance scale.  With precision:  if $b_{L}$ is notation for a box of side--length $2L$ then
$$
d_{\infty}^{V}(x,y) =
\min\{L\mid \exists b_L \text{ such that a string in $b_{L}$ contains $x$ and $y$}
\}.
$$
If no such $b_{L}$ exists -- i.e., if it is not the case that both $x$ and $y$ are in $V$, 
than we will regard $d_{\infty}^{V}(x,y)$ as being infinite.

If $\mathscr X$ and $\mathscr Y$ are sets in $V$, then $d_\infty^{V}(\mathscr X, \mathscr Y)$ is defined in the usual way, i.e., 
$$
d_\infty^{V}(\mathscr X, \mathscr Y) = 
\inf_{x \in \mathscr X\hspace{-2 pt},  y \in \mathscr Y}~d_\infty^{V}(x, y).  
$$

Once again, with apology, we will also use $d_{\infty}^{(\cdot)}(\cdot)$ to denote similar minded distances between pairs (or sets) of edges and, often enough, omit the superscript when it is clear from context.

\item Similar to the above item:  let $\Gamma$ denote a segment or a string.  Then 
$\|\Gamma\|_\infty$ denotes the $d_{\infty}^{\Gamma}$-- or $L^{\infty}$-- \textit{diameter} of $\Gamma$.

\item
 A \emph{topological rectangle} is a simply connected subset of $\Omega_{m}$ containing all its SIU that has four marked points on the boundary dividing the boundary into two pairs of ``opposing'' segments: 
 $(\mathscr L, \mathscr R)$ and $(\mathscr T, \mathscr B)$.  We envision such a rectangle as a stadium for a dual percolation crossing problem. 

\end{enumerate}

\begin{defn}[Sliding]\label{ddd}
Let $q$, $q^{\prime}$ denote points in 
$\overline{\Omega}_{m}$.  Let $z_{1}, z_{2}\in \partial \Omega_{m}$ and let 
$\Gamma$ denote a segment in $\Omega_{m}$, connecting
$z_{1}$ to $z_{2}$ and which separates 
$q$ and $q^{\prime}$.  Thus, $\Omega_{m}$ is disjointly decomposed into the \textit{connected components} $\mathscr C_{\Gamma}(q)$ and $\mathscr C_{\Gamma}(q^{\prime})$ of $q$ and 
$q^{\prime}$ respectively with the sets 
$\mathscr C_{\Gamma}(q)$ and $\mathscr C_{\Gamma}(q^{\prime})$ having 
$\Gamma$ as a common portion of their boundaries.  If necessary, it is assumed that $\Gamma$ is such that $\mathscr C_{\Gamma}(q)$ -- and hence 
$\mathscr C_{\Gamma}(q^{\prime})$ do not contain any partial SIU.  Moreover, to avoid spurious complications, it is assumed that the $L^{\infty}$ separation between $\Gamma$ and $q^{\prime}$ is at least a few units in excess of the $\ell$ to be discussed below.

We now define the \textit{sliding} of $\Gamma$ by $\ell$ units in the direction of 
$q^{\prime}$.  
In essence, this is the construction of a new segment -- which we denote by 
$\Gamma^{\prime}$ -- that is $\ell$ units closer to $q^{\prime}$ and, correspondingly $\ell$ units further from $q$.  The segment 
$\Gamma^{\prime}$ also connects some $z_{1}^{\prime}$ and 
$z_{2}^{\prime}$ on $\partial \Omega_{m}$  and, most importantly, also separates $q$ from $q^{\prime}$.  The segment $\Gamma^{\prime}$ in most cases can be envisioned as a displacement of 
$\Gamma$ with some natural adjustments.  We reiterate that, as was the case above, the microscopic details are not essential and may be omitted in a preliminary reading.

Let $\mathbf h_{\Gamma, q}$ denote the 
hexagons in $\mathscr C_{\Gamma}(q)$ which have at least one edge in
$\Gamma$ and for 
$h\in \mathbf h_{\Gamma, q}$, let 
$N_{\ell, \mathscr C(q^{\prime})}(h)$
denote the radius $\ell$ neighborhood of $h$ using the 
$d_{\infty}^{\mathscr C_{\Gamma}(q^{\prime})}$-- distance:
$$
N_{\ell, \mathscr C_{\Gamma}(q^{\prime})}(h) = 
\{
h^{\prime}\mid d_{\infty}^{\mathscr C_{\Gamma}(q^{\prime})}(h,h^{\prime}) \leq \ell
\}.
$$
Next we have 
$$
\tilde{N}_{\ell, \mathscr C_{\Gamma}(q^{\prime})}(\Gamma) = 
\underset {h\in\mathbf h_{\Gamma, q}}{\cup}N_{\ell, \mathscr C_{\Gamma}(q^{\prime})}(h)
$$
this set along with the completion of any partial SIU it contains may be regarded as the relevant $k$--neighborhood of $\Gamma$ and will be denoted as above without the tilde.   The ``$\ell$--slide'' of $\Gamma$ will be a subset of the \textit{edge} boundary of 
$N_{\ell, \mathscr C_{\Gamma}(q^{\prime})}(\Gamma)$.

Indeed, let us consider
$$
\tilde{\Gamma}^{\prime} = 
\partial N_{\ell, \mathscr C_{\Gamma}(q^{\prime})}(\Gamma)
\setminus (\partial \Omega_{m}\cup \Gamma)
$$
where in the above, both notions of boundary refer to edge boundaries.  It is evident that $\tilde{\Gamma}^{\prime}$ consists of one or more segments 
each of which must begin and end on the domain boundary.
% segments connecting successive (in the sense of positive traversal of the boundary from the original
%$z_{1}$ to $z_{2}$) pairs of candidates.  These segments will be denoted by 
Let us denote these segments by $\Gamma_{1}^{\prime}, \Gamma_{2}^{\prime}, \dots$; 
one of these will be selected as \textit{the} new segment 
$\Gamma^{\prime}$.

We now claim that exactly one $\Gamma^{\prime}_{j}$ separates $q$ from $q^{\prime}$.  Indeed let us reconsider the amalgamated $\tilde{\Gamma}^{\prime}$:
since $q^{\prime} \notin \overline{N}_{\ell, \mathscr C_{\Gamma}(q^{\prime})}(\Gamma)$ (because of the assumed distance between $q^{\prime}$ and $\Gamma$ in excess of $\ell$)
it is clear that 
$\tilde{\Gamma}^{\prime}$
separates $q^{\prime}$ from $\Gamma$ and hence certainly separates 
$q$ and $q^{\prime}$.  Thus any self--avoiding path of edges
$\Omega_{m}$ connecting $q$ and $q^{\prime}$ must intersect
$\tilde{\Gamma}^{\prime}$
and the above claim is now readily established. 

Indeed, consider the set
$\mathscr C_{\Gamma}(q^{\prime})
\setminus N_{\ell, \mathscr C_{\Gamma}(q^{\prime})}(\Gamma)$
which consists of connected components $K_{1}$, $K_{2}$, etc.  The edge boundaries of these $K_{j}$'s consist of an edge segment of $\partial \Omega_{m}$
joined together with an edge segment from $\tilde \Gamma^{\prime}$ the latter of which 
correspond to the aforementioned $(\Gamma^{\prime}_{j})$'s.  If $q^{\prime} \in \Omega_{m}$ then 
we observe that only one of these $K_{j}$ can contain it and this corresponds to the 
$\Gamma_{j}^{\prime}$ selected.  If 
$q^{\prime} \in \partial \Omega_{m}$, then, similarly, only one of the $K_{j}$ uses edges of this hexagon 
(where we again invoke the fact that the separation between $q^{\prime}$ and 
$\Gamma$ is in excess of $\ell$)
and \textit{this} corresponds to the 
$\Gamma_{j}^{\prime}$ selected. 
\end{defn}

Let us use $A$ as notation for the connected region bounded by $\Gamma$, $\Gamma^{\prime}$ and the appropriate portion of $\partial \Omega_{m}$.  Then it is noted that to within a few lattice spacings (due to inherent discreetness and the possible effects SIU) all points on $\Gamma^{\prime}$ are the same 
$d_{\infty}^{A}$-- distance from $\Gamma$.  However, it should be noted that due to the possibility of pockets that had been sealed off by the now discarded $\Gamma_{j}^{\prime}$ there might well be points in $A$ considerably further from $\Gamma$ than $\Gamma^{\prime}$.  Moreover, the contrast between 
$\Gamma^{\prime\prime}$ and $\Gamma^{\prime}$ with the former constructed via an $(\ell + 1)$--neighborhood slide will be regarded -- by definition --
as the (precursory) addition, modulo the aforementioned discreet irregularities, of a single layer to $A$.

\subsection{Preliminary Constructions}
Using the above described sliding procedure, we will start with some initial segment which separates $\omega$ from
$\mathcal D_{\Delta}$ (or some representative point therein)
and consider the sequence of slides indexed by $\ell$ in the direction of $\omega$.
We stop this procedure when certain criteria are satisfied within the region (which is a topological rectangle) whose boundary consists of the initial segment, the current slide and the relevant portions of the boundary.  This region will be referred to as a \textit{ring fragment} and the criteria will pertain to crossing probabilities within and certain geometric properties of the region itself.    When the requisite criteria are satisfied, we will refer to the current slide (or certain modifications thereof) as the 
\textit{successor} segment of the initial segment.  It is this successor segment
 which then plays the role of the ``initial'' segment when the next successor is to be defined.  

We remark that the overall construction is somewhat intricate due to the modifications to which we have already alluded.  Indeed, 
the full assemblage will actually require an \textit{inductive} procedure.  Our expository methodology is as follows: we will first describe the one--step procedure, i.e., constructing an acceptable successor segment from some given segment and then describe the full logical inductive procedure in a last subsubsection; further, the one--step procedure may in itself be complicated so we have therefore broken the construction into three stages which we will call the $S$--construction, the $Q$--construction and the $R$--construction.  We will describe them in order as they require more and more detailed control on successive Harris segments.  In all cases we will refer to the running (current) initial segment as 
$Y_{I}$ and the constructed successor segment as 
$Y_{F}^{(\cdot)}$ where $(\cdot)$ indexes the relevant modification.
At each stage, we need not make any specific claim as to the nature of 
$Y_{I}$ however, it may be generally assumed that 
$Y_{I}$ is such that it \textit{already} satisfies the criteria $(\cdot)$.  

Finally, we remark that for simple domains e.g., convex domains or domains bounded only by straight line segments, the vast majority of the forthcoming is 
unnecessary:  here we create successors as just described stopping when crossing probability within the region acquires a desired value (the simplified S--construction) -- or when a \emph{fixed} scale determined by the domain has been reached -- and then divide the region into boxes the scale of the region itself (the simplified  R--construction).  All of the up and coming  technology concerns the possibility of (multi--scale) fjords, needle--like tunnel structures (on multi--scales) and similar minded impediments.   

\subsubsection{The $S$--Construction \rm{(Harris ring fragments)}}  The initial stage is the $S$--construction, which starts with a square centered at the specified $\omega  \in \partial \Omega_m$.  Recall 
the square $S_{0}(\omega)$ centered about $\omega$
and note that since $\partial \Omega_m \cap \partial \mathcal D_\Delta = \emptyset$ whereas $\partial \mathcal D_\Delta$ has non--trivial intersection with $\partial S_0(\omega)$, we can declare the first yellow segment of the boundary of the first ring fragment to be, simply, the connected component of $\partial S_0(\omega) \cap \partial \mathcal D_\Delta$ in $\Omega_m$. We denote this yellow boundary portion by $Y_{0}$.  

Consider a a slide of $Y_0$ which we temporarily denote $\mathbb X_S$ and let $S = \mathscr C_{\mathbb X_S}(\omega)$.  For immediate use \textit{and} for future reference, let us define some auxiliary objects:
 
\noindent (a) \hspace{3 pt} 
Let $\mathbf P_{\omega}$ be the set of all paths 
$\mathscr P : \omega \leadsto Y_{0}$ defined as usual: they are self--avoiding paths (strings) consisting of hexagons with $\mathscr P \subseteq \Omega_m \cup \{\omega\}$, and $\mathscr P \cap \partial \Omega_m = \{\omega\}$.

\noindent (b) \hspace{3 pt}  Let $\mathbb Y_{S} \subseteq \mathbb X_S$ be such that any $z \in \mathbb Y_{S}$ is the last point in $S$ for \emph{some} 
$\mathscr P\in \mathbf P_{\omega}$.  Note that in general these could be SIUs.

\noindent (c) \hspace{3 pt} Let $Y_{S} \subset \mathbb Y_{S}$ be the set of points 
(edges) in
$\mathbb Y_{S}$ that can be reached from $\omega$ by a portion of a path 
$\mathscr P \in \mathbf P_{\omega}$ which lies in the complement of 
$\mathbb Y_{S}$.  

\noindent
In general a string $\mathscr P$ may be regarded as a sequence of neighboring hexagons, a sequence of centers of neighboring hexagons (points of the original triangular lattice) connected by straight line segments (bonds of the original triangular lattice) or the corresponding sequence of edges separating the elements of the hexagon description (bonds of the honeycomb lattice dual to the aforementioned bonds).  To avoid excessive clutter, here and in the future, we will be non--specific about such matters when they are of no actual consequence to the main argument.
Some illustrations which indicate generic differences 
between the sets
$Y_S$ and $\mathbb Y_S$ can be found in the \textit{inserts} of Figure \ref{TLB}.    

\begin{figure}[ ]
%\vspace{.15 in}
\centering
\includegraphics[scale=.19]{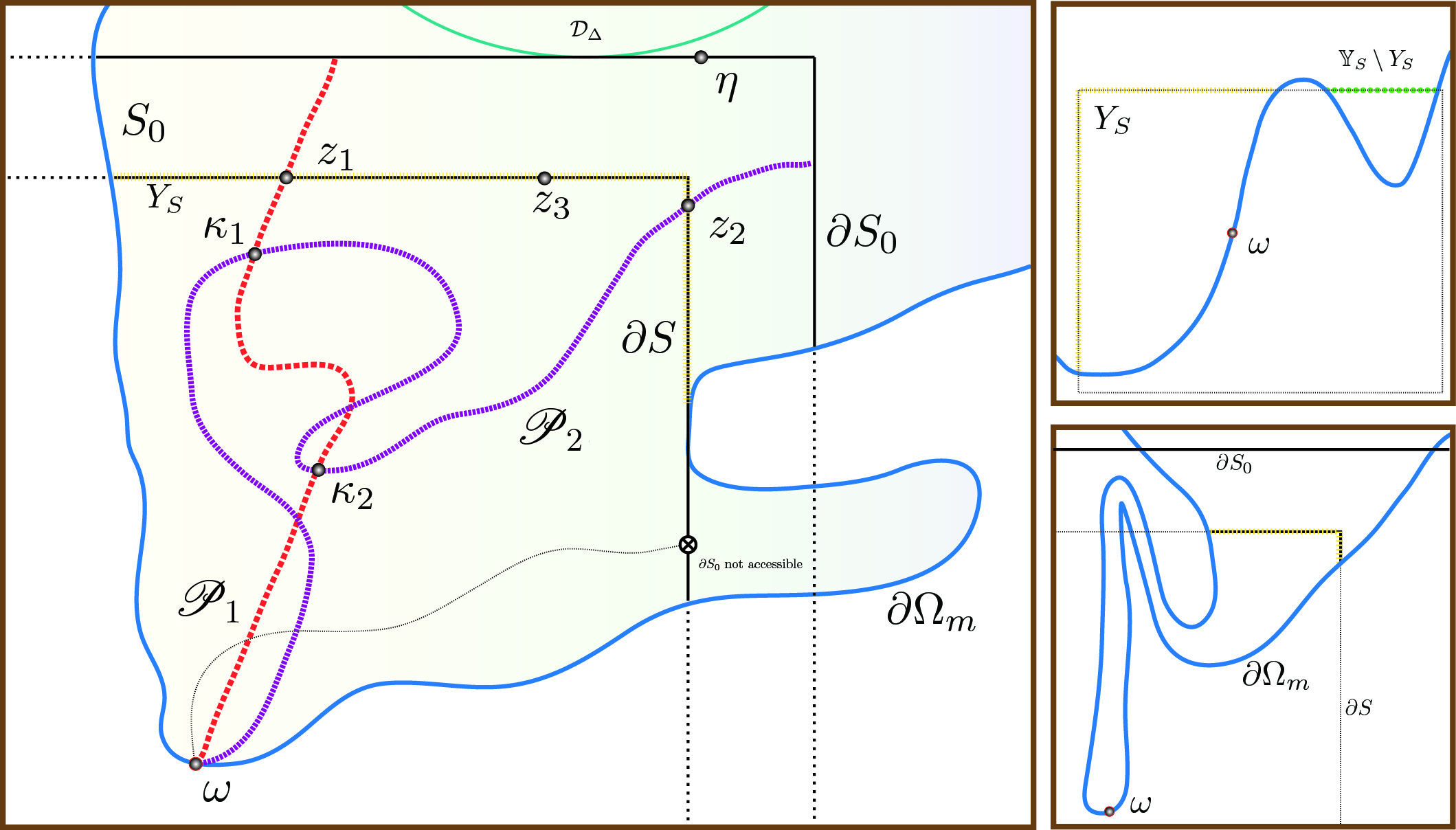}
\begin{changemargin}{.55in}{.55in}
%\flushright
\caption{{\footnotesize Illustration of the principal concepts for the proof that $Y_{S}$ is a connected set.
In the example illustrated, the loop $\mathscr L$ is completed using portions of $\partial S_{0}$.
Top insert:  $\mathbb Y_S \neq Y_S$.
Bottom insert:  $\mathbb Y_S = Y_S$.}}
\label{TLB}
\end{changemargin}
\vspace{-.3 in}
\end{figure}

We shall refer to these $Y_{S}$ as \textit{yellow segments}.  We will now establish some elementary topological properties of these yellow segments $Y_{S}$;
in particular, we claim that as \textit{segments}, the $Y_S$ are in fact well--defined:

\bigskip

\noindent {\bf Claim.}  The set $Y_S$ 
as described above is non--empty and consists of a single 
connected component.  Moreover, $Y_{S}$ separates $\omega$ from $Y_{0}$, i.e., every path in $\mathbf P_\omega$ contains at least one element of $Y_S$.

\bigskip

\noindent{\emph{Proof of Claim.}} To establish that $Y_{S}$ is non--empty, let 
$\mathscr P\in \mathbf P_{\omega}$ and let $z\in\mathbb Y_{S}$ denote the 
\textit{first} element of $\mathbb Y_{S}$ encountered by $\mathscr P$.  
Further, let $\mathscr P_{\text{I}}$ denote the (initial) portion of $\mathscr P$ till it reaches 
$z$.  Now, by definition of the set there is some other 
$\mathscr P^{\prime}\in \mathbf P_{\omega}$ (which is possibly the same as $\mathscr P$)
such that $z$ is the last element of $\mathscr P^{\prime}$ which is in $S$.  
We denote by $\mathscr P_{\text{F}}^{\prime}$ the (final) portion of 
$\mathscr P^{\prime}$ after it has passed through $z$.  Now consider the path 
$\mathscr R$ which may be informally described as 
$\mathscr P_{\text{I}}\cup\{z\}\cup\mathscr P_{\text{F}}^{\prime}$.  
It is seen that $\mathscr R$ satisfies the requisite properties of a path described in
item $($c) above for the point $z$ and so, indeed, $Y_{S}$ is not empty.   

An argument along these lines also demonstrates that $Y_{S}$ separates 
$\omega$ from $Y_{0}$:  again letting 
$\mathscr P \in \mathbf P_{\omega}$, we can consider $z$ as above.  The same argument then shows that $z \in Y_S$ and thus we conclude that \textit{every} path in 
$\mathbf P_{\omega}$ contains an element of $Y_{S}$.  

We now turn to the central portion of the claim, namely that $Y_{S}$ is a single component.  To this end, let us consider the ancillary paths which are along the lines of $\mathscr R$ above.  This set of paths will be called 
$\mathbf P_{S}$ and each $\mathscr P \in \mathbf P_{S}$
has the property that it has a unique encounter with $\mathbb Y_{S}$ -- which is necessarily a single element $z$ in 
$Y_{S}$.   
So, in particular, each $\mathscr P\in \mathbf P_{S}$ divides into
a $\mathscr P_{\text{I}}$ -- before the encounter with $Y_{S}$, a 
$\mathscr P_{\text{F}}$ -- after the encounter with $Y_{S}$, and a point $z$ in 
$Y_{S}$ itself.  

Foremost, it is clear by construction of paths along the lines of $\mathscr R$ 
that any $z \in Y_{S}$ has a $\mathscr P\in \mathbf P_{S}$ which goes through $z$. 
We further define $V_{S}$ to be the connected component of $\omega$ relative to the 
$\mathscr P_{\text{I}}$'s just described:
$$
V_{S} = \{
v\in \Omega_{m}\mid v\in\mathscr P_{\text{I}} 
\text{ for some } \mathscr P_{\text{I}} \subset \mathscr P \in \mathbf P_{S}
\}
$$
Similarly, we define $U_{S}$ to be the points reachable by some 
$\mathscr P_{\text{F}} \subset \mathscr P \in \mathbf P_{S}$.  Formally, we define both sets as disjoint from $Y_{S}$ and adjoin $\omega$ to $V_{S}$.  

Now it is noted that $\partial \Omega_m$ may in general divide $\partial S$ into many components so we are showing that only one of these contains $Y_{S}$.  
We refer the reader to Figure \ref{TLB} for the up and coming argument.  
We must demonstrate that any  
$z_{1}, z_{2} \in Y_{S}$ must be in the same component.  
We may as well assume that $z_{1} \neq z_{2}$ and that they are not neighbors, since otherwise it is immediate 
that they are in the same component.

Consider paths  $\mathscr P_1, \mathscr P_2$, in $\mathbf P_{S}$
which
contain $z_1$ and $z_2$, respectively.  The initial portions of these paths 
$\mathscr P_{\text{I},1}$ and $\mathscr P_{\text{I},2}$ may very well intersect, so let 
$\kappa_{1}$ denote the last such point along $\mathscr P_{\text{I},1}$ and similarly for $\kappa_{2}$; if there are no such point along the paths proper, then we define $\kappa_{1} = \kappa_{2} = \omega$.  Nevertheless, we certainly allow  the possibility that $\kappa_{1} = \kappa_{2}$.
Assuming though that 
$\kappa_{1} \neq \kappa_{2}$, there is (for the sake of definitiveness) a portion of 
$\mathscr P_{\text{I},1}$  connecting $\kappa_{2}$ to $\kappa_{1}$ since, by definition, $\kappa_{1}$ was the last such ``collision''  point along $\mathscr P_{\text{I},1}$.
Thus, starting at $z_{2}$, moving (backwards) along $\mathscr P_{2}$ till $\kappa_{2}$ is reached then moving (forwards) along $\mathscr P_{1}$ through $\kappa_{1}$ and on to $z_{1}$ we have achieved a path
$z_{2}\leadsto z_{1}$ which is entirely in $V_{S}$.  This latter path we regard as one part of a loop.

On the other side of $Y_{S}$, in $U_{S}$, we follow a similar procedure and define, in an analogous fashion, the points $\eta_{1}$ and $\eta_{2}$.  Again, if no such points exist along the paths proper,
then we join these paths together using the the relevant portion of $Y_{0}$.  This constitutes another path 
$z_{1}\leadsto z_{2}$ which is disjoint from the first.  This is the second part of the loop the entirety of which 
-- including $z_{1}$ and $z_{2}$ -- 
is denoted by $\mathscr L$.  Now, with the possible exception of $\omega$ itself, 
$\mathscr L$ is disjoint from $\partial \Omega_{m}$. Thus, $\text{Int}(\mathscr L) \subset \Omega_{m}$
and so the entire portion of $Y_S$ in between $z_{1}$ and $z_{2}$ lies within.  

Let $z_{3}$ be one such point in between.  Consider any path residing in 
$U_{S}$
from $z_{3}$ to the second half of the loop.  (Such a path may be acquired by attempting a path 
$z_{3}\leadsto \partial S_{0}$ -- in $U_{S}$ -- until obstructed.)  Joining this path with the relevant portions of, 
$\mathscr P_{\text{F},1}$ and/or $\mathscr P_{\text{F},2}$
 we now \textit{have} a path  $z_{3}\leadsto \partial S_{0}$ in 
$U_{S}$.  Performing a similar construction in $V_{S}$ we get a path $z_{3}\leadsto \omega$ in $V_{S}$ and putting these two together we have constructed a  \textit{bona fide} $\mathscr P_{3} \in \mathbf P_{S}$ 
which contains $z_{3}$.  We have therefore demonstrated $z_{3} \in Y_{S}$ as desired. \qed
 
% We define the set $Y_{S}$ as follows:
%Let $\mathbf P_{\omega}$  be the set of all  starting at $\omega$ and ending at 
%$Y_{0}$.  Each $\mathscr P\in \mathbf P_{\omega}$ has a final hexagon in $S$; we denote by $Y_{S}$ the set of all such hexagons (or, if relevant the SIU which contains such a hexagon).  
% which for convenience are envisioned as yellow segments.
%
%
% 
% 
%  and consider all possible self--avoiding paths
% (consisting of hexagons or, if relevant, fragments thereof)
% from  $\omega$ to $Y_0$ through $\partial S$
%such that if  $\mathscr P$ is such a path then: i)  and iii)  once $\mathscr P$ leaves $S$ it never re--enters $S$.  Thus, the 
%first portion of the path lies entirely in $S$ and
%second portion of the path is entirely in $S_0 \setminus S$.  We then define the yellow segment of $\partial S$, denoted $Y_S$, to be the set of all exit points of all such paths $\mathscr P$ on $\partial S$. 

%
%\bigskip
%
%\noindent{\emph{Proof of Claim.}} It 
%Here, and for future reference, we define 
%$\mathbf P_{S} \subset \mathbf P_{\omega}$ to be the set of paths with only a single element of $Y_{S}$.  
%
% 
%satisfying i), ii) and iii), 
%Now these paths may contain several elements in common.

\bigskip
Next we may consider a further successor segment $Y_{S'}$ of $Y_S$ and the corresponding component of $\omega$, $S'$. Then we have the following ``partial ordering'' property:
\bigskip

\noindent{\bf Claim.}
Let $S$, $S^{\prime}$, $Y_{S^{\prime}}$ etc. be as described.  Then $Y_{S}$ separates $Y_{S^{\prime}}$ from $Y_{0}$ and $Y_{S^{\prime}}$ separates $Y_{S}$ from $\omega$.  A similar result holds for intermediate $Y$-- segments.  
E.g., if $Y_{S^{\dagger}}$ is an earlier successor of $Y_{0}$ than $Y_{S}$, then 
$Y_{S^{\dagger}}$ separates $Y_{0}$ from $Y_{S}$
or 
if $Y_{S^{\dagger}}$ is an intermediate successor segment between 
$Y_{S^{\prime}}$ and $Y_S$, then $Y_{S^{\dagger}}$
separates $Y_{S}$ from $Y_{S^{\prime}}$ etc.

%
%Suppose $Y_{S'}$ is a successor to $Y_S$, then $Y_S$ separates $Y_{S'}$ from $Y_0$ in the sense that every path $\mathscr P: Y_{S'} \leadsto Y_0$ inside $\Omega_m$ which does not intersect $\partial \Omega_m$ must pass through $Y_S$.  

\bigskip

\noindent{\emph{Proof of Claim.}}  This follows readily from the ideas invoked in the proof of the previous claim.  Consider $\mathscr P : Y_{S^{\prime}} \leadsto Y_{0}$ and define
$\mathscr P_{\text{F},S^{\prime}}$ to be the portion of this path after its final exit from 
$Y_{S^{\prime}}$ -- which takes place at some $z^{\prime}$.  Now adjoin to this an
``initial'' path, $\mathscr P_{\text{I},S^{\prime}}$ (of the type described in the proof of the previous claim) so that $\mathscr P_{\text{I},S^{\prime}} \cup z^{\prime} \cup \mathscr P_{\text{F},S^{\prime}}$ is a path in 
$\mathbf P_{S^{\prime}}$.  
It is asserted that $\mathscr P_{\text{I},S^{\prime}}$ can \textit{not} meet $Y_{S}$: indeed, supposing to the contrary, we could adjoin its portion from $\omega$ to $Y_{S}$ with an appropriate 
$\mathscr P_{\text{F},S}$ -- and a $z\in Y_{S}$.  Now since the latter two take place, essentially, in $S^{c}$, this composite object would represent a path $\omega \leadsto Y_{0}$ which circumvents $Y_{S^{\prime}}$ altogether which is impossible since $Y_{S^{\prime}}$ separates $\omega$ from $Y_{0}$ by a result of the previous claim.  

However, the path $\mathscr P_{\text{I},S^{\prime}} \cup z^{\prime} \cup \mathscr P_{\text{F},S^{\prime}}$
is a path from $\omega$ to $Y_{0}$ so it must intersect $Y_{S}$ somewhere; by the above, this has to be in 
$\mathscr P_{\text{F},S^{\prime}}$ which, it is recalled was part of the original $\mathscr P$.  
With regards to intermediate successors, e.g., if $Y_{S^{\dagger}}$ 
is an earlier successor of $Y_{0}$ than $Y_{S}$, we observe that as a consequence of their mutual construction, 
$S \subset S^{\dagger}$,
and the preceding argument may be taken over directly.  
Further generalities discussed in the statement of this claim are handled similarly.
\qed

\bigskip

Given $Y_0$ and its successor $Y_S$, the blue boundary is defined to be the portions of $\partial \Omega_m$ connecting the endpoints of 
$Y_0$ and $Y_S$. The \emph{topological rectangle} formed by the blue boundaries and $Y_0, Y_S$ will be denoted by an $A$ (an annular fragment).

The size of the annular fragment 
will be adjusted, if possible, to keep the probability of a yellow crossing
$Y_{0} \leadsto Y_{S}$ in the range $(\vartheta, 1-\vartheta)$ with $\vartheta$ a constant of order unity 
independent of $n$, $A$, etc.,
with a constraint to be described below (but eventually to be taken as ``small'').  
This will be attempted by using the sliding scale construction described in Definition \ref{ddd}; in essence, we advance by single lattice units till the crossing probability achieves a value in the desired range.  Thus, we arrive at a sequence of 
ring fragments $A_{1}, A_{2}, \dots$ with, essentially, uniform crossing probabilities of both types.  Unfortunately,  as will emerge (and as is not hard to envision) for the general cases, the procedure may fail.  Then we must engage the more complicated constructions described in the rest of this subsubsection.  

For the time being, let us assume then that $A_{1}, A_{2}, \dots, A_{k-1}$
have been constructed with the desired properties so far mentioned; we will go about constructing $A_{k}$.  
In the present stage of the construction it is, as part of the assumption, the case that we have already acquired the yellow boundaries $Y_{0}, \dots Y_{k-1}$.  We investigate the sequence $X_{k,1}, X_{k,2}, \dots$ of successively progressing slides starting with $Y_{k-1}$ (in the direction of $\omega$).  This necessarily leads to the consideration of a double sequence of temporary boundaries -- the slides themselves -- and the \textit{associated} yellow segments, 
$Y_{k,1}, Y_{k,2}, \dots$; these are both temporary especially the first (temporary temporary)
compared with the second which is more legitimate (legitimate temporary).  It is noted that
$Y_{k,\ell} \subseteq X_{k,\ell}$ where, it is emphasized, the inclusion is generically strict.  
These form two sequences of temporary topological rectangles.  
Elements of the first sequence will be denoted by
$A_{k,\ell}$ which has yellow boundaries 
$Y_{k-1}$ and $Y_{k,\ell}$ and those of the second by 
$\tilde{A}_{k,\ell}$ which has in the stead of the second yellow boundary, the successor set $X_{k,\ell}$.  Our pertinent crossing questions actually concern the ring fragments 
$A_{k,1}, A_{k,2}, \dots$ (which are of the \textit{legitimate} nature).  Again, for ease of exposition, let us assume that the first 
$\ell - 1$ renditions have yellow crossing probabilities in excess of $1-\vartheta$, i.e., suppose
$$
\mathbb P(\mathscr P: Y_{k-1}\leadsto Y_{k,\ell -1}) = \kappa > 1-\vartheta.
$$
Then, we claim, that for $\vartheta$ chosen appropriately, the yellow crossing in 
$\tilde{A}_{k,\ell}$ is not too small:
$$
\mathbb P(\mathscr P: Y_{k-1}\leadsto X_{k,\ell})  > \vartheta.
$$
Indeed, conditional on the existence of a yellow crossing of the type described in 
$A_{k,\ell -1}$, it is only necessary to attach one more (SI) unit of yellow to achieve the desired connection up to $X_{k,\ell}$.  In the usual hexagon tiling problem, this 
occurs with probability 1/2; in general, to include the models introduced in 
\cite{CL}, let us say that this occurs with probability $r$.  Then we have
$$
\mathbb P(\mathscr P: Y_{k-1}\leadsto X_{k,\ell}) \geq (1-\vartheta)\cdot r 
\geq \vartheta
$$
if $\vartheta$ satisfies
\begin{equation}
\label{ERT}
\vartheta < \frac{r}{1+r}.
\end{equation}

It may be \textit{envisioned} that the process proceeds smoothly till eventually we get an $\ell$ large enough so that the desired yellow crossing probability falls into the range
$(\vartheta, 1 - \vartheta)$ -- even if this occurs, we are far from done.  
However, we have already allowed for an obstruction which is akin to an
``uncontrolled approximation'', namely the difference between the (temporary, temporary)
yellow boundaries $X_{k,\ell}$ and the (legitimate, temporary) yellow boundaries $Y_{k,\ell}\subseteq X_{k,\ell}$.  As is clear from previous discussions -- and has, in fact been a basis of the derivation -- it is the legitimate, temporary yellow boundary that must actually be considered.  In particular, we must account for the possibility that
the yellow crossing probability in 
$\tilde{A}_{k,\ell}$ is in $(\vartheta, 1-\vartheta)$ -- or even lies above 
$1 - \vartheta$ -- but when we replace the yellow boundary
$X_{k,\ell}$ with the smaller $Y_{k,\ell}$, the crossing probability in
${A}_{k,\ell}$ falls below $\vartheta$.  

These circumstances would tend to occur if 
$\omega$ lies deep inside a narrow tunnel which opens into a wider central region
where the sliding is currently taking place; when the slide reaches the mouth of the tunnel, huge portions of the segment can be lost.
In this case, the appropriate action is in fact to slide backwards: indeed, let us note that this discontinuity is engendered by the fact that not all legitimate crossings from $Y_{k-1}$ to $Y_{k, \ell - 1}$ can be extended by one SIU to a path to $Y_{k, \ell}$.  On the other hand, it is the case that all points on $Y_{k, \ell}$ are one SIU away from some point in $Y_{k, \ell - 1}$; this asymmetry is inherent in the definition of successors.  Thus, continuity can be ensured by backsliding $Y_{\ell}$:

\begin{lemma}[\rm{Sliding Scales.}]\label{slid}
Let $Y_k, Y_{k, \ell}$, etc., be as described such that $\mathbb P(Y_{k-1} \leadsto Y_{k, \ell}) < \vartheta$.  Consider $F_0^{(k)} \equiv Y_{k, \ell}, F_1^{(k)}, F_2^{(k)}, \dots$ successive backward slides (i.e., in $\mathscr C_{Y_{k-1}}(\mathcal D_\Delta)$) of $Y_{k, \ell}$.  Let the corresponding yellow segments (relative to $Y_{k-1}$)  $G_j^{(k)} \subseteq F_j^{(k)}$ be constructed as above so that all of $G_j^{(k)}$ is accessible from $Y_{k-1}$.  Then it is the case that successive crossing probabilities do not increment too fast, i.e., 
$$
\mathbb P(Y_{k-1} \leadsto G_{j-1}^{(k)} \mid Y_{k-1} \leadsto G_j^{(k)}) > r.
$$
\end{lemma}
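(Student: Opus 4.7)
My plan is to mirror the forward--sliding argument given just before (\ref{ERT}): given a yellow crossing from $Y_{k-1}$ to the outer segment $G_j$, I extend it by a single yellow SIU to produce a yellow crossing to the inner segment $G_{j-1}$; since an SIU is yellow with probability at least $r$, this supplies the desired conditional bound. I work with the events $\mathbb E_j = \{Y_{k-1}\leadsto G_j\}$ and $\mathbb E_{j-1} = \{Y_{k-1}\leadsto G_{j-1}\}$ and note first a containment: any yellow path witnessing $\mathbb E_{j-1}$ must cross $F_j$ en route to $F_{j-1}\supseteq G_{j-1}$, and its first--passage point on $F_j$ automatically lies in $G_j$ by the ``partial ordering'' argument already used in the $S$--construction. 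Hence $\mathbb E_{j-1}\subseteq \mathbb E_j$ and the claimed inequality reduces to $\mathbb P(\mathbb E_{j-1})\geq r\cdot \mathbb P(\mathbb E_j)$.

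For the probabilistic estimate I would run a standard cluster exploration, revealing only the SIUs in $\mathscr C_{F_j}(Y_{k-1})$; this exploration depends exclusively on the configuration in that region and determines the event $\mathbb E_j$ together with the set of points at which the yellow cluster of $Y_{k-1}$ first reaches $F_j$. On $\mathbb E_j$, select any revealed $z \in G_j$. By the $1$--slide construction of Definition \ref{ddd}, there is a distinguished SIU $U_z$ on the $\omega$--side of $z$ whose opposite edge lies on $F_{j-1}$ at some $z'$. Since $U_z$ sits inside $\mathscr C_{F_j}(\omega)$, its color is independent of the exploration; in particular $U_z$ is yellow with probability at least $r$, and on this latter event the yellow cluster extends to a point $z'\in F_{j-1}$.

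It then remains to verify that $z'$ actually lies in $G_{j-1}$ (and not merely in $\mathbb Y_{G_{j-1}}$). The extended path $Y_{k-1}\leadsto z\to z'$ has its initial portion in $\mathscr C_{F_j}(Y_{k-1})\subseteq \mathscr C_{F_{j-1}}(Y_{k-1})$ and its appended SIU in the layer between $F_j$ and $F_{j-1}$; hence it avoids $\mathbb Y_{G_{j-1}}\subseteq F_{j-1}$ until the very endpoint $z'$, placing $z'$ in $G_{j-1}$ by the definition of the yellow segment. The main obstacle I anticipate is the geometric bookkeeping ensuring that the far edge of $U_z$ lands on the retained component of $F_{j-1}$ (that is, on the $\Gamma_j'$ selected as separating $\omega$ from $Y_{k-1}$) rather than being clipped by $\partial\Omega_m$ or landing on a discarded auxiliary component of $\tilde F_{j-1}$. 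The resolution is to exploit that $F_{j-1}$ is \emph{defined} as the component separating $\omega$ from $Y_{k-1}$, so any inward continuation from an accessible point of $G_j$ must terminate on that same component; this ties the choice of $z$ in the exploration to the correct component and reduces the verification to the same topological bookkeeping used for the connectedness of $Y_S$ in the $S$--construction.
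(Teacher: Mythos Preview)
Your proposal is correct and follows essentially the same approach as the paper: both argue that a yellow crossing to $G_j$ can be extended by a single yellow SIU (occurring with probability at least $r$) to reach $G_{j-1}$, with the containment $\mathbb E_{j-1}\subseteq\mathbb E_j$ following from the separation property of the slides. Your write--up is simply more explicit about the exploration/independence and the verification that the endpoint $z'$ lands in $G_{j-1}$ rather than a discarded component, details which the paper compresses into the single assertion ``each SIU in $G_j^{(k)}$ is connected to some SIU in $G_{j-1}^{(k)}$.''
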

\begin{proof}
First note that by definition $F_0^{(k)} \equiv G_0^{(0)}$ and each SIU in $G^{(k)}_j$ is connected to \emph{some} SIU in $G^{(k)}_{j-1}$ and so a crossing to $G_{j-1}^{(k)}$ necessarily implies a crossing to $G_{j}^{(k)}$.  (We re--remind the reader that here we are sliding \emph{backwards}, so the preceding are exactly the desired connectivity relation.)  It follows, by attaching one more yellow SIU, that $\mathbb P(Y_{k-1} \leadsto G_{j-1}^{(k)} \mid Y_{k-1} \leadsto G_j^{(k)}) > r$.  
\end{proof}

For later purposes, we will need a slightly more complex procedure than just sliding backwards.  Supposing that there is a jump in crossing probabilities as described e.g., at the $2\ell^{\mbox{\tiny th}}$--stage, let us consider e.g., the segment $Y_{k, \ell}$; by the previous separation claim, this segment separates $Y_{k-1}$ from $Y_{k, 2\ell}$.  We backslide as described above, however, we use $Y_{k, \ell}$ as a ``barrier'': i.e., we only consider the connected component of our neighborhoods in the region bounded by $Y_{k, \ell}, Y_{k, 2\ell}$ and the relevant portions of $\partial \Omega_m$.  In point of fact, we will not exactly use $Y_{k, \ell}$ but a certain modification thereof.  Nevertheless, the above procedure works for any $Z_k$ (replacing $Y_{k, \ell}$ as a barrier) which separates $Y_{k-1}$ from $Y_{k, 2\ell}$ and has the property that $\mathbb P(Y_{k-1} \leadsto Z_k) > 1 - \vartheta$.  Notwithstanding, we will still call the backward slides $G_j^{(k)}$'s (suppressing the $Z_k$ dependence).

In any case, on the basis of the above lemma, we may now assert that there exists an $m$ such that 
$$
\mathbb P(Y_{k-1} \leadsto G_m^{(k)}) \in (\vartheta, 1- \vartheta).
$$
Indeed, the result of the lemma remains true under the modified procedure and so if it is the case that $\mathbb P(Y_{k-1} \leadsto G_{j-1}^{(k)}) < \vartheta$, then  
$$ 
r \cdot \mathbb P(Y_{k-1} \leadsto G_j^{(k)}) < \vartheta, 
$$
and hence $\mathbb P(Y_{k-1} \leadsto G_j^{(k)}) < 1 - \vartheta$ if $r$ satisfies Equation \eqref{ERT}.  This leads to the existence of the stated $m$.

We will then promote $G_m^{(k)}$ into $Y_k$.  As alluded to before, there are additional modifications to be performed on $Y_k$, but nevertheless we are finished with the $S$--construction.

%If fails, then must go to $Q$--construction...   even if it succeeds, still need to check aspect ratios and \emph{may} have to perform $Q$--construction since must ensure proper connection between successive levels...  entire procedure will be described in last subsection...

\subsubsection{The $Q$--Construction \rm{(effective regions)}}
The $S$--construction is not sufficient to capture certain irregularities which may be present in the domain $\Omega$ -- nor to achieve our purpose.  These problems manifest themselves on two levels: the successive yellow regions may be vastly different in length, as can be caused by a narrow tunnel suddenly leading to a wide region.
On a more subtle level, there are cases where the $S$--construction yields consecutive yellow regions which are of comparable size but the ``effective'' yellow region where the crossing would  actually take place is in fact much smaller.  This is indicative of ``pinching'' of 
$\partial \Omega_m$ in the vicinity of the current segment.  In any case, the problem here is that, in essence, the process is proceeding too quickly.  Thus we will reduce the relevant scales in order to slow the growth of the evolving neighborhood sequence and possibly perform some further ``backwards'' steps.
To a \emph{first approximation}, given $Y_I$, a successor $Y_F$ is not utilizable for us if it is the case that it is too large relative to the separation between $Y_I$ and $Y_F$.  In this case we will instead consider future segments grown around some ``effective region'' determined by some \emph{sub}segment of $Y_F$.  The purpose of this subsubsection is to modify the segment $Y_F$ into some $Y_F^{(e)}$ (representing the effective region) so that it has the correct aspect ratio relative to $Y_{I}$ ($\equiv Y_I^{(e)}$).  We will make the notion of ``effective region'' precise but first we need a proposition concerning crossing probabilities of more general topological rectangles:

\begin{prop}\label{btt}
Let $A$ denote a topological rectangle with sides $(\mathscr L, \mathscr R)$ and $(\mathscr T, \mathscr B)$, respectively.  Consider the following notion of aspect ratio: with 
$$
a :=  d_\infty^{A}(\mathscr T, \mathscr B),~~~b := d_\infty^{A}(\mathscr L, \mathscr R),
$$
we set 
$$
B = B_A = b/a.
$$ 
Let $\eta > 0$, then for any $A ~(= A_n)$, there exists some $B(\eta)$ such that if $B_{A} > B(\eta)$, then for the critical percolation problems of interest in this work,
$$
\mathbb P(\mathscr L \leadsto \mathscr R) < \eta
$$
with the above uniform in $n$.
\end{prop}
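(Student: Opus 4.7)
The proposition is the standard fact that hard-way crossings of long thin (topological) rectangles have vanishing probability in the aspect ratio; I would prove it by the classical device of producing $\Theta(B)$ disjoint dual-color blockers, each realized with uniform positive probability, and invoking independence on disjoint regions.

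Let $\Sigma \subset A$ be a string realizing $d_\infty^A(\mathscr L, \mathscr R) = b$ and partition it into consecutive sub-arcs of $L^\infty$-diameter comparable to $a$, yielding $N \asymp B$ such arcs. Around each sub-arc take the intersection with $A$ of an $L^\infty$ box of scale $C a$ (for some large universal $C$), and sub-sample (keeping, say, every third) so that the resulting sub-regions $A_1, \dots, A_M$, with $M \asymp B$, are pairwise disjoint. Each $A_j$ should be viewed as a topological rectangle of aspect ratio bounded by a universal constant whose ``top'' and ``bottom'' sides are portions of $\mathscr T$ and $\mathscr B$ respectively; this is where $d_\infty^A(\mathscr T, \mathscr B) = a$ is exploited, in combination with the cyclic ordering $(\mathscr L, \mathscr T, \mathscr R, \mathscr B)$ of the four boundary pieces. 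Now by the Russo--Seymour--Welsh estimates at criticality (available uniformly in lattice spacing for both the triangular site model and the extended models of \cite{CL}), for each $A_j$ the probability of a \emph{dual}-color crossing from its top to its bottom is at least some $p>0$ independent of $n$ and $j$. Any such crossing is a curve inside $A$ connecting $\mathscr T$ to $\mathscr B$, and by the cyclic ordering of sides it separates $\mathscr L$ from $\mathscr R$ in $A$, thereby obstructing the primary-color event $\{\mathscr L \leadsto \mathscr R\}$.

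Because the $A_j$'s are disjoint the blocker events depend on disjoint collections of SIUs and are independent; hence
\[
\mathbb P(\mathscr L \leadsto \mathscr R) \;\leq\; (1-p)^M \;\leq\; (1-p)^{cB}
\]
for some universal $c > 0$, and choosing $B(\eta)$ so that $(1-p)^{cB(\eta)} < \eta$ completes the proof, with uniformity in $n$ inherited from the uniformity in RSW. The only non-routine point is the extraction of the sub-rectangles $A_j$ in the promised form. Although $d_\infty^A(\mathscr T, \mathscr B) = a$ is a priori only a ``somewhere inside $A$'' statement, the joint constraints that $\Sigma$ lies inside $A$ and runs from $\mathscr L$ to $\mathscr R$, together with the cyclic ordering of the four sides, force the local width of $A$ along $\Sigma$ to remain of order $a$ throughout, apart from at most a bounded number of localized widenings or pinches which can be absorbed into universal constants upon sub-sampling. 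Making this geometric fact precise --- i.e., genuinely realizing the $A_j$'s as bounded-aspect-ratio topological rectangles whose top/bottom sides sit in $\mathscr T/\mathscr B$ --- is the only step requiring care; once this is done, steps 2 and 3 above are entirely routine.
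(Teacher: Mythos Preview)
Your approach differs from the paper's and has a genuine gap at precisely the point you flag as ``the only step requiring care.'' The assertion that the local width of $A$ along $\Sigma$ must remain of order $a$ apart from boundedly many localized exceptions is false for general topological rectangles. Take a Euclidean rectangle $[0,L]\times[0,W]$ with $\mathscr L,\mathscr R$ the vertical sides and $\mathscr T,\mathscr B$ the horizontal ones, modified by a single narrow notch in $\mathscr T$ at $x=L/2$ descending to height $a\ll W$. Then $d_\infty^A(\mathscr T,\mathscr B)\approx a$ (realized only at the notch) while $d_\infty^A(\mathscr L,\mathscr R)\approx L/2$, so $B_A\approx L/(2a)$ is arbitrarily large. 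Your string $\Sigma$ runs horizontally and partitions into $\asymp L/a$ sub-arcs, but boxes of scale $Ca$ around all but $O(1)$ of them fail to reach $\mathscr T$, which sits at height $W\gg a$. There is only one ``widening,'' yet it is global rather than localized; after sub-sampling you retain $M=O(1)$ usable sub-rectangles, not $M\asymp B$, and $(1-p)^M$ does not tend to zero with $B$.

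The paper's argument sidesteps this by working at a single location rather than along $\Sigma$. It fixes a pair $x_*\in\mathscr T$, $y_*\in\mathscr B$ realizing the infimum $a$, encloses them together with a short connecting path $\mathbf p_a\subset A$ in a box $G_a$ of side $2a$, and builds $\asymp\log(b/a)=\log B$ concentric dyadic annuli out to a box $G_b$ of side $2\lambda b$. Since $\mathbf p_a$ separates $\mathscr L$ from $\mathscr R$ in $A$, and since by the very definition of $d_\infty^A(\mathscr L,\mathscr R)=b$ any $\mathscr L$--$\mathscr R$ path in $A$ must have a portion outside $G_b$, every such path traverses each annulus and is blocked by any dual circuit therein; RSW then yields a bound of the form $(1-\lambda)^{c\log B}$. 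This is weaker than the $(1-p)^{cB}$ you aimed for but entirely sufficient, and it requires no control whatsoever on the width of $A$ away from the single pinch point.
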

\begin{proof}
Let $x_*$ be a point on $\mathscr T$ 
and $y_{*}$ a point on $\mathscr B$ where the infimum defining $a$ is realized and consider a box, denoted by $G_{a}$, of side $2a$ containing these points \textit{and} a path $\mathbf p_{a}$ in $A$ connecting them.  It is noted that $\mathbf p_{a}$ separates $\mathscr L$ and $\mathscr R$.  Now consider a box, denoted by $G_{b}$ which is of side $2\lambda b$ for
$\lambda \lesssim 1$ with the same center as the aforementioned box; it is assumed that $\lambda b > a$.    
By the definition of the $d_{\infty}^{A}$-- distance, it is the case that any path in $A$ connecting $\mathscr L$ to $\mathscr R$ must have a portion outside $G_{b}$ -- although $\mathscr L$ and $\mathscr R$ may themselves lie within.  It is now claimed that any circuit in the annulus $G_b\setminus G_a$ separates $\mathscr L$ from $\mathscr R$.  Indeed,
let $\mathscr P: \mathscr L \leadsto \mathscr R$ denote a path in A.  It may be assumed e.g., that 
$$
\mathscr P: \mathscr L \leadsto \partial G_{b} \leadsto \mathbf p_{a} \leadsto \mathscr R
$$
(the other case is identical).  Therefore a portion of $\mathscr P$ connects 
$\partial G_b$ to $\partial G_a$ and thus meets the circuit as claimed.  

Now, by standard critical 2D percolation arguments (now so--called RSW arguments) there are of the order $\log (b/a)$ independent chances of creating a circuit in $G_b\setminus G_a$ thus preventing a crossing between $\mathscr L$ and $\mathscr R$ and the desired result is established.
\end{proof}

Henceforth, to avoid inconsequentials, we shall assume that $\vartheta$ has been chosen small enough so that all relevant $B$'s are certainly greater than two or three.  We are now ready to prove:
 
\begin{lemma}[Aspect Ratio Estimate.]\label{er}
Let $Y_I$ denote some (initial) segment and $Y_F$ a successor segment which is constructed as described previously.  Let $A_{F}$ denote the topological rectangle of relevance -- of which these are two of the sides -- and within which the crossing problem of current interest takes place.  In particular it is assumed that $\mathbb P(Y_I \leadsto Y_F) \in (\vartheta, 1- \vartheta)$.  Moreover, it is assumed that $Y_F$ is obtained from $Y_I$ by the direct sliding construction, i.e., without invoking the backsliding as was featured in Lemma \ref{slid} and the discussions immediately following.
Let us denote the separation distance between $Y_I$ and $Y_F$ by $J_F$, i.e., 
$J_F := d_\infty^{A_{F}}(Y_I, Y_F)$.
Then there exists some $B$ with $1 < B < \infty$ and a modification of $Y_F$, which we denote by $Y_F^{(e)}$, such that 
$$
B^{-1} \cdot J_F \leq \|Y_F^{(e)}\|_\infty \leq (3B + 2) \cdot J_F + c^{\prime} \leq \kappa'B \cdot J_F
$$
for some constant $c^{\prime}$ and $\kappa'$
and 
$$
\theta'' \leq \mathbb P(Y_I \leadsto Y_F^{(e)}) \leq 1 - \theta'',
$$
where $\theta'' = \vartheta - \vartheta^2$. 
\end{lemma}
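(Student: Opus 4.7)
The strategy is to invoke Proposition~\ref{btt} to select $B$ large enough that any topological sub-rectangle of $A_F$ with $d_\infty$-aspect ratio exceeding $B$ has hard-way crossing probability small enough that a polynomially decaying sum of such bounds remains below $\vartheta^2$. The upper bound $\mathbb P(Y_I\leadsto Y_F^{(e)})\leq 1-\theta''$ is then immediate from the containment $Y_F^{(e)}\subseteq Y_F$ (to be arranged below): $\mathbb P(Y_I\leadsto Y_F^{(e)})\leq \mathbb P(Y_I\leadsto Y_F)\leq 1-\vartheta\leq 1-\theta''$.

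I would split the remainder of the analysis into two cases. In the first, $\|Y_F\|_\infty\leq (3B+2)\cdot J_F$, and we simply set $Y_F^{(e)}:=Y_F$. The upper diameter bound is automatic. For the lower diameter bound, observe that $Y_F$ spans between the two blue sides of $A_F$, with endpoints on $\mathscr T$ and $\mathscr B$, so $a:=d_\infty^{A_F}(\mathscr T,\mathscr B)\leq \|Y_F\|_\infty$; were $\|Y_F\|_\infty < B^{-1}\cdot J_F$, we would have $B_{A_F}=J_F/a>B$, and Proposition~\ref{btt} would force $\mathbb P(Y_I\leadsto Y_F)<\vartheta^2<\vartheta$, contradicting the hypothesis.

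In the second case $\|Y_F\|_\infty>(3B+2)\cdot J_F$. Here I would cover $Y_F$ by consecutive (slightly overlapping) arcs $Y_F^{(1)},\dots,Y_F^{(k)}$, each of $d_\infty^{A_F}$-diameter $\asymp B\cdot J_F$ (so that whichever arc is selected automatically satisfies the lower diameter bound $\|Y_F^{(e)}\|_\infty\geq B^{-1}\cdot J_F$), select $Y_F^{(i^*)}$ maximizing the crossing probability, and set $Y_F^{(e)}$ to be the union of $Y_F^{(i^*)}$ with its two immediate neighbors, giving $\|Y_F^{(e)}\|_\infty\leq (3B+2)\cdot J_F+c'$. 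To establish $\mathbb P(Y_I\leadsto Y_F^{(e)})\geq \vartheta-\vartheta^2$ it suffices to show
$$
\mathbb P(Y_I\leadsto Y_F\setminus Y_F^{(e)})\leq \vartheta^2.
$$
An excluded arc $Y_F^{(j)}$ with $j\notin\{i^*-1,i^*,i^*+1\}$ is separated from $Y_F^{(e)}$ by a $d_\infty^{A_F}$-distance growing linearly in $|j-i^*|$, so a yellow path from $Y_I$ ending on $Y_F^{(j)}$ must traverse a sub-rectangle of $A_F$ of short side $\asymp J_F$ and long side $\asymp |j-i^*|\cdot B\cdot J_F$ in its hard direction. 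By Proposition~\ref{btt} each such contribution decays polynomially in $|j-i^*|$, and summing over $j$ yields a bound at most $\vartheta^2$ provided $B$ was chosen large enough relative to the RSW decay exponent.

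The main obstacle is the rigorous geometric implementation of the sub-rectangle argument in the last step: since $Y_F$ may wind irregularly through $\Omega_n^\square$ and $A_F$ itself is only a topological rectangle, identifying, for each excluded arc $Y_F^{(j)}$, a bona fide topological sub-rectangle with controlled $d_\infty^{A_F}$-aspect ratio (in which the path from $Y_I$ to $Y_F^{(j)}$ really is forced to go the hard way) requires some care, and one must verify that the polynomial RSW decay outpaces the linear growth in the number of excluded arcs. This is precisely why $B$ must be chosen \emph{after} both $\vartheta$ and the underlying RSW decay rate are fixed.
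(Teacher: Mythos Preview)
Your lower-bound argument contains a genuine gap that is more than a matter of geometric care. You assert that a yellow path from $Y_I$ landing on an excluded arc $Y_F^{(j)}$ must traverse a sub-rectangle of short side $\asymp J_F$ and long side $\asymp |j-i^*|\cdot B\cdot J_F$ in its hard direction. But there is no reason for this: by the sliding construction, \emph{every} point of $Y_F$ lies at $d_\infty^{A_F}$-distance $\approx J_F$ from some point of $Y_I$, so a crossing from $Y_I$ to $Y_F^{(j)}$ could in principle be accomplished over a distance comparable to $J_F$ --- the separation $|j-i^*|\cdot B\cdot J_F$ between $Y_F^{(j)}$ and the selected arc is irrelevant to how far $Y_F^{(j)}$ is from $Y_I$. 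What actually makes the excluded portions hard to reach is not their distance from $Y_F^{(i^*)}$ but the fact that the blue boundary $\mathscr B_\ell\cup\mathscr B_r$ (i.e.\ $\partial\Omega_m$) intrudes between them and $Y_I$, forcing any path to detour around it. Your maximal-probability selection of $i^*$ does not identify where this intrusion occurs and gives no control over it.

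The paper's argument is organized around exactly this point. It introduces segments $\Gamma_\ell,\Gamma_r$ marking the extremal positions along $Y_F$ at which a direct (perpendicular) passage to $Y_I$ is unobstructed by $\mathscr B_\ell,\mathscr B_r$, and then uses the two-sided hypothesis $\mathbb P(Y_I\leadsto Y_F)\in(\vartheta,1-\vartheta)$ to show that the displacement between $\Gamma_\ell$ and $\Gamma_r$ cannot exceed $B\cdot J_F$ (otherwise either an easy-way crossing makes the probability too large, or a forced hard-way crossing makes it too small). The effective segment $Y_F^{(e)}$ is then built as the portion of $Y_F$ between $\Gamma_\ell$ and $\Gamma_r$, padded by $B\cdot J_F$ on each side, together with short attachments $\tau_\ell,\tau_r$ plugging into $\mathscr B_\ell,\mathscr B_r$. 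Now the excluded portion of $Y_F$ is, by construction, separated from $Y_I$ by $\partial\Omega_m$, and a crossing landing there genuinely must traverse the $B\cdot J_F$ buffer the hard way.

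There is a second, structural issue: your $Y_F^{(e)}$ is a sub-arc of $Y_F$ and hence does not have endpoints on $\partial\Omega_m$. The paper's $Y_F^{(e)}$ deliberately includes the $\tau_\ell,\tau_r$ attachments so that it remains a separating segment with endpoints on the blue boundaries; this is essential for $Y_F^{(e)}$ to serve as the yellow side of a topological rectangle in the subsequent $R$-construction and in the next inductive step. Without this, the Harris ring bounded by $Y_I$ and $Y_F^{(e)}$ is not well-defined.
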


\begin{proof}
Let $B$ be chosen from Proposition \ref{btt} corresponding to $\eta = \vartheta^2$.  Let us suppose that $\|Y_F\|_\infty > 3B \cdot J_F$. We define $x_\ell$ and $x_r$ as the extreme ``left'' and ``right'' endpints of $Y_F$ and similarly define $y_\ell$ and $y_r$ for $Y_I$.  Consider the path
 $\mathscr B_{\ell, F}$ which is the ``left'' boundary of $A_{F}$ and
which is a portion of
$\partial \Omega_m$ that starts at $y_\ell$ and proceeds to $x_\ell$.  To avoid future clutter, we shall omit the $F$ subscript and, when clear from context write 
$\mathscr B_{\ell} := \mathscr B_{\ell, F}$.  
Similarly we have the other blue boundary which is the path $\mathscr B_{r}$ starting with $x_r$ and proceeding to $y_r$.

The essence of the argument is captured in the simplified version where $Y_I$ and 
$Y_F$ are (essentially) horizontal regions.  In order to expedite the overall process, we shall first treat this case.  The general case -- which can certainly be omitted on a preliminary reading -- will be attended to subsequently.  

\emph{Simplified Version.}   Note that due to the constraint on all $d_\infty^{A_{F}}$-- distances and envisioning $B \gg 1$, $Y_I$ and $Y_F$ must have considerable ``overlap''.  Let us now define an auxiliary curve $\Gamma_\ell$: in this simplified setup, $\Gamma_\ell$ is defined to be the leftmost perpendicular extension from $Y_F$ to $Y_I$ which does not intersect $\mathscr B_{\ell}$ inside the region bounded by $Y_I$ and $Y_F$ and similarly define $\Gamma_{r}$.  (It is noted that despite the nomenclature, $\Gamma_{\ell}$ may actually be to the ``right'' of 
$\Gamma_{r}$.)  See Figure \ref{PFG} which will also be useful for the up and coming.  

\begin{figure}[ ]
%\vspace{.15 in}
\centering
\includegraphics[scale=.27]{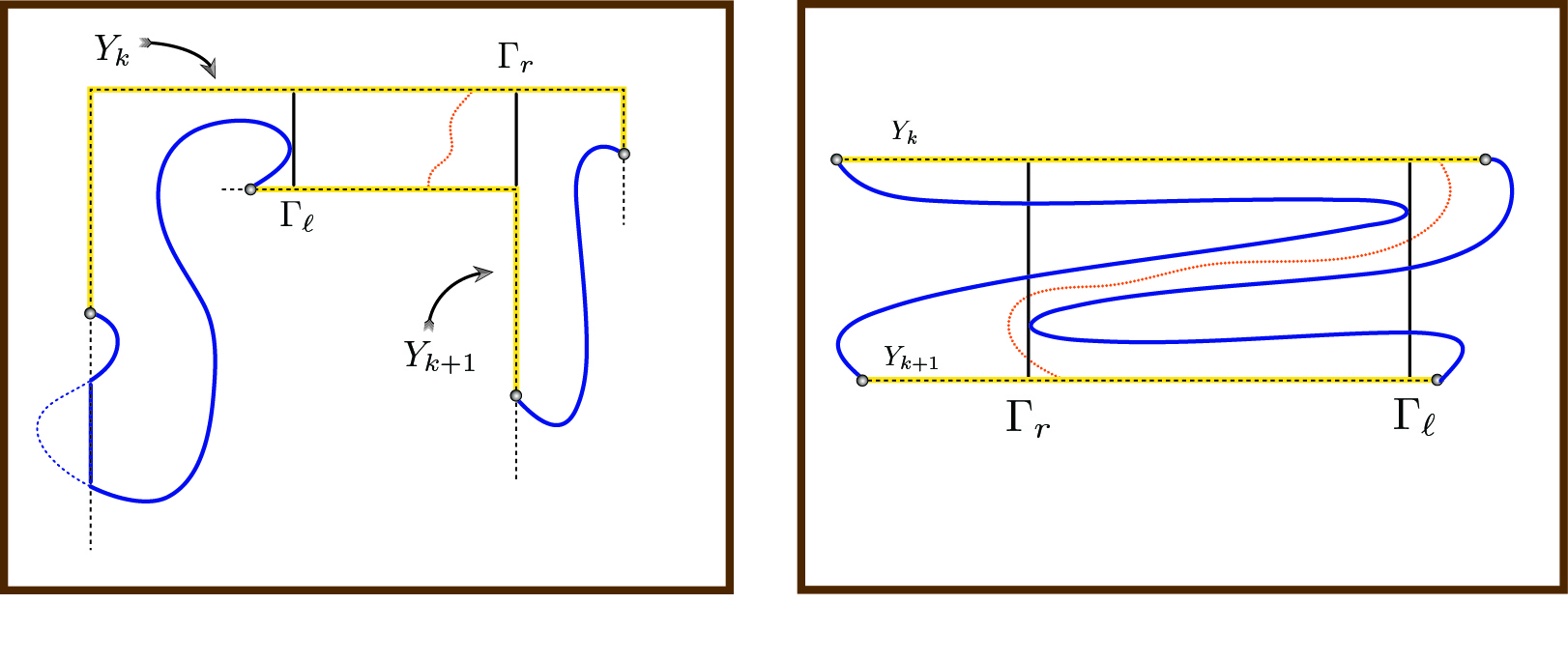}
\begin{changemargin}{.55in}{.55in}
%\flushright
\caption{{\footnotesize Setup described in the proof of Lemma \ref{PFG} under simplified assumption that segments are fragments of squares.  The constant $B$ here is envisioned to be about 2.  Left:
$\Gamma_{\ell}$ and $\Gamma_r$ situated as anticipated;
$Y_{k} \leadsto Y_{k+1}$ is too likely.  Right:  $\Gamma_{\ell}$ and $\Gamma_r$
switched from anticipated placement; here 
$Y_{k} \leadsto Y_{k+1}$ is too unlikely.}}
\label{PFG}
\end{changemargin}
\vspace{-.3 in}
\end{figure}

We claim that the displacement (along $Y_F$) between 
$\Gamma_\ell$ and $\Gamma_r$ cannot exceed $B \cdot J_F$:  
first if $\Gamma_\ell$ were to the left of $\Gamma_r$, then the relevant yellow crossing probability in $A_F$ would be bounded from below by the easy way crossing of the topological rectangle
bounded by $\Gamma_\ell$, $\Gamma_r$, $Y_I$ and $Y_F$.
By the definition of $B$ this would exceed $1 - \vartheta^2$, while by assumption, $A_F$ was constructed so that this yellow crossing probability could not exceed $1 - \vartheta$.
On the other hand, if $\Gamma_\ell$ were to the right of $\Gamma_r$, then any yellow crossing would now be forced to traverse 
the \textit{hard} way along a topological rectangle with aspect ratio exceeding $B$, 
which again by definition of $B$ would be less than $\vartheta^{2}$ which is again contrary to the assumption concerning $A_F$. 

Next, if possible, we will extend the portion of $Y_F$ which is in between $\Gamma_\ell$ and $\Gamma_r$ by an amount from $Y_{F}$ no more than an additional $B \cdot J_F$ on each side.  Obviously, we do this only if space is available, otherwise, e.g., $\|Y_F\|_\infty$ is \textit{already} less than $3B \cdot J_F$.  We denote as necessary 
the bounding vertical segments at the end of the extensions by $\gamma_\ell$ and $\gamma_r$.  In any case, these bounding segments $\gamma_\ell, \gamma_r$, should they exist will hit $\mathscr B_\ell \cup \mathscr B_r$ and we will denote the portions of $\gamma_\ell$ and $\gamma_r$ which connect $Y_F$ to $\mathscr B_\ell$ and $\mathscr B_r$ by $\tau_\ell$ and $\tau_r$, respectively.  The portion of $Y_F$ between $\tau_\ell$ and $\tau_r$ together with the $\tau$'s themselves constitute our \emph{effective region} $Y_F^{(e)}$.  (If no reconstruction is required, then we may consider $\tau_\ell = \tau_r = \emptyset$ and set $Y_F^{(e)} \equiv Y_F$.)  Notice in particular that $\|\tau_\ell\|_\infty$ and $\|\tau_r\|_\infty$ cannot exceed $J_F$.

We now argue that the yellow crossings actually occur between the effective regions with high ``conditional'' probability.  Consider the event  $\mathscr D: = \{Y_I \leadsto Y_F\} \setminus \{Y_I \leadsto Y_F^{(e)}\}$.  We first claim that the event $\mathscr D$ implies the existence of yellow \emph{and} blue crossings in fragments of aspect ratio at least $B$ and therefore has probability less than $\vartheta^{4}$.   

For simplicity, we will suppose that $\Gamma_{\ell}$ is the the left of $\Gamma_{r}$.  In this case -- assuming for non--triviality that $Y_F \neq Y_F^{(e)}$ -- it may be further supposed without any additional loss of generality that $Y_F\setminus Y_F^{(e)}$ contains points to the right of $\Gamma_{r}$ from which the yellow crossing producing the event $\mathscr D$ emanates.  
Then, the fragment between $\gamma_{r}$ and $\Gamma_{r}$
has aspect ration $B^\prime$ with $B^{\prime} \geq B$.  This fragment must be traversed the hard way by a yellow path in order that 
(this version of) the event 
$Y_F\setminus Y_F^{(e)}\leadsto Y_I$ occurs.  But in addition, a similar sort of blue path must occur to separate the lowest version of said yellow path from the yellow boundary region between $\Gamma_{r}$ and $\gamma_{r}$; the intersection of these two path events has probability less than 
$\vartheta^{4}$ and so here the desired result has been established.  Other cases are handled similarly. 

The stated upper bound on the crossing probability follows more easily: it suffices to observe that the probability of a yellow connection between $Y_I$ and $\gamma_\ell$ or $\gamma_r$ by the choice of $B$ is bounded by $\vartheta^2$.

\bigskip

\emph{The General Case.} The key point here is to have a tangible construction of $\Gamma_r$ and $\Gamma_\ell$ along with the associated $\tau_r$ and $\tau_\ell$, at which point the argument proceeds along the lines of the simplified case.  We let $x_r, \dots, \mathscr B_r$ be as defined before.  Consider the composite curve consisting of $Y_F, \mathscr B_r$, the portion of $\partial \Omega_m$ connecting $y_r$ to $y_\ell$ (not containing $\omega$) and $\mathscr B_\ell$.  The curve described forms a closed loop whose interior we denote by $\mathscr C$ and note particularly that $Y_I$ must lie in $\mathscr C$.  

We emphasize that by definition of the sliding procedure, 
the $d_{\infty}^{A_{F}}$-- distance of \textit{every} element of 
$Y_F$ to \textit{some} element of $Y_{I}$ is within a few lattice spacings of $J_{F}$  
(small discrepancies may arise due to the extended SIU's and certain lattice details).
Thus let $J_{F}^{\prime} = J_{F} + c$ -- for some constant $c$ of order unity -- denote the maximal such distance. 
Let $x_\alpha$ and $x_\beta$ denote two points along $Y_F$ with $x_\alpha$ to the right of $x_\beta$ and such that $d_\infty^{A_{F}}(x_\alpha, x_\beta) > 2 \cdot J_F^{\prime}$; thus the boxes of side $2J_{F}^{\prime}$ (i.e., ``radius'' $J_F^{\prime}$) centered about these points are disjoint.  Note that $Y_I$ must visit both of these boxes; for simplicity, let us, if necessary, shrink these boxes by a few lattice spacings (no more than $c$) so that $Y_I$ can only visit the surface of these boxes.  We denote the boxes centered at $x_\alpha$ and $x_\beta$ by $\mathfrak b_\alpha$ and $\mathfrak b_\beta$, respectively.

Our claim is that (regarding $Y_I$ as moving from $y_r$ to $y_\ell$) $Y_I$ must visit 
$\mathfrak b_\alpha$ \emph{before} it visits $\mathfrak b_\beta$.  Indeed, supposing to the contrary that $Y_I$ visits $\mathfrak b_\beta$ first, then it is noted that the curve consisting of the portion of $Y_I$ from $y_r$ till it hits $\mathfrak b_\beta$ and portions of $\partial \mathfrak b_\beta$ (plugging into $Y_F$) would divide $\mathscr C$ into two disjoint components, one containing $x_\alpha$ (on its boundary) and the other containing $y_\ell$ (on its boundary). Moreover, the former component contains all of 
$\mathfrak b_{\alpha}\cap \mathscr C$.
Thus $Y_I$ could not possibly then proceed to visit $\partial \mathfrak b_\alpha$ without crossing the aforementioned separating curve and/or $\partial \mathscr C$.

Now consider a sequence of successive non--overlapping boxes $\mathfrak b_1, \mathfrak b_2, \dots \mathfrak b_q$ centered on points in $Y_F$ oriented from right to left, so that $\mathfrak b_1$ is centered at $x_r$.
All these boxes have diameter approximately $2J_{F}^{\prime}$ which have been tuned, as above, so that $Y_{I}$ just visits their surfaces.   Let us look at the last such box which intersects $\mathscr B_r$ and denote it by $\mathfrak b_s$.  Due to the above claim, it is clear that $\mathfrak b_1, \dots, \mathfrak b_{s-1}$ are all intersected by $\mathscr B_r$ and, moreover, within each box, the relevant portion of $\mathscr B_r$ separates $Y_F$ from $Y_I$ (at least twice).  Indeed, first consider paths $\widetilde{\mathscr B}_r$ and $\widetilde{\mathscr B}_\ell$ outside of $\Omega_m$ which do not intersect the $\mathfrak b$ boxes and also connect $x_r$ to $y_r$ and $x_\ell$ to $y_\ell$, respectively.  Next consider the domain bounded by $Y_I \cup Y_F \cup \widetilde{\mathscr B}_r \cup \widetilde{\mathscr B}_\ell$; the curve $\mathscr B_r$ (and later also $\mathscr B_\ell$) must lie within this domain.  Now on each $\mathfrak b_j$, there are two portions of $\partial \mathfrak b_j$ connecting $Y_I$ to $Y_F$ and \emph{both} of these separate $y_r$ from $\mathfrak b_s$ within the domain just described. 

We will now define $\Gamma_r$ to be the portion of $\partial \mathfrak b_{s+1}$ 
which is inside $A_{F}$ and connects $Y_F$ to $Y_I$.
(There are, in fact at least two two choices; to be definitive, we choose, starting from $Y_{F}$, the``rightmost''.)   As before, we move from $\mathfrak b_s$ to $\mathfrak b_{s - k}$ where $\mathfrak b_{s-k}$ is the nearest box whose $d_\infty^{A_{F}}$-- distance from $\mathfrak b_s$ is greater than $B \cdot J_F$.  The topological statements above about $\mathscr B_r$ then permits us to define $\tau_r$ as any (lattice) path between the center of $\frak b_{s-k}$ and $\mathscr B_r$ within the box.  Note in the above that we have tacitly assumed $s \geq 1$, otherwise, e.g., $\tau_r = \emptyset$.  Finally, $\Gamma_\ell$, etc., are defined analogously.
With these tangible definitions of $\Gamma_r$, $\Gamma_\ell$, $\tau_r$ and $\tau_\ell$, the proof concludes \emph{mutatis mutantis} along the lines of the simplified case.  Notice that again, $\|\tau_\ell\|_\infty, \|\tau_r\|_\infty \leq  J_F + c \approx J_{F}$
and so the diameter of $Y_{F}^{(e)}$ is bounded above as in the statement of this lemma.

As for the stated lower bound on the diameter, let us first note, supposing
$Y_{F}^{(e)} = Y_{F}$,
that since $\|Y_F\|_\infty \geq d_\infty^{A_{F}}(\mathscr B_\ell, \mathscr B_r)$, if it were the case that $\|Y_F\|_\infty < B^{-1} \cdot J_F$, then by the choice of $B$ we would conclude that the blue crossing probability is in excess of $1 - \vartheta^2$, which is impossible.  Therefore, $\|Y_F\|_\infty \geq B^{-1} \cdot J_F$.  On the other hand, if we actually had to perform the effective regions construction, then we clearly have $\|Y_F^{(e)}\|_\infty \geq B \cdot J_F$.
\end{proof}

We will now address the cases where a successful construction of a $Y_F$ \emph{requires} a backsliding construction, as featured in Lemma \ref{slid} and the discussions thereafter.  The relevant barrier segment $Z_F$ will be constructed by a modification of $Y_{F, L}$, where $L$ is the smallest integer larger than $J_F/2$.  

We observe that by construction, each box of size $L$ around an element of $Y_{F, L}$ is visited by $Y_I$, so with the meaning of $\mathfrak b_j$ the same as in the above proof, we let $\mathfrak b_\alpha$ be the leftmost box which is not intersected by $\mathscr B_r$ and $\mathfrak b_\beta$ the rightmost box which is not intersected by $\mathscr B_\ell$.  If it is the case that $\alpha < \beta$ (considering the orientation to be from right to left) then we perform effective region construction on $Y_{F, L}$ (here we clearly have $\partial \Omega_m$ sufficiently close to $Y_{F, L}$ so that the construction can be performed with $\|\tau_\ell\|_\infty$ and  $\|\tau_r\|_\infty$ remaining well--controlled) and set $Z_F$ to be the resulting modified segment.

We are left with the case that $\alpha > \beta$; here we will set $Z_F \equiv Y_{F, L}$.  But let us note the following fact (which is only non--trivial if $\beta - \alpha > B$): if $w_{F, L} \subseteq Y_{F, L}$ is any connected subsegment which traverses more than $B$ of the $\mathfrak b_j$ boxes between $\mathfrak b_\alpha$ and $\mathfrak b_\beta$, then 
$$
\mathbb P(Y_I \leadsto w_{F, L}) > 1 - \vartheta^2,
$$
where the above crossing is allowed to take place anywhere in the region bounded by $Y_I, Y_{F, L}$ and the relevant portions of $\partial \Omega_m$.  Indeed this follows since to prevent such a connection entails a long way blue crossing with aspect ration in excess of $B$.  

\begin{prop}\label{er2}
In the cases where $Y_F$ must be obtained from $Y_I$ via a backsliding procedure as described following Lemma \ref{slid} and above, there exists an $m$ such that $G_m^{(F)}$ has the following properties:

1) $\mathbb P(Y_I \leadsto G_m^{(F)}) \in (\vartheta, 1 - \vartheta)$;

2) there is some constant $\kappa > 0$ such that $\|G_m^{(F)}\|_\infty \leq \kappa B \cdot L$.
\end{prop}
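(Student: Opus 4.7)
The plan is to handle part (1) by adapting Lemma \ref{slid} to the barrier setting, and part (2) by combining strip-confinement of the backslides with case-specific diameter control of the barrier $Z_F$. Throughout, $G_0^{(F)} \equiv Y_F$, and the backslides $G_j^{(F)}$ are confined to the topological rectangle bounded by $Y_F$, $Z_F$, and adjoining portions of $\partial\Omega_m$, which has $d_\infty^{A_F}$-width at most $L + O(1)$. It is necessary to record at the outset that in both cases $\mathbb P(Y_I \leadsto Z_F) > 1 - 2\vartheta$: in Case 1 this follows because $Y_{F,L}$ is a forward slide strictly prior to the jump stage (so $\mathbb P(Y_I \leadsto Y_{F,L}) > 1-\vartheta$), and the effective-region trimming producing $Z_F$ only discards events whose probability is bounded by $O(\vartheta^4)$ via the simultaneous yellow--and--blue crossing argument used in the proof of Lemma \ref{er}; in Case 2 one has $Z_F = Y_{F,L}$ directly, with the lower bound following from the long-subsegment observation applied to $Y_{F,L}$ itself.

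For part (1), a verbatim repetition of the proof of Lemma \ref{slid} inside the strip bounded by $Z_F$ and $Y_F$ gives the slow-growth inequality
\[
\mathbb P(Y_I \leadsto G_{j-1}^{(F)}) \;\geq\; r \cdot \mathbb P(Y_I \leadsto G_j^{(F)}),
\]
since each SIU of $G_j^{(F)}$ sits one SIU away from some SIU of $G_{j-1}^{(F)}$. These probabilities increase monotonically from below $\vartheta$ at $j = 0$ to above $1 - 2\vartheta$ once the final slide reaches $Z_F$, so the slow-growth inequality -- together with a modest strengthening of \eqref{ERT} such as $\vartheta / r < 1 - 2\vartheta$ -- produces the required intermediate $m$ with $\mathbb P(Y_I \leadsto G_m^{(F)}) \in (\vartheta, 1 - \vartheta)$.

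For part (2), strip confinement already contributes $O(L)$ to $\|G_m^{(F)}\|_\infty$ in the transverse direction, so the task reduces to bounding the longitudinal extent. In Case 1, this longitudinal extent is at most $\|Z_F\|_\infty + O(L)$, and the effective-region construction of $Z_F$ from $Y_{F,L}$ (a direct application of Lemma \ref{er} with $L$ in place of $J_F$) yields $\|Z_F\|_\infty \leq (3B+2)\cdot L + c$, giving the bound $\|G_m^{(F)}\|_\infty \leq \kappa B \cdot L$. Case 2 is the main obstacle: here $Z_F = Y_{F,L}$ may be arbitrarily long, and the bound must be extracted from the long-subsegment observation by contradiction. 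The plan is to assume $\|G_m^{(F)}\|_\infty > \kappa B L$ for sufficiently large $\kappa$, identify the $d_\infty$-projection of $G_m^{(F)}$ onto $Z_F$ as a connected subsegment $w \subseteq Y_{F,L}$ traversing more than $B$ consecutive $\mathfrak b_j$ boxes between $\mathfrak b_\beta$ and $\mathfrak b_\alpha$, invoke the cited observation to obtain $\mathbb P(Y_I \leadsto w) > 1 - \vartheta^2$, and then convert the event $\{Y_I \leadsto w\}$ into $\{Y_I \leadsto G_m^{(F)}\}$ by a uniform-probability RSW-type construction in the intervening strip of width $O(L)$, paved by a bounded number of boxes of scale $L$ with FKG used for the join. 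This forces $\mathbb P(Y_I \leadsto G_m^{(F)}) > 1 - \vartheta$, contradicting part (1) and yielding the claimed bound.
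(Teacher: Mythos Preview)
Your approach diverges from the paper's at a structural level, and the divergence creates a real gap in Case~1.

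The paper does \emph{not} start the backslide from the untrimmed $\widetilde Y_F$. It first performs an effective--region construction on $\widetilde Y_F$ itself (using, in the $\Gamma_\ell$--left--of--$\Gamma_r$ subcase, the long--subsegment observation to cap the $\Gamma_\ell$--$\Gamma_r$ separation by $B\cdot L$), producing a trimmed starting segment $\widetilde Y_F^{(e)}$ with $\|\widetilde Y_F^{(e)}\|_\infty \le \kappa' B\cdot L$. Only then does it backslide with barrier $Z_F$. The diameter bound is obtained by bounding the \emph{number of steps} $m$: after $\approx L$ steps the backslide reaches $Z_F$; any further $\kappa' B\cdot L$ steps would subsume a portion of $Z_F$ so large (or all of it) that $\mathbb P(Y_I\leadsto G_m)$ would already exceed $1-\vartheta$, hence $m\lesssim \kappa' B\cdot L$. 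Since each step adds $O(1)$ to the diameter, $\|G_m^{(F)}\|_\infty \le \|\widetilde Y_F^{(e)}\|_\infty + c\cdot m \le \kappa B\cdot L$. Note this argument is indifferent to whether $Z_F$ is the trimmed or untrimmed barrier.

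Your route starts from the untrimmed $G_0^{(F)}=\widetilde Y_F$ and tries to bound $\|G_m^{(F)}\|_\infty$ from strip geometry alone. In Case~1 you assert ``longitudinal extent is at most $\|Z_F\|_\infty + O(L)$'', but this is exactly what fails: the strip between $Z_F$ and $\widetilde Y_F$ is bounded on the $Y_I$ side by the short $Z_F$, but on the $\omega$ side by $\widetilde Y_F$, whose diameter is \emph{a priori} uncontrolled (this is the whole point of the backslide scenario). For small $j$ the segments $G_j^{(F)}$ live near $\widetilde Y_F$ and inherit its possibly enormous longitudinal extent; nothing you have written forces the particular $G_m^{(F)}$ with crossing probability in $(\vartheta,1-\vartheta)$ to have retracted to the vicinity of $Z_F$. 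Your Case~2 contradiction argument cannot rescue Case~1 either, since there the projection onto $Z_F$ is automatically short and the long--subsegment observation is unavailable. The missing idea is precisely the paper's: control the diameter of the \emph{starting} segment before backsliding, so that the diameter bound reduces to a step--count bound.
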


\begin{proof}
Let us temporarily denote by $\widetilde Y_F$ the first successor segment such that $\mathbb P(Y_I \leadsto \widetilde Y_F) < \vartheta$ (i.e., the suppressed second index is approximately equal to $2L$).  First, if necessary, we perform effective region like construction on $\widetilde Y_F$: again, we reiterate that if the relevant $\Gamma_\ell$ lies to the right of $\Gamma_r$, then the relevant $\tau_\ell$ and $\tau_r$ are well controlled and the diameter of the effective region obeys the bounds stated in Lemma \ref{er} (relative to $\widetilde J_F$ which is the relevant separation distance between $Y_I$ and $\widetilde Y_F$).  On the other hand, if $\Gamma_\ell$ is to the left of $\Gamma_r$, then the observation immediately preceding the statement of this proposition implies that the separation between $\Gamma_\ell$ and $\Gamma_r$ cannot exceed $B \cdot L$ and, if necessary, we may perform an additional effective region construction.

In any case, as in Lemma \ref{er}, we have manufactured a $\widetilde Y_F^{(e)}$ whose diameter does not exceed $\kappa'B \cdot L$.  Now we will carry out the backsliding procedure starting with $\widetilde Y_F^{(e)}$ with the barrier $Z_F$ as described above.  We observe that after approximately $L$ iterations, we will have reached $Z_F$.  We will not do \emph{more} than an additional $\kappa'B \cdot L$ iterations (in order to achieve the crossing probability criterion) because if so,
we would have subsumed a portion of $Z_{F}$ at least this large (or perhaps all of it)   and, by the observation immediately preceding the proposition,
the crossing probability from $Y_I$ would exceed $1 - \vartheta^2$.  Thus, $m$ is certainly less than $(\kappa' B + 1) \cdot L$.
It follows that $\|G_m^{(F)}\|_\infty$ is less than $\|\widetilde Y_F^{(e)}\|_\infty + c \cdot m$, where the constant $c$ corresponds to increase in diameter due to the $d_\infty$--neighborhood construction.  We thus have that $\|G_m^{(F)}\|_\infty$ is less than or equal to $\kappa B \cdot L$ for some constant $\kappa$, as stated.
\end{proof}

We can now set $Y_F$ to be equal to the $G_m^{(F)}$ existentiated above and set $J_F$ to be equal to $L$.  For future reference we will adopt $\kappa$ as the constant bounding the diameter of all such regions.

\subsubsection{The $R$--Construction \rm{(percolative boxes)}}
Our objectives will eventually be achieved  by establishing the existence of monochrome percolation connections between e.g., (the vicinity of) $\omega$ and the central region 
$\mathcal D_\Delta$.  This will be accomplished by 
establishing a grid of contiguous boxes within each of the successor ring fragments.  This entails a suitable modification of $Y_{F}^{(\cdot)}$, which will then \textit{define} the final ring fragment of interest.  Most pertinently, as will emerge later, the sizes of the boxes in neighboring fragments will differ by at most a uniform scale factor.

\begin{remark}
We remark that in the forthcoming use of boxes, as far as connectivity properties are concerned, all notions are inherited from the standard 2D square site lattice, e.g., boxes are connected if they share an edge in common and, dually, $*$--connected means an edge or corner in common.
\end{remark}

Suppose now that $Y_I, Y_F$ and $J_F$ are such that the conclusion of Lemma \ref{er} holds (particularly, $\mathbb P(Y_I \leadsto Y_F) \in (\theta'', 1 - \theta'')$).  Let us tile the region enclosed by $Y_I$ and $Y_F$ (i.e., $A_F$) by boxes of size $2^{-r} \cdot J_F$ for some $r > 0$.  The boxes under consideration will include all boxes whose closure intersects the closure of $A_{F}$; the precise value of $r$ will be specified later.
We remark that these boxes will be a super--grid for boxes of the ultimate small scale which will be $2^{-2r} \cdot J_{F}$.  We start with the larger scale:

\bigskip

\noindent{\bf Claim.}  There exists some fixed $r_0 \equiv r(\theta'') > 0$ such that if $r > r_0$ then among the aforementioned boxes of scale $2^{-r} \cdot J_{F}$ there is a connected path 
between $Y_I$ and $Y_F$ by boxes which do not intersect the blue boundaries.

\bigskip

\noindent \emph{Proof of Claim.}  Let $\beta$ denote any box which intersects $\mathscr B_{\ell}$.  Then we assert, for $r$ chosen suitably large, that $\beta$ or any box in its $*$--connected neighborhood does not meet $\mathscr B_{r}$.  Indeed, supposing to the contrary, let us consider the $r-2$ annuli of doubling sizes 
$2^{-r}\cdot J_F\times [4, \dots,  2^{r-1}]$ with $\beta$ at the center.  Since we have started at $4\times 2^{-r}\cdot J_F$ the inner ring always contains the $*$--neighborhood of $\beta$ and since we have ended at $\frac{1}{2} \cdot J_{F}$, it cannot be the case that \textit{both} $Y_{F}$ and $Y_{I}$ penetrate (any of) these annuli.  
In each such annulus, by weak scale invariance of critical percolation, a blue circuit independently exists with probability at least some $\lambda$ and each such blue circuit would connect $\mathscr B_{\ell}$ to $\mathscr B_{r}$ in $A_{F}$, thereby preventing a yellow crossing.  But then for $r$ large enough, the yellow crossing probability would be too small; indeed, 
$$
\theta ^{\prime\prime} \leq \mathbb P(Y_I \leadsto Y_F) \leq (1-\lambda)^{r-2}.
$$
The above \emph{defines} $r_{0}$:  $(1-\lambda)^{r_{0}-2} := \theta^{\prime\prime}$ and so the assertion has been established.  Now the claim will (eventually) follow:  let $y_{r}$ and $y_{\ell}$ denote the left and right endpoints of $Y_{I}$ and moving from 
$y_{\ell}$ to $y_{r}$ along $Y_{I}$ let $\beta_{\ell}$ denote the last (rightmost) box intersected by $Y_{I}$ which also intersects $\mathscr B_{\ell}$.  We similarly define 
$\beta^{\prime}_{\ell}$ along $Y_{F}$ and $\beta_{r}, \beta_{r}^{\prime}$ relative to $\mathscr B_R$.
We further define $\mathbf Y_{I}$, $\mathbf Y_{F}$, $\mathbf B_{\ell}$ and 
$\mathbf B_{r}$:  the set $\mathbf Y_{I}$ consists of those boxes which meet the portion of 
$Y_{I}$ which is after its last exit from $\beta_{\ell}$ and before its first entrance 
into $\beta_{r}$.  Similarly, let $\mathbf B_{\ell}$ denote those boxes which meet the portion of 
$\mathscr B_{\ell}$ after its last exit from $\beta_{\ell}$ and before its first entrance into 
$\beta_{\ell}^{\prime}$.  Similarly for $\mathbf Y_{F}$ and $\mathbf B_{r}$.

It is seen that save for the four corners mentioned, all these sets are disjoint:
the $\mathbf Y$ pair are ``well separated'', the $\mathbf B$ pair are (not quite as well) separated due to the above assertion and, e.g., $\mathbf Y_{I}$ and $\mathbf B_{\ell}$ are disjoint save for $\beta_{\ell}$
essentially by definition.  Now these sets, which are $*$--connected objects (although not necessarily $*$--connected paths) certainly form the boundary of a ``crossing problem arena'' -- we temporarily delete from consideration all other boxes which meet $\partial A_{F}$.  The desired crossing exists because, due to the first assertion, we have, for example, that the $*$--connected boundary of $\mathbf B_{\ell}$ is disjoint from $\mathbf B_{r}$.
\qed
%Suppose no such connection exists for some particular $r$.  Then necessarily there exists a box, $s_*$, of size $2^{-r} \cdot J_F$ inside $A_F$ such that $\mathscr B_\ell \cap s_* \neq \emptyset$ \emph{and} $\mathscr B_r \cap s_* \neq \emptyset$ (here again $\mathscr B_\ell$ and $ \mathscr B_r$ denote the two pieces of $\partial \Omega_m$ (``left'' and ``right'') etc., forming the blue boundary of the topological rectangle $ A_F$).  
%
%Consider the $r$ annuli around $s_*$
%of doubling sizes: $2^{-r}\cdot J_F\times [2, 4, \dots,  2^{r}]$.  In each such annulus, by weak scale invariance of critical percolation, a blue circuit independently exists with probability at least some $\lambda$  and each such blue circuit, 
%intersected with $\Omega_{m}$,  will connect 
%$\mathscr B_{r}$ to $\mathscr B_{\ell}$ thereby 
%preventing the possibility of a yellow connection between $Y_I$ and $Y_F$.  
\begin{remark}
For the benefit of the forthcoming, the above display should be modified to read 
$(1 - \lambda)^{r} < \vartheta^{2} \ll \theta^{\prime\prime}$ -- where in addition, we have now stipulated $\vartheta \sim \theta^{\prime\prime} \ll 1$.  This will set our scales.  Thus we envision $2^{r}$ as comparable to $B$ but, in any case, we certainly have
$$
2^{r} \geq B
$$
\end{remark}
We will now define the current modification to $Y_F$:

\noindent\textbf{Definition of $Y_F^{(b)}$.}  
Consider now the grid of scale 
$2^{-2r} \cdot J_{F}$ -- considerably smaller than the previous grid.  Here it is stipulated that 
$r$ (with $r > r_{0}$) has been chosen so as to satisfy
$$
(1 - \lambda)^{r-1} < (\theta^{\prime\prime})^{2},
$$
with $\lambda$ being the probability of a monochrome circuit in an annulus where the outer scale is twice the inner scale.
Again we consider the (closed) boxes which have non--zero intersection with the 
region $A_{F}$ -- and for the immediate future, no others.  From this set we delete all boxes which have non--empty intersection with $Y_F$ \textit{and their $*$--neighboring boxes}.  

The remaining boxes may now consist of several components.  Nevertheless, we may temporarily consider all boxes with at least one edge not shared by another box in the collection to be a boundary box.  These ``exposed'' box--edges form closed circuits -- the edge boundaries of the components.  Among these circuits there is to be found a unique path which begins on $\mathscr B_{\ell}$, ends on $\mathscr B_{r}$, have 
no other encounters with either of the  $\mathscr B$'s and which is disjoint from the old 
$\mathbf Y_{I}$.  This special path is characterized as follows; we consider a larger set of  paths which are allowed to use edges of \textit{any} box.  These paths have the same general description used above: they connect $\mathscr B_{\ell}$ to $\mathscr B_{r}$ and consist of whole (box) edges save, possibly, for the first and last wherein occurs the only encounter with $\mathscr B_{\ell}\cup\mathscr B_{r}$.  Such a set is obviously non--empty since $\mathscr B_{\ell}$ is connected to $\mathscr B_{r}$ by boxes (even of the larger scale) and the paths are partially ordered with respect to their separation from $Y_{I}$.  The special path is the one that is ``furthest'' from $Y_{I}$.

One final modification is required.  Consider the boxes which share at least one edge (or fragment thereof) with the aforementioned path.  By the ordering considerations of the construction, each edge belongs to a unique box (the collection of which in fact form a $*$--connected chain).  Moving from left to right along the path, we define, as in the context of the larger grid, the box $\mathfrak b_\ell$ to be the rightmost of the boxes which intersect 
$\mathscr B_{\ell}$ and similarly for $\mathfrak b_r$.  We now consider the portion of the path connecting $\mathfrak b_{\ell}$ to $\mathfrak b_{r}$ and adjoin to it the box edges (partial or otherwise) of $\mathfrak b_{\ell}$ and $\mathfrak b_{r}$ so that overall we have reconstituted a connection between $\mathscr B_{\ell}$ and $\mathscr B_{r}$.  This final step should be enacted by choosing the path along the box edges to be ``as far away'' from $Y_I$ as possible.

The definition is essentially complete, the above constructed path constitutes the segment
$Y_{F}^{(b)}$.  The region $A_{F}^{(b)}$ is the topological rectangle with boundaries 
$Y_{I}$, $Y_{F}^{(b)}$ and the appropriate portions of $\mathscr B_{\ell}$ and $\mathscr B_{r}$ which connect these segments.    Finally, we use the notation
$$
b_F := 2^{-2r} \cdot J_F
$$
for the scale -- sidelength -- of the individual boxes.  

\begin{remark}
\label{bkd}
Let us note for future reference that by virtue of having used $*$--connected neighborhoods, it is the case that there is a full layer of connected boxes separating $Y_F$ from $Y_F^{(b)}$.  The fact that there is a full layer is clear; the fact that this layer is connected follows from the observation that the boxes which intersect $Y_F$ were, certainly, $*$--connected, so the layer represents a portion of the dual circuit surrounding this set.
\end{remark}

We now observe that $Y_F^{(b)}$, $A_{F}^{(b)}$ etc., have various anticipated properties:
\begin{prop}\label{bcs}
Consider the boundary boxes of $A_{F}^{(b)}$ that share portions of their boundary with $Y_{F}^{(b)}$.  Then these boxes form a $*$--connected cluster -- and hence $Y_{F}^{(b)}$ itself is connected.  Moreover, each such box is connected by a (connected) path of full boxes to boxes which intersect $Y_{I}$ with a path length --  measured in number of boxes -- which does not exceed some universal constant $\textsc l \in (0, \infty)$.
\end{prop}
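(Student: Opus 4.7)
The plan is to verify both assertions by direct analysis of the definition of $Y_F^{(b)}$ given in the $R$--construction. Although the construction is elaborate, the final object $Y_F^{(b)}$ is literally a single path of box edges (obtained by adjoining portions of the boundaries of $\mathfrak b_\ell$ and $\mathfrak b_r$ to the ``furthest from $Y_I$'' path chosen from the circuits of exposed edges), and the ambient geometry is controlled by the universal constants $\kappa$, $B$, $r$ which are independent of $n$ and of the particulars of $A_F$.

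For the $*$--connectedness of the boundary boxes adjacent to $Y_F^{(b)}$, I would argue as follows. By construction $Y_F^{(b)}$ is a path of box edges whose successive edges share an endpoint. Each such edge lies on the boundary of a unique full box on the ``$Y_I$--side'' of the path: the path was chosen by the maximality prescription to be furthest from $Y_I$, and lies on exposed edges, so the interior of $A_F^{(b)}$ abuts $Y_F^{(b)}$ exactly on the full--box side. Two consecutive edges of the path share a vertex, and hence the two corresponding full boxes share that vertex and are therefore $*$--adjacent. This exhibits the boundary boxes adjacent to $Y_F^{(b)}$ as a $*$--connected chain, and in particular shows that $Y_F^{(b)}$ itself is a connected set of edges.

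For the path length bound, first I would observe that the existence claim proved at the scale $2^{-r} \cdot J_F$ in the $R$--construction transcribes verbatim to the smaller scale $b_F = 2^{-2r} \cdot J_F$: the argument relies only on independent annular circuit events and on the aspect--ratio bound from Lemma \ref{er}, and this is in fact the reason for requiring $(1-\lambda)^{r-1} < (\theta'')^2$. Consequently there is a connected component $\mathscr M$ of full boxes at scale $b_F$ which touches both $Y_I$ and the deleted ribbon around $Y_F$. The ``furthest from $Y_I$'' characterization then identifies the boundary boxes adjacent to $Y_F^{(b)}$ as precisely those boxes of $\mathscr M$ incident to its exposed bounding edges on the $Y_F$--most side; in particular these boundary boxes lie in $\mathscr M$ together with boxes touching $Y_I$.

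To extract the constant $\textsc l$, I would then bound the diameter of $\mathscr M$ trivially by the total number of boxes in $A_F^{(b)}$. By Lemma \ref{er} and Proposition \ref{er2}, the diameter of $A_F^{(b)}$ is at most a universal multiple of $J_F$ (of order $\kappa B \cdot J_F$), so the total number of boxes of scale $b_F$ inside $A_F^{(b)}$ is at most a universal constant $\textsc l$ of order $(\kappa B)^2 \cdot 2^{4r}$. Any graph--theoretic path inside $\mathscr M$ has length at most $\textsc l$, which delivers the stated bound. The only conceptually nontrivial point is the identification of the component $\mathscr M$ in the preceding paragraph (ensuring that the boxes near $Y_F^{(b)}$ actually percolate back to $Y_I$ within the full--box set); everything that follows is an elementary box count.
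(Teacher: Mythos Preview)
Your proposal is correct and follows essentially the same approach as the paper. Both arguments use the path--of--box--edges structure of $Y_F^{(b)}$ to obtain $*$--connectedness of the adjacent boxes, transcribe the separation claim (the annular circuit argument) from scale $2^{-r}J_F$ to scale $2^{-2r}J_F$, and then bound the path length by a crude total box count of order $(\kappa B)^2 \cdot 2^{4r}$.

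The one presentational difference is in how the existence of box connections from $Y_F^{(b)}$ back to $Y_I$ is packaged. You identify a single percolating component $\mathscr M$ and assert that the ``furthest from $Y_I$'' characterization forces the boundary boxes of $Y_F^{(b)}$ to lie in $\mathscr M$; the paper instead explicitly defines the four box--chains $\mathbf y_F,\mathbf y_I,\mathbf b_\ell,\mathbf b_r$, observes that the small--scale separation estimate keeps $\mathbf b_\ell$ and $\mathbf b_r$ disjoint (even from each other's $*$--neighborhoods), and then reads off the crossing directly from the resulting topological rectangle. The paper's phrasing is slightly more robust because it does not require arguing uniqueness of $\mathscr M$ or that the extremal path sits on its boundary; your version implicitly relies on the same separation fact to rule out any second component that could carry the extremal path, so the content is the same. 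For the length bound the paper takes a microscopic path $\mathscr P:Y_F\leadsto Y_I$, passes to its $*$--closure in boxes, and bounds by the box count in an enclosing square $\mathbb B$; this is equivalent to your bound by the total box count in $A_F^{(b)}$.
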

\begin{proof}
It is noted that by construction, $Y_{F}^{(b)}$ is constituted from a non--repeating sequence of attached box edges all of which are complete save the first and last.  Therefore, it is manifestly connected and, moreover, the boxes which contribute boundary elements to this segment form a $*$--connected chain running from
$\mathfrak b_{\ell}$ to $\mathfrak b_{r}$, inclusive.  We denote this chain by 
$\mathbf y_{F}$ and define
$\mathbf y_{I}$,
$\mathbf{b}_{\ell}$ and
$\mathbf{b}_{r}$ in the same fashion as their upper case counterparts were defined using the larger scale boxes.  The region bounded by $\mathbf y_{F}$
and appropriate elements of the other sets 
$\mathbf y_{I}$,
$\mathbf{b}_{\ell}$ and
$\mathbf{b}_{r}$
form a topological rectangle along with associated boundaries so that opposing pairs are well separated (as in the claim for the larger scale boxes) and hence any box in $\mathbf y_{F}$
can be connected to some box in $\mathbf y_{I}$.  

Finally, concerning the length of the connection, let us provide a crude estimate: it is known that the diameter of $Y_F^{(b)}$ is bounded by $\|Y_F\|_\infty \leq \kappa B \cdot J_F$ (by Lemma \ref{er} and Proposition \ref{er2}); moreover, every point on $Y_F^{(b)}$ is separated from some point on $Y_I$ by $d_\infty$--distance some constant times $\kappa B \cdot J_F$.  Thus, a box which is a reasonable multiple of $B \cdot J_F$ centered about $Y_F$ -- let us call it $\mathbb B$ -- contains all of $Y_F$ and paths between every point on $Y_F$ and some point on $Y_I$ \emph{and} a substantial neighborhood (e.g., larger than $J_F$) of these paths.  

Now consider any ``microscopic'' path $\mathscr P$ in $A_F$ between a point on $Y_F$ and some point on $Y_I$ contained in $\mathbb B$ and let us denote by $\mathfrak b_\mathscr P$ the set of small boxes visited by $\mathscr P$.  Of course, $\mathfrak b_\mathscr P$ may contain boundary boxes from $\mathbf b_\ell$ and/or $\mathbf b_r$; however, since these are well--separated, the $*$--connected closure of $\mathfrak b_\mathscr P$ manifestly contains a connected path of full boxes between $\mathbf y_F$ and $\mathbf y_I$ (indeed, the $*$--connected closure of e.g., $\mathbf b_\ell$ itself consist of full boxes).  Finally, it is certainly the case that \emph{any} such path cannot possibly consist of more boxes than the total number of boxes in the aforementioned bounding set $\mathbb B$.  Since both $\mathbb B$ and the small boxes have area proportional to a universal constant times $J_F^2$, the result follows for some $\textsc l$.
\end{proof}

\bigskip

Our next claim is that the replacement of $Y_F$ by $Y_F^{(b)}$ does not substantially alter the yellow crossing probability:

\noindent \textbf{Claim.}  Suppose $\mathbb P(Y_I \leadsto Y_F) \in (\theta'', 1 - \theta'')$, then 
$$
\theta'' \leq \mathbb P(Y_I \leadsto Y_F^{(b)}) < \frac{1- \theta''}{(1 - (\theta'')^2)^2}
\approx 1 - \theta'' + 2(\theta'')^2.
$$

\noindent\emph{Proof of Claim.}  The lower bound is immediate since a crossing from $Y_I$ to $Y_F$ certainly implies a crossing to $Y_F^{(b)}$.  The upper bound follows from an easy continuation of crossings argument: indeed, consider e.g., the ``left'' endpoint of $Y_F$ whose grid box we surround by an annulus with inner scale $2^r \cdot b_F$ and outer scale $2^r$ times this inner scale.  Let us consider the event $\mathbb G$ that the rightmost crossing from $Y_I$ to $Y_F^{(b)}$ is outside the inner square defining the above annulus.  Then, we have at least $r$ independent chances -- by setting up independent annuli -- to continue the rightmost crossing to $Y_F$.  This continuation therefore happens with (conditional) probability in excess of $1 - (\theta'')^2$ by the choice of $r$ and so  
$$
\mathbb P(\{Y_I \leadsto Y_F \mid \{Y_I \leadsto Y_F^{(b)}\} \cap \mathbb G) > 1 - (\theta'')^2.$$

On the other hand, let $\mathbb O_Q$ denote the event of a yellow circuit somewhere inside the
$b_{F}\cdot2^{r} \times b_{F}\cdot2^{2r}$
annulus.  We note that should the event $\mathbb O_Q$ occur, then, indeed, the rightmost crossing \emph{is} to the right of this inner square and so by the choice of $r$, $\mathbb P(\mathbb G) \geq \mathbb P(\mathbb O_Q) > 1 - (\theta'')^2$ 

The desired inequality now follows:  
\[
\begin{split}
\mathbb P(Y_I \leadsto Y_F) &\geq \mathbb P(Y_I \leadsto Y_F \mid \{ Y_I \leadsto Y_F^{(b)}\} \cap \mathbb G) \cdot \mathbb P(\{Y_I \leadsto Y_F^{(b)}\} \cap \mathbb G)\\
&> (1 - (\theta'')^2)^2 \cdot \mathbb P(Y_I \leadsto Y_F^{(b)}).
\end{split}
\]
Here to arrive at the last line we have used the FKG inequality.
\qed

\begin{cor} \label{ddf}
Consider the curve, denoted by $\overline{Y}_{F}^{(b)}$ which consists of the outer edges (i.e., closer to $Y_F$) of the boxes which share an (inner) edge with a box in 
$\mathbf{y}_{F}$ and attached to $\mathscr B_{\ell}$ and $\mathscr B_{r}$ by an analogous procedure as was used for $Y_{F}^{(b)}$.
  Then
$$
\mathbb P(\overline{Y}_{F}^{(b)} \leadsto Y_F^{(b)}) > 1 - (\theta'')^2.
$$
\end{cor}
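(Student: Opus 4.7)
\noindent \textit{Proof proposal.} The plan is to derive the estimate as an essentially immediate corollary of the annular construction used in the proof of the preceding claim. Recall that in that proof, around the left endpoint of $Y_F$ on $\mathscr B_\ell$ one introduces the event $\mathbb O_Q$ of a yellow circuit--fragment within the annulus of inner scale $b_F \cdot 2^r$ and outer scale $b_F \cdot 2^{2r}$, and establishes $\mathbb P(\mathbb O_Q) > 1 - (\theta'')^2$ by placing $r$ disjoint concentric subannuli, each supporting a yellow fragment independently with probability at least $\lambda$, and invoking the stipulated choice $(1-\lambda)^{r-1} < (\theta'')^2$. My aim is to show that $\mathbb O_Q$ already implies the event $\{\overline{Y}_F^{(b)} \leadsto Y_F^{(b)}\}$; the stated probability bound will then follow.

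To exhibit the implication, I would note the key topological fact that, by the way they are constructed, $Y_F^{(b)}$ and $\overline{Y}_F^{(b)}$ both have endpoints on $\mathscr B_\ell$ and $\mathscr B_r$ in the immediate vicinity of the corresponding endpoints of $Y_F$, and each traverses the entirety of $A_F$ from one blue boundary to the other. In particular, at the selected scales $b_F \cdot 2^r$ through $b_F \cdot 2^{2r}$, both curves must enter the inner disk of the annulus (near its center on $\mathscr B_\ell$) and must exit through the outer boundary. Any yellow fragment realizing $\mathbb O_Q$ separates the inner disk from the complement of the outer disk within $\Omega_m$ and therefore is crossed by \emph{both} $Y_F^{(b)}$ and $\overline{Y}_F^{(b)}$. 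The intersections produce monochromatic yellow points on each of the two curves belonging to a common yellow path--fragment, which is precisely the connection event $\overline{Y}_F^{(b)} \leadsto Y_F^{(b)}$.

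The main (mild) obstacle I expect is the bookkeeping needed to check that the yellow fragment in the annulus genuinely connects the two curves notwithstanding that $\partial \Omega_m$ cuts through the annulus: one must verify that the fragment--annular construction lives in $\Omega_m$, that the $Y$--curves both hit the inner disk (which is guaranteed because the endpoints of $Y_F^{(b)}$ and $\overline{Y}_F^{(b)}$ on $\mathscr B_\ell$ are separated by only $O(b_F)$, well inside the inner scale $b_F \cdot 2^r$), and that both curves exit past the outer scale (which follows from $\|Y_F^{(b)}\|_\infty, \|\overline{Y}_F^{(b)}\|_\infty \gtrsim J_F = 2^{2r} \cdot b_F$ by the aspect ratio estimates of Lemma~\ref{er} and Proposition~\ref{er2}). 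None of this requires new ideas; the crossing estimates already used are the standard RSW type inputs valid uniformly in the underlying lattice spacing, so the entire argument inherits the uniformity needed for later use.
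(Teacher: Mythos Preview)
Your proposal is correct and follows essentially the same route as the paper: both invoke the annular circuit event $\mathbb O_Q$ from the preceding Claim and deduce the connection. The paper's one--line proof organizes the geometry slightly differently: rather than checking separately that the circuit fragment meets each of $Y_F^{(b)}$ and $\overline{Y}_F^{(b)}$, it observes that $\overline{Y}_F^{(b)}$ \emph{separates} $Y_F$ from $Y_F^{(b)}$, so that any yellow connection $Y_F^{(b)}\leadsto Y_F$ (which the annular argument already supplies) must pass through $\overline{Y}_F^{(b)}$. This spares you the bookkeeping of verifying that $\overline{Y}_F^{(b)}$ itself traverses the annulus --- it is automatic once $Y_F$ and $Y_F^{(b)}$ do, since $\overline{Y}_F^{(b)}$ is sandwiched between them. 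One small caution: your appeal to $\|Y_F^{(b)}\|_\infty,\|\overline{Y}_F^{(b)}\|_\infty \gtrsim J_F$ for the exit from the outer square is not quite what Lemma~\ref{er} gives (the lower bound there is $B^{-1}J_F$); the separation formulation sidesteps this, and in any case the paper's own treatment of these geometric details is at a comparable level of informality.
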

\begin{proof}
This follows directly from the above argument using the observation that 
$\overline{Y}_{F}^{(b)}$ separates $Y_{F}$ from $Y_{F}^{(b)}$.
\end{proof}
\begin{remark}
We remark that we should think of the above claim ``dually'', i.e., that the complementary bound 
$$
\theta'' - c_2(\theta'')^2 < \mathbb P(\mathscr B_\ell' \leadsto \mathscr B_r') \leq 1 - \theta''
$$
(for $\theta^{\prime\prime}$ small and some constant $c_{2}$)
holds for blue crossings; here $\mathscr B_\ell, \mathscr B_r$ now denote the appropriate ``left'' and ``right'' blue boundaries of the topological rectangle formed by $Y_I$ and $Y_F^{(b)}$.  Indeed, as will unfold below, all that is actually used for yellow crossings is the ability to cross these regions via paths inside the boxes with a probability which is independent of the details of the region.  
\end{remark}

\subsection{Induction}\label{fcont}
We are now ready to assemble all ingredients and describe the full inductive procedure.  Our goal is to show that there are box connections between all successive yellow segments \emph{and} that the box scale of neighboring layers do not differ by too much.

First let us set out the base case for our induction.  Let $Y_0:= \partial S_0(\omega) \cap \partial \mathcal D_\Delta$ be as described before.  If 
$\mathbb P(Y_0 \leadsto \omega) \geq \vartheta$, then stop. (Indeed, this is one of the stopping criterion for our iteration and if it actually occurs on the first step, then clearly the lattice spacing is too large to be worthy of any detailed consideration.)  Otherwise, let us perform the $S, Q, R$--constructions to yield some $\widehat Y_1$ so that $\mathbb P(\widehat Y_1 \leadsto Y_0) \in (\theta'', 1 - \theta'')$.  Recall that here $\theta''$ may differ from $\vartheta$ due to the $O(\vartheta^2)$ errors in the $Q, R$--construction sections.  It is noted that since $\mathcal D_\Delta$ has a radius half the maximal amount, for judicious choice of $\vartheta$, the ancillary constructions, in particular the $Q$--construction, are actually not necessary.

We will assume the existence of segments
\[P_0 \equiv \widehat Y_0, ~P_1 \equiv \widehat Y_1, ~\dots,~ P_{k-1} \equiv \widehat Y_{k-1}, ~T_k \equiv \widehat Y_k.\]
(Here the $\widehat~$ denotes modifications to the original segments $Y_\ell$ from the $Q, R$--constructions.)  The following are our \emph{inductive hypotheses}:

0) $T_k$ is the box construction version of some $Y_k$ and there is some $\overline{Y}_{k}$ which is the $d_{\infty}$-- neighborhood boundary of $T_{k}$ such that 

\indent \indent i) $\mathbb P(\overline{Y}_{k} \leadsto T_k) \geq 1 - (\theta'')^2$ (by Corollary \ref{ddf}); 

\indent \indent ii) $d_\infty(T_k, Y_k) > b_k$ (by Remark \ref{bkd}); 

1) the following conclusions hold for $P_{k-1}$ and $T_k$:

\indent \indent i) $\mathbb P(P_{k-1} \leadsto T_k) \in (\theta'', 1 - \theta'')$;

\indent \indent ii) all of $T_k$ can be connected to 
 $P_{k-1}$ via boxes of size $b_k$ completely unobstructed by $\partial \Omega_m$ and further the number of boxes required for such a connection is not in excess of {\textsc l} (by Proposition \ref{bcs});

\indent \indent iii) it is the case that (see the effective regions construction subsubsection)
\[B^{-1} \cdot J_k \leq \|P_{k-1}\|_\infty \leq 2^{2r + 1} (\kappa B) \cdot J_k,~~~B^{-1} \cdot J_k \leq \|T_k\|_\infty \leq \kappa B \cdot J_k.\]

2) the conclusions of 1.ii) and 1.iii) also hold for (prior) successive segments $P_{\ell-1}, P_\ell$, $1 \leq \ell \leq k-1$; also a weakened version of 1.i) holds for these prior segments: $\mathbb P(P_{j-1} \leadsto P_j) \leq 1 - \theta''$ (which we think of as a lower bound on the blue crossings).

Here the letter $P$ denotes what is considered a \emph{permanent} yellow segment whereas $T$ denotes a \emph{temporary} segment.  
Let us note that we will not require the lower bound on yellow crossing probabilities per se except for the last layer consisting of $P_{k-1}, T_k$.  

\bigskip

\textbf{Step 1.}  We first construct some segment $Y_{k+1}$ by sliding $T_k$ 
towards $\omega$ and backsliding if necessary so that 
$$
\mathbb P(T_k \leadsto Y_{k+1}) \in (\vartheta, 1 - \vartheta).
$$
%\textbf{Step 1.a.} If this is successful we will go to step 2.  
%
%\textbf{Step 1.b.} Otherwise it must be the case that there is a ``drastic jump'' in crossing probabilities and some ``backsliding'' may be required.  I.e., there are successive segments $Y_{k+1}^{(\pm)}$ so that 
%\[ \mathbb P(T_k \leadsto Y_{k+1}^{(-)}) > 1 - \vartheta,~~~\mathbb P(T_k \leadsto Y_{k+1}^{(+)}) < \vartheta.\]
%Now we start backsliding starting with $Y_{k+1}^{(+)}$ towards $T_k$ until we find the first segment -- which we now denote by $Y_{k+1}$ -- so that $\mathbb P(T_k \leadsto Y_{k+1}) \in (\vartheta, 1 - \vartheta)$.  Recall that (\blue{as discussed in ???})  there can be no more drastic jumps in crossing probabilities in backsliding. 
%

\textbf{Step 2.} Next we perform the effective regions construction and the box construction to yield $\widehat Y_{k+1}$ so that $\mathbb P(T_k \leadsto \widehat Y_{k+1}) \in (\theta'', 1-\theta'')$, $\|\widehat Y_{k+1}\|_\infty \leq \kappa B \cdot J_{k+1}$ and there are suitable box connections between $T_k$ and $\widehat Y_{k+1}$ with boxes of scale $b_{k+1} = 2^{-2r} \cdot J_{k+1}$ (see Lemma \ref{er}, Proposition \ref{er2} and Proposition \ref{bcs}).  Notice that we also immediately have from the bound on the crossing probabilities and Lemma \ref{er} that
$$ 
\|T_k\|_\infty \geq B^{-1} \cdot J_{k+1},~~~\|Y_{k+1}\|_\infty \geq B^{-1} \cdot J_{k+1}. 
$$

\textbf{Step 3.} We have now satisfied items 1.i) and 1.ii) of the inductive hypothesis for the layer $(T_k, \widehat Y_{k+1})$.  

\textbf{Step 4.} 
Now we will verify item 1.iii). First we claim that $J_{k+1} > b_k /2$.  

Indeed, recall that there is a connected neighborhood of boxes separating $T_k$ from $Y_k$ (again see Remark \ref{bkd}).  The (outer) boundary of this
neighborhood is $\overline Y_{k}$ which by item 0.i) is connected to $T_k$
with too high a probability to consider stopping the process until at least some portion of the evolving neighborhood boundary reaches past it.  However, by the nature of the neighborhood sliding construction the entire evolving boundary  pushes through this curve coherently.
So if the construction of $Y_{k+1}$ does not entail a backsliding, then the result immediately follows, in fact without the factor of two.
Now if a backsliding were required, then recall that we used $Y_{k+1, L_{k+1}}$ as a barrier (here $L_{k+1}$ is the closest integer to $J_{k+1}/2$, as described in the discussions preceding Proposition \ref{er2}) and so the claim follows by item 0.ii).  

Item 1.iii) has now been verified since we now have $\|T_k\|_\infty \leq \kappa B \cdot J_k \leq 2^{2r + 1} (\kappa B) \cdot J_{k+1}$.   All induction hypotheses have been verified, so we set $P_k \equiv T_k$ and  $T_{k+1} \equiv \widehat Y_{k+1}$.
\bigskip

The induction can now be continued towards $\omega$, starting with $P_k$ and $T_{k+1}$, provided that $\mathbb P(T_{k+1} \leadsto \omega)< \vartheta$ and $k + 1 \leq \Gamma \cdot \log n$, with $\Gamma$ as in the statement of Theorem \ref{hreg} -- otherwise we stop.  

\begin{remark}
We note particularly that from item 3 in step 4 of the induction, the percolating boxes are connected going from one layer to the next.

Also, for notational convenience, in the statement of Theorem \ref{hreg}, we have reverted back to using $\vartheta$ (so $\vartheta$ there corresponds to $\theta''$ here).
\end{remark}

\subsection{A Refinement}

We will require one additional property of these Harris systems.  First let us define some terminology:

\begin{defn}\label{hrp}
Let $\Omega \subset \mathbb C$ be a bounded, simply connected domain and $\Omega_m$ some interior discretization of $\Omega$.
For $\omega \in \Omega_m$, consider the inductive construction as described, yielding $P_1, P_2,$ etc., until the crossing probability between $\omega$ and the last $P_\ell$ is 
less than
$\vartheta$ -- indicating that we have approximately reached the unit scale --  
or until we have succeeded a sufficient number of segments.  

We remind the reader that we refer to the topological rectangles formed by successive $P_\ell$'s as \emph{Harris rings} and the amalgamated system of these segments around $\omega$ the \emph{Harris system stationed at $\omega$}.  

\end{defn}

For our purposes we will also need to show that for $n$ sufficiently large, 
for the marked point corresponding to $A$, the relevant Harris segments have endpoints lying in the anticipated boundary regions:

\begin{lemma}\label{ped}
Let $\Omega \subseteq \mathbb C$ be a bounded simply connected domain with marked boundary prime ends $A, B, C, D \in \partial \Omega$ (in counterclockwise order) and suppose $\Omega_m$ is an interior approximation to $\Omega$ with $A_m, B_m, C_m, D_m \in \partial \Omega_m$ approximating $A, B, C, D$.  Consider the hexagonal tiling problem studied in \cite{stas_perc} or the flower models introduced in \cite{CL} (in which case we assume the Minkowski dimension of $\partial \Omega$ is less than 2) and the Harris system stationed at $A_m$.  Then there is a number $v_{A}$ such that for all $m$ 
sufficiently large, all but $v_{A}$ of the Harris segments form conduits from
$[D_{m}, A_{m}]$ to $[A_{m}, B_{m}]$.
More precisely, under uniformization, there exists some $\eta > 0$ such that all but $v_{A} = v_{A}(\eta)$ of these segments begin and end in the $\eta$--neighborhood of the pre--image of $A$.  
\end{lemma}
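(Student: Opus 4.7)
My plan is to reduce the assertion to a statement about nested crosscuts in the unit disk via uniformization, and then exploit the uniform RSW bounds already underlying the construction of the Harris system. Let $\varphi:\mathbb D\to\Omega$ denote the Riemann map and set $\widetilde A:=\varphi^{-1}(A)$, with analogous notation for the other three marked points; by Carath\'eodory these are four distinct points of $\partial\mathbb D$. Shrinking $\eta$ if necessary, assume $N_\eta(\widetilde A)$ is disjoint from $\widetilde B,\widetilde C,\widetilde D$. Let $\sigma_0,\sigma_1,\sigma_2,\dots$ enumerate the Harris segments stationed at $A_m$ from outermost inward, and let $R_k\subset\Omega_m$ denote the component of $\Omega_m\setminus\sigma_k$ containing $A_m$. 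The topological half of the lemma---that the two endpoints of $\sigma_k$ land on the arcs $[D_m,A_m]$ and $[A_m,B_m]$ respectively---I would handle quickly: by Step 4 of the inductive construction in Section \ref{fcont} the separation scales $J_k$ shrink geometrically and $\|\sigma_k\|_\infty\le\kappa B J_k$, so for $k$ beyond a bounded constant $k_1$ the arc $\sigma_k$ has Euclidean diameter less than, say, $\tfrac12\min_{X\in\{B,C,D\}}|X-A|$ and hence cannot encircle any of $B_m,C_m,D_m$. Since $\sigma_k$ already separates $A_m$ from $\mathcal D_\Delta$, its two endpoints must then lie on opposite sides of $A_m$ along $\partial\Omega_m$.

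For the quantitative refinement, the key is that each Harris ring $A_k$ has conformal modulus bounded two-sidedly, uniformly in $k$ and $m$. By the inductive hypotheses, the discrete yellow crossing probability between $\sigma_{k-1}$ and $\sigma_k$ is in $(\vartheta,1-\vartheta)$; uniform-in-lattice-spacing RSW bounds then yield $c_1\le\mathrm{mod}(A_k)\le c_2$, with $c_1,c_2$ depending only on $\vartheta$ and the percolation model. Conformal modulus being a conformal invariant, the images $A_k^*:=\varphi^{-1}(A_k)$, $\sigma_k^*:=\varphi^{-1}(\sigma_k)$, $R_k^*:=\varphi^{-1}(R_k)$ in $\mathbb D$ also satisfy $\mathrm{mod}(A_k^*)\in[c_1,c_2]$. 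Superadditivity of extremal distance along the nested decomposition $R_0^*\setminus R_k^*=\bigcup_{j\le k}A_j^*$ gives extremal distance from $\sigma_0^*$ to $\sigma_k^*$ at least $k c_1$, and standard disk-geometry estimates for crescent-like regions abutting $\partial\mathbb D$ then yield exponential shrinkage $\mathrm{diam}(R_k^*)\lesssim e^{-\pi c_1 k}$. Combining this with the fact (implicit in the proof of Theorem \ref{s1} via equation (3.1) and the prime-end structure of the canonical approximation) that $\varphi^{-1}(A_m)\to\widetilde A$, and with the $m^{-1/2}$ boundary-distortion estimate to control pre-images of points on $\partial\Omega_m$, for $k$ exceeding some $v_A=v_A(\eta)$ the entire $R_k^*$ lies in $N_{\eta/2}(\widetilde A)$, so both endpoints of $\sigma_k$ map under $\varphi^{-1}$ into $N_\eta(\widetilde A)$.

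The main obstacle is the bounded-modulus step: one needs a two-sided uniform-in-$m$ RSW estimate for the (possibly irregularly shaped) Harris rings. For the triangular site model this is a standard consequence of the classical RSW machinery. For the flower models of \cite{CL} the analogous uniform rectangle-crossing bounds are precisely those underlying the proof of Cardy's formula in that setting, which is the reason the Minkowski dimension hypothesis on $\partial\Omega$---assumed throughout---is invoked here as well. A secondary technical point is the passage between $\varphi_m$ and the continuum map $\varphi$, which is routine via Carath\'eodory kernel convergence combined with the distortion estimate already exploited in the proof of Theorem \ref{s1}.
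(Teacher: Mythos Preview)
Your approach is genuinely different from the paper's and, if repaired, could work, but there are two real gaps.

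First, the ``topological half.'' You assert that Step~4 of the induction forces the separation scales $J_k$ to shrink geometrically, and hence that $\|\sigma_k\|_\infty$ becomes Euclidean-small after boundedly many steps. Step~4 says $J_{k+1}>b_k/2=2^{-2r-1}J_k$, which is a \emph{lower} bound on $J_{k+1}/J_k$, not an upper one; the construction places no direct ceiling on how slowly the scales may decrease. If $A_m$ sits at the end of a long narrow channel of width $w$ and length $L\gg w$, successive rings can have $J_k\approx w$ for of order $L/w$ steps, with no Euclidean contraction at all. So this paragraph does not stand as written. (The conclusion of that paragraph happens to be true in the channel example, but not for the reason you give.)

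Second, the modulus argument. You invoke that the yellow crossing probability in each ring lies in $(\vartheta,1-\vartheta)$ ``by the inductive hypotheses,'' and then pass to a two-sided bound on $\mathrm{mod}(A_k)$ via ``uniform-in-lattice-spacing RSW.'' Two problems. (i) For prior layers the induction only records the \emph{upper} bound $\mathbb P(P_{j-1}\leadsto P_j)\le 1-\theta''$ (item~2 of the hypotheses); the lower bound is retained only for the last layer. A uniform lower bound has to be manufactured from the box structure (item~1.ii and Proposition~\ref{bcs}), not read off directly. (ii) A crossing probability bound at the fixed scale $m$ does not by itself yield a continuum modulus bound; RSW is not the bridge. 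What does work is the route the paper itself uses elsewhere: the box tiling makes the crossing bounds uniform in the auxiliary scale $N\ge m$ (item~4 of Theorem~\ref{hreg}), and then convergence to Cardy's formula on the fixed ring $A_k$ as $N\to\infty$ converts bounded limiting crossing probability into bounded modulus. Once you have two-sided modulus bounds, your superadditivity/exponential-shrinkage step in $\mathbb D$ is fine and indeed subsumes the topological half, so the flawed first paragraph becomes unnecessary.

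For comparison, the paper avoids modulus entirely. It builds a single positive-probability event $\mathbb G_m$ (two long yellow crossings together with a yellow circuit around $\mathcal D_\Delta$) and checks, by a short case analysis, that any blue separating arc incompatible with the desired conduit type is obstructed by $\mathbb G_m$. Since each ``bad'' Harris ring independently gives a chance $\ge\vartheta$ of such a blue arc, $\mathbb P(\mathbb G_m)\ge\sigma$ forces the number of bad rings to be at most $\log\sigma/\log(1-\vartheta)$. Your conformal-modulus route is more conceptual and would give exponential localization of $R_k^\ast$ rather than just a counting bound, at the cost of threading through the box construction, the uniformity in $N$, and Cardy convergence on the individual rings.
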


\begin{proof}
Let $\varphi: \mathbb D \rightarrow \Omega$ be the uniformization map with $\varphi(0) = z_0$ for some $z_0 \in \mathcal D_\Delta$ and let $\zeta_A, \zeta_B, \zeta_C, \zeta_D$ denote the pre--images of $A, B, C, D$, respectively.  
Let $\eta > 0$ denote any number smaller than e.g., half the distance separating any of these pre--images.  Let $N_{\eta}(\zeta_A)$ denote the
$\eta$--neighborhood of $A$ and let 
$\{r_{d}, r_{b}\}$ denote the pair
$N_{\eta}(\zeta_A)\cap \partial \mathbb D$
with $r_{d}$ in between $\zeta_{A}$ and $\zeta_{D}$ and $r_{b}$ between
$\zeta_{A}$ and $\zeta_{B}$.  Similarly, about the point $\zeta_{C}$ we have
$N_{\eta}(\zeta_A)\cap \partial \mathbb D := \{s_{d}, s_{b}\}$.

We denote by $\mathscr G_{d}$ the continuum crossing probability from
$[\zeta_{A},r_{d}]$ to $[s_{d}, \zeta_{C}]$ 
(with $(\mathbb D; \zeta_{A}, \zeta_{C},$ $s_{d}, r_{d})$ regarded as a conformal rectangle) and similarly $\mathscr G_{b}$
for the continuum crossing probability from
$[r_{b}, \zeta_{A}]$ to $[\zeta_{C}, s_{b}]$.  It is manifestly clear that these are non--zero since all relevant cross ratios are finite.  

Now consider $\Omega$ as a conformal polygon with (corresponding) marked points (or prime ends)  $A, R_{b}, B, S_{b}, \dots R_{d}$ (corresponding to $\zeta_A, r_b, \zeta_B, s_b, \dots, r_d$) and
$\Omega_{m}$ with marked boundary points
$A_{m}, \dots R_{d_{m}}$ the relevant discrete approximation.  
It is emphasized, perhaps unnecessarily, that this is just 
$\Omega_{m}$ with $A,B,C,D$ and with four additional boundary points marked and added in.  It follows by conformal invariance and convergence to Cardy's Formula that the probability of a crossing in $\Omega_{m}$ from
$[R_{b_{m}}, A_{m}]$ to $[C_{m}, S_{b_{m}}]$ converges to 
$\mathscr G_{b}$ and similarly for the crossings from 
$[ A_{m}, R_{d_{m}}]$ to $[S_{d_{m}}, C_{m}]$.

We shall need an additional construct, denoted by $\Phi_{m}$ which is best described as the intersection of three events:
(i)  a yellow connection between $\partial \mathcal D_{\Delta}$ and 
$[R_{d_{m}}, S_{d_{m}}]$, (ii) a similar connection between 
$\partial \mathcal D_{\Delta}$ and 
$[R_{b_{m}}, S_{b_{m}}]$ and (iii) a yellow circuit in 
$\Omega_{m}\setminus \mathcal D_{\Delta}$.  It is observed that the intersection of these three events certainly implies a crossing between $[R_{d_{m}}, S_{d_{m}}]$ and $[R_{b_{m}}, S_{b_{m}}]$.

It is noted that item (iii) has probability uniformly bounded from below since 
$\mathcal D_{\Delta}$ is contained in a circle twice its size.  As for the other two, we must return to the continuum problem in $\mathbb D$.  
Let $\mathcal E \subseteq \mathbb D$ denote the preimage of 
$\mathcal D_{\Delta}$ under uniformization with corresponding evenly spaced boundary points 
$p_{1}, p_{2}, p_{3}$ and  $p_{4}$.  
Let us pick an adjacent pair of points -- conveniently assumed to be 
$p_{1}$ and $p_{2}$ -- which may be envisioned as approximately 
facing the $[s_{d}, r_{d}]$ segment of $\partial \mathbb D$.
We now connect $r_{d}$ and $p_{1}$ with a smooth curve in $\mathbb D$
and similarly for $s_{d}$ and $p_{2}$.  It is seen that these two lines along with the $[s_{d}, r_{d}]$ portion of 
$\partial \mathbb D$ and the $[p_{1}, p_{2}]$ portion of 
$\mathcal E$ are the boundaries of a conformal rectangle.  We let 
$\mathscr L_{d}$ denote the continuum crossing probability from
$[p_{1}, p_{2}]$ to $[s_{d}, r_{d}]$ within the specified rectangle.  

We perform a similar construct involving 
$p_{3}$, $p_{4}$, $s_{b}$ and $r_{b}$ and denote by 
$\mathscr L_{b}$ the corresponding continuum crossing probability.  Thus, as was the case above, in the corresponding subsets of $\Omega_{m}$, it is the case that as 
$m\to\infty$, the probability of observing yellow crossings of the type corresponding to the aforementioned crossings in (i) and (ii) tend to 
$\mathscr L_{d}$ and $\mathscr L_{b}$, respectively.  
(While of no essential consequence, we might mention that at the discrete level, the relevant portions of $\partial \mathcal D_{\Delta}$ may be \textit{defined}
to coincide with the inner approximations of the subdomains we have just considered.)

Let us call $\mathbb G_{m}$ the intersection of all these events:
$\Phi_{m}$ and the pair of 
$[R_{b_{m}}, R_{d_{m}}]\leadsto [S_{b_{m}}, S_{d_{m}}]$ crossings (corresponding to $\mathscr G_b$ and $\mathscr G_b$).  Then we have, uniformly in $m$ for $m$ sufficiently large, 
$$
\mathbb P(\mathbb G_{m}) \geq \sigma
$$
for some $\sigma = \sigma(\eta) > 0$.  

We next make the following claim:  

\bigskip

\noindent {\bf Claim.} Consider the event that there is a blue path beginning and ending on $\partial \Omega_{m}$ that seperates $A_{m}$ from 
$\mathcal D_{\Delta}$.  Then, if the event $\mathbb G_{m}$ also occurs, it must be the case that (modulo orientation) the path begins on $[R_{b_{m}}, A_{m}]$ and ends on $[A_{m}, R_{d_{m}}]$.

\bigskip

\noindent \emph{Proof of Claim.}  To avoid clutter, we will temporarily dispense with all $m$--subscripts.  Note that since $A_m, R_{b_m}, B_m, C_m, D_m, R_{d_m}$ divide the boundary into six segments, there are $\frac{1}{2}\cdot 6 \cdot 7 = 21$ cases to consider and, therefore, twenty to eliminate.  Let us enumerate the cases:

$\circ$  A crossing from $[C,A]$ to $[R_{b}, C]$ or from $[A,C]$ to 
$[C, R_{d}]$  (5 cases): each possibility is prevented by (at least) one of the yellow crossings between the segments in $[R_{d}, R_{b}]$ 
and $[S_{b}, S_{d}]$.

$\circ$  Corner cases, e.g., at the $D$ corner, $[C, D]$ to $[D, R_{d}]$ (4 cases): recalling that the blue path must separate $\mathcal D_{\Delta}$ and $A$, these are obstructed by the yellow circuit about $\mathcal D_{\Delta}$ which is connected to the \textit{opposite} $R\cdot S$ boundary, which in this example corresponds to $[R_{b}, S_{b}]$.  (We note that these circuits are constructed precisely to prevent the possibility of connections ``sneaking'' through $\mathcal D_\Delta$.)

$\circ$  An $[R_{d}, R_{b}]$ segment connected to a $[B, C]$ or a $[C, D]$ segment (4 cases): these are prevented by the yellow crossing from $[S_{d}, R_{d}]$ to $[R_{b}, S_{b}]$.  

$\circ$  Diagonal (same to same) paths, e.g., $[C, D]$ to 
$[C, D]$ (6 cases): recalling the separation clause, these are obstructed by the connection of the circuit around $\mathcal D_{\Delta}$ and its connection to whichever -- or both -- 
$R\cdot S$ segment which is \textit{not} where the blue path begins and ends.  In this example this corresponds to $[R_b, S_b]$.

$\circ$  Finally, $[D, C]$ to $[C, B]$: this is the same as the previous case.  

\noindent The claim is proved.
\qed

\bigskip

With the above in hand, the rest of the proof of this lemma is immediate.  Let 
$v_{A}^{\prime}$ denote the number of Harris segments in the system stationed at $A_{m}$ which do \textit{not} begin on $[R_{b_{m}}, A]$ and end on 
$[A, R_{d_{m}}]$.  (I.e., the twenty cases treated above.)  Letting $\mathbb B_{m}$ denote the event of a blue circuit of the type described in the claim, we have
$$
1 - \sigma \geq 1 - \mathbb P(\mathbb G_{m}) \geq \mathbb P(\mathbb B_{m})
\geq 1 - (1 - \vartheta)^{v_{A}^{\prime}}
$$
which necessarily implies $v_{A}^{\prime}$ is bounded above (independently of $m$) by the ratio $\log \sigma/\log(1 - \vartheta)$.  Clearly, $v_A^\prime \geq v_A$ as in the statement of the lemma so the result has been established.

\end{proof}

\noindent \textbf{Acknowledgements.}

The authors acknowledge the hospitality of the Fields Institute during the Thematic Program on Dynamics and Transport in Disordered Systems where this work was initiated.

I.~B.~was partially supported by an NSERC Discovery Grant 5810--2009--298433.

L.~Chayes was supported by the NSF under the auspices of the grants DMS--08--05486 and  PHY--12--05295.

H.~K.~L.~was supported by the NSF under the grant DMS--10--04735 and by the Olga Taussky--John Todd Instructorship at Caltech.

The authors also acknowledge some detailed useful comments from a source who will remain anonymous.

\end{document}